\theoremstyle{plain}
\newtheorem{theorem}{Theorem}[section]
\newtheorem{corollary}[theorem]{Corollary}
\newtheorem{lemma}[theorem]{Lemma}
\newtheorem*{lemmaOhne}{Lemma}
\theoremstyle{definition}
\newtheorem{discussion}[theorem]{Discussion}
\newtheorem{definition}[theorem]{Definition}
\newtheorem{remark}[theorem]{Remark}
\newtheorem{example}[theorem]{Example}
\newtheorem{SA}[theorem]{Standing Assumption}
\numberwithin{equation}{section}
\DeclareMathOperator{\st}{\s (\l)} 
\newcommand{\rd}{\mathrm{d}}
\newcommand{\vd}{\,\mathrm{d}}
\newcommand{\on}{\operatorname}
\newcommand{\m}{\mathfrak{m}}
\newcommand{\s}{\mathfrak{s}}
\newcommand{\bR}{\mathbb{R}}
\newcommand{\q}{\mathfrak{q}}
\newcommand{\si}{{\on{si}}}
\newcommand{\ac}{{\on{ac}}}
\newcommand{\1}{\mathbbm{1}} 
\newcommand{\cF}{\mathcal{F}}
\renewcommand{\P}{\mathbb{P}} 
\newcommand{\Y}{Y}
\newcommand{\U}{U}
\renewcommand{\S}{S}
\renewcommand{\l}{\alpha}
\renewcommand{\r}{\beta}
\newcommand{\bfFW}{\mathbf F^W}
\newcommand{\bfFY}{\mathbf F^Y}
\newcommand{\bfFZ}{\mathbf F^Z}
\newcommand{\fmu}{\m^U_\ac} 
\newcommand{\fqpp}{\q''_\ac} 
\newcommand\llambda{{\mathchoice
		{\lambda\mkern-4.5mu{\raisebox{.4ex}{\scriptsize$\backslash$}}}
		{\lambda\mkern-4.83mu{\raisebox{.4ex}{\scriptsize$\backslash$}}}
		{\lambda\mkern-4.5mu{\raisebox{.2ex}{\footnotesize$\scriptscriptstyle\backslash$}}}
		{\lambda\mkern-5.0mu{\raisebox{.2ex}{\tiny$\scriptscriptstyle\backslash$}}}}}
\newcommand{\hemail}[1]{\href{mailto:{#1}}{#1}}
\begin{document}
	
	\title[Weak Notions of No-Arbitrage]{On Weak Notions of No-Arbitrage in a 1D General Diffusion Market with Interest Rates}
	
	\author[A. Anagnostakis]{Alexis Anagnostakis}
\address{
    A. Anagnostakis -- Université de Lorraine, CNRS, IECL 
    F-57000, Metz, France; Université Grenoble-Alpes, CNRS, LJK 
    F-38000, Grenoble, France
}
	\email{\hemail{alexis.anagnostakis@univ-grenoble-alpes.fr}}
	
	\author[D. Criens]{David Criens}
	\address{D. Criens -- University of Freiburg, Ernst-Zermelo-Str. 1, 79104 Freiburg, Germany.}
	\email{\hemail{david.criens@stochastik.uni-freiburg.de}}
	
	\author[M. Urusov]{Mikhail Urusov}
	\address{M. Urusov -- University of Duisburg-Essen, Thea-Leymann-Str. 9, 45127 Essen, Germany.}
	\email{\hemail{mikhail.urusov@uni-due.de}}
	
	\keywords{no increasing profit; no strong arbitrage; no unbounded profit with bounded risk; general diffusion; scale function; speed measure; interest rate}
	
	\makeatletter
	\@namedef{subjclassname@2020}{\textup{2020} Mathematics Subject Classification}
	\makeatother
	
	\subjclass[2020]{60J60; 91B70; 91G15; 91G30.}
	
	\thanks{}
	
	\date{\today}
	
	\allowdisplaybreaks
	\frenchspacing
    
	\begin{abstract}
		We establish deterministic necessary and sufficient conditions for the no-arbitrage notions ``no increasing profit'' (NIP),
		``no strong arbitrage'' (NSA) and
		``no unbounded profit with bounded risk'' (NUPBR) in one-dimensional general diffusion markets. These are markets with one risky asset, which is modeled as a regular continuous strong Markov process that is also a semimartingale, and a riskless asset that grows exponentially at a constant rate $r\in \mathbb{R}$.
		All deterministic criteria are provided in terms of the scale function and the speed measure of the risky asset process.
        Our study reveals a variety of surprising effects. For instance, irrespective of the interest rate, NIP is not excluded by reflecting boundaries or an irregular scale function. In the case of non-zero interest rates, it is even possible that NUPBR holds in the presence of reflecting boundaries
and/or skew thresholds. In the zero interest rate regime, we also identify NSA as the minimal no arbitrage notion that excludes reflecting boundaries and that forces the scale function to be continuously differentiable with strictly positive absolutely continuous derivative, meaning that it is of the same form as for a stochastic differential equation.
\end{abstract}
	
	\maketitle

	\section{Introduction}
	\label{sec_intro}

The primary goal of this paper is to investigate several weak notions of no arbitrage for a one-dimensional general diffusion market with a finite time horizon \(T \in (0, \infty)\) and constant interest rate \(r \in \bR\). More specifically, we consider a financial market with a risky asset \(\Y = (\Y_t)_{t \in [0, T]}\) modeled as a one-dimensional semimartingale that is also a regular strong Markov process with continuous paths, a so-called {\em general diffusion semimartingale}, and a bank account \((e^{rt})_{t \in [0, T]}\). 
 The no-arbitrage conditions of interest are NIP (``no increasing proft''), NSA (``no strong arbitrage'') and NUPBR (``no unbounded profit with bounded risk''), which are related as follows:
\[
\text{NUPBR} \implies \text{NSA} \implies \textup{NIP}. 
\]
These conditions are often called ``weak notions'' of no-arbitrage in contrast to their ``strong'' counterparts NA (``no arbitrage'') and NFLVR (``no free lunch with vanishing risk''), cf. the paper of Fontana \cite{Fontana15} for a profound discussion of weak and strong notions of no-arbitrage. The law of a general diffusion is uniquely determined by its {\em scale function} and its {\em speed measure}, collectively known as the {\em diffusion characteristics}, and in this paper we characterize NIP, NSA and NUPBR through the {\em deterministic} diffusion characteristics of \(\Y\).

In the recent paper \cite{CU23}, the second and third named authors investigated NUPBR and the strong notions NA and NFLVR for a one-dimensional general diffusion market {\em without} interest rate. The main results in \cite{CU23} are deterministic characterizations of NUPBR, NA and NFLVR in terms of the diffusion characteristics of \(\Y\). Most notably, it was shown that all three conditions rule out the presence of reflecting boundaries and require the scale function to be continuously differentiable (\(C^1\) for short) with strictly positive
absolutely continuous
derivative, which among other things excludes skewness effects. 
It is interesting to ask whether these structural properties are already enforced by the weaker notions NIP and NSA. 
This question is closely aligned with recent works Buckner, Dowd and Hulley~\cite{BDH23}, Melnikov and Wan~\cite{MelWan21} and Rossello~\cite{rossello12}, which explore market models with reflecting boundaries or skewness effects. 
Specifically, Rossello~\cite{rossello12} examined an exponential skew Brownian motion model and demonstrated that NA fails. Although not stated explicitly, the arguments in \cite{rossello12} show that even NIP fails for this model.
Melnikov and Wan~\cite{MelWan21} analyzed a version of the Bachelier model with a reflecting boundary and observed that NFLVR fails for this market. Lastly, Buckner, Dowd and Hulley \cite{BDH23} studied a reflected geometric Brownian motion model for which they showed that NIP fails.
This clarified false statements in the literature, claiming that
    such a
    model would be in some sense ``free of arbitrage''
    (for details, see \cite{BDH23}).
In their introduction, they also recommend examining weak no arbitrage conditions for other diffusion models from the literature.

In addition to examining weaker notions of no-arbitrage than previously explored in the literature, our setting incorporates an extra dimension by including a non-zero interest rate. In the recent work \cite{anagnostakis2024pricing}, the first named author studied a sticky Brownian motion model and observed that introducing an interest rate has a major influence on no arbitrage properties. Namely, it was shown that the sticky Brownian motion model satisfies NFLVR
if and only if the interest rate is zero. It is natural to ask whether similar effects also occur for the weaker notions NIP, NSA and NUPBR and what consequences interest rates have for the regularity of the scale function and the boundary behavior. We remark at this point that the presence of interest rate is often overlooked in the study of arbitrage.

Having outlined our main motivations, we now describe the most surprising contributions of this paper. 
Let us start with the zero interest rate regime, i.e., the case \(r = 0\).
Given the observations from \cite{BDH23,rossello12} and the results from \cite{CU23} that explain that skewness and/or reflecting boundaries violate the NUBPR, NA and NFLVR conditions (in the zero interest rate regime), it is tempting to conjecture that these features always lead to even strong forms of arbitrage. 
We make the surprising observation that this is not the case in our framework. 

More specifically,
we show that NIP neither excludes reflecting boundaries nor forces the scale function to be absolutely continuous (which is even less than \(C^1\)).
In contrast, still in the case $r=0$,
NSA is equivalent to the absence of reflecting boundaries and a \(C^1\) scale function with strictly positive absolutely continuous
derivative, i.e., it is the minimal notion with these properties. From the viewpoint of these structural conditions, NSA appears to be much closer to NUPBR than to NIP. In fact, irrespective of the interest rate, we will see that NSA and NUPBR are equivalent in case \(\Y\) has no absorbing boundaries (i.e., allowing inaccessible and/or reflecting boundaries), while NIP remains strictly weaker.

In the broader context of mathematical finance,
it is also interesting to highlight that NIP
for $r=0$
might hold while \(\Y\) lacks the
\emph{representation property} (RP),
the latter being
often connected to market completeness (cf., e.g., \cite[Section~VII.2.d]{shir}).
On the contrary,
beginning with NSA,
in the zero interest rate regime,
all stronger notions of no arbitrage
entail the RP of the general diffusion semimartingale \(\Y\).
This follows from the characterization of NSA mentioned above together with the main result of \cite{CU24},
which shows that the RP of a general diffusion semimartingale is equivalent to the absolute continuity of its scale function.

Another surprising finding in this paper is that the inclusion of a non-trivial interest rate completely alters the picture. Indeed, if \(r \not = 0\), even the strongest NUPBR condition can hold despite the presence of reflecting boundaries and scale functions that are less regular than \(C^1\). Among other things, we observe that the effects of stickiness and skewness can cancel each other leading to NUPBR when non-zero interest rates are considered.

	Let us also comment on our proofs. The key ingredients are the stochastic characterizations of NIP, NSA and NUPBR via so-called {\em structure conditions}. 
    These distinguish the weak notions NIP, NSA and NUBPR from their strong counterparts NA and NFLVR, which cannot be captured solely through the semimartingale characteristics of the discounted price process
    (the latter fact is due to \cite[Example~4.7]{KaraKard07}).
    By leveraging arguments based on occupation time formulas, local times and using the Lebesgue decompositions of the speed measure and the second derivative measure of the inverse scale function, we are able to analyze the structure conditions, leading to our main results, the deterministic characterizations of NIP, NSA and NUPBR.
    This strategy is fundamentally different from the approach used in the paper \cite{CU23} for NUPBR,
    which relied on establishing the structural properties of the scale function through deep results on separating times of general diffusions that were established in~\cite{CU22}. 

	Lastly, we briefly comment on related literature. There are numerous studies of no-arbitrage conditions in the literature. 
	The closest to us appear to be the papers of Criens~\cite{Criens2018,criens20}, Criens and Urusov~\cite{CU22, CU23}, Delbaen and Shirakawa~\cite{DelbaenShirakawa2002} and Mijatovi\'c and Urusov \cite{MU12b} that also aim for deterministic characterizations. In this context we again mention the papers \cite{anagnostakis2024pricing,BDH23,rossello12}.
	A broad study of NIP and NSA in general diffusion markets as well as NUPBR in the presence of non-zero interest rates appears to be missing. 

	The paper is organized as follows. 
	In Section~\ref{sec_preliminaries} we set up the mathematical framework, define the notions NIP, NSA and NUPBR and recall their structure conditions. 
	Our main results for NIP (resp. NSA, resp. NUPBR) are presented in Section~\ref{sec_NIP} (resp. Section~\ref{sec_NSA}, resp. Section~\ref{sec_NUPBR}). Finally, in Section~\ref{sec_DE} we discuss a variety of examples and compare our results for NIP, NSA and NUPBR.

	\section{The Financial Market and Weak Notions of No-Arbitrage}
	\label{sec_preliminaries}
	
	In this section, we give a precise introduction to our mathematical setting and we recall the notions of ``increasing profits'', ``strong arbitrage'' and ``unbounded profits with bounded risk'' that are under consideration in this paper.

	\subsection{The Financial Market}
	In this paper, we consider a financial market driven by a regular
	continuous strong Markov process, which is alternatively called a
    \emph{general diffusion}.
	A quite complete overview on the theory of general diffusions can be found in the seminal monograph \cite{ItoMcKean96} by It\^o and McKean.
	Shorter introductions are given in \cite{freedman,Kal21,RevYor,RogWilV2}.
	
	As the concepts of scale and speed are crucial for our results, we recall some facts about them without going too much into detail.
	We take a state space \(J \subset \bR\) that is supposed to be a bounded or unbounded, closed, open or half-open interval. A scale function is a strictly increasing continuous function $\s\colon J\to\mathbb R$
	and a speed measure is a measure $\m$ on $(J,\mathcal B(J))$ that satisfies
	$\m([a,b])\in(0,\infty)$ for all $a<b$ in $J^\circ$, where \(J^\circ\) denotes the interior of~\(J\). We define
	$$
	\l\triangleq\inf J\in[-\infty,\infty)
	\quad\text{and}\quad
	\r\triangleq\sup J\in(-\infty,\infty].
	$$
	The values $\s(\l)$ and $\s(\r)$ are defined by continuity (in particular, they can be infinite).
	We also remark that the speed measure can be infinite near $\l$ and $\r$, and that the values $\m(\{\l\})$ and \(\m (\{\r\})\) can be anything in $[0,\infty]$ provided $\l\in J$ and $\r\in J$, respectively. 
	
	Before we proceed, let us mention that speed measures (and semimartingale local times) are not scaled consistently in the literature. In this paper, we use the scaling from the books of Kallenberg~\cite{Kal21} and Rogers and Williams~\cite{RogWilV2}, which is half the speed measure from the monographs of 
    Freedman~\cite{freedman},
    It\^o and McKean~\cite{ItoMcKean96}
    and Revuz and Yor \cite{RevYor}. To give an example, our speed measure of Brownian motion (on natural scale) is simply the Lebesgue measure, while it is twice the Lebesgue measure in \cite{freedman,ItoMcKean96,RevYor}. Similarly, we use the semimartingale local time scaling of Freedman~\cite{freedman}, Kallenberg~\cite{Kal21}, Revuz and Yor~\cite{RevYor} and Rogers and Williams~\cite{RogWilV2}, which is twice the local time of It\^o and McKean~\cite{ItoMcKean96} and Karatzas and Shreve~\cite{KarShr}. Furthermore, it is worth pointing out that our semimartingale local time is always assumed to be the right-continuous one.
	
	We are in a position to explain our financial framework.
	Throughout this paper, we consider a finite time horizon \(T \in (0, \infty)\).
	Let \(\mathbb{B} = (\Omega, \cF, \mathbf{F} = (\cF_t)_{t \in [0, T]}, \P)\) be a filtered probability space with a right-continuous filtration that supports a regular continuous strong Markov process (in the sense of \cite[Section~V.45]{RogWilV2} except that the underlying setting needs not to be the canonical one) \(\Y = (\Y_t)_{t \in [0, T]}\) with state space~\(J\), scale function~\(\s\), speed measure~\(\m\) and deterministic starting value
    $x_0$. As for the starting value, we always assume that
    $$
    \text{either }x_0\in J^\circ\text{ or }x_0\in J\setminus J^\circ\text{ is a reflecting boundary for }Y.
    $$
    We exclude the case of an absorbing starting value $x_0\in J\setminus J^\circ$, since then the process \(\Y\) is simply constant.
	In the above context, the strong Markov property refers to the filtration~\(\mathbf{F}\).
	
		\begin{SA} \label{SA: semi + boundary}
		\(\Y\) is a semimartingale on the stochastic basis \(\mathbb{B}\).
	\end{SA}
    
	The Standing Assumption~\ref{SA: semi + boundary} is not automatically true in our general diffusion setting. For example, if \(B\) is a Brownian motion starting in zero, then $\sqrt{|B|}$ is \emph{not} a semimartingale (\cite[Exercise~VI.1.14]{RevYor}). The semimartingale property of \(\Y\) is solely a property of the scale function \(\s\) or (more precisely, but equivalently) its inverse.
	The following lemma collects some properties that are proved in \cite[Section~5]{CinJacProSha80}. 

    Recall that for an open interval $I\subset\bR$ and a real-valued function \(\mathfrak{f}\colon I\to\bR\) that is the difference of two convex functions on $I$, one can define the second derivative measure \(\mathfrak{f}'' (\rd x)\) by 
	\[
	\mathfrak{f}'' ((x, y]) \triangleq \mathfrak{f}'_+ (y) - \mathfrak{f}'_+(x), \quad x < y\text{ in }I, 
	\]
	where \(\mathfrak{f}'_+\) denotes the right-hand derivative of~\(\mathfrak{f}\). 
	
	\begin{lemma} \label{lem: isf DC}
		Assume that \(\Y\) is a semimartingale. Then, the inverse scale function \(\q \triangleq \s^{-1}\) is the difference of two convex functions on the interior \(\s (J^\circ)\). Furthermore, in case \(J = [\l, \infty)\) and \(\l\) is absorbing for \(\Y\), it holds
		\[
		\int_{\s (\l) +} (x - \s (\l)) \, |\q''| (\rd x) < \infty. 
		\]
		In case \(J = [\l, \infty)\) and \(\l\) is reflecting for \(\Y\), the second derivative measure \(\q'' (\rd x)\) can be identified with a finite signed measure on every interval \([\l, z]\) with \(z \in (\l,\infty)\). 
	\end{lemma}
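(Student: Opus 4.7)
My plan is to exploit the observation that $\X \triangleq \s(\Y)$ is a continuous local martingale on the interior $\s(J^\circ)$ (by the defining property of the scale function together with the strong Markov property), so that $\Y = \q(\X)$ and the semimartingale property of $\Y$ translates into analytic regularity of $\q$ via It\^o--Tanaka and its converse. The lemma then follows from a careful application of the results in \cite[Section~5]{CinJacProSha80}.

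For part~(1), fix a compact subinterval $[a,b] \subset \s(J^\circ)$ containing $\s(x_0)$ and let $\tau$ be the first exit time of $\X$ from $[a,b]$. Then $\X^\tau$ is a bounded continuous martingale and, by the Dambis--Dubins--Schwarz theorem, a time-changed stopped Brownian motion; its semimartingale local time $x \mapsto L^x_\tau(\X)$ is therefore strictly positive on all of $(a,b)$ on an event of positive probability. Since $\Y$ is a semimartingale by assumption, so is $\Y^\tau = \q(\X^\tau)$, and the classical converse of It\^o--Tanaka established in \cite[Section~5]{CinJacProSha80} forces $\q$ restricted to $[a,b]$ to be a difference of two convex functions. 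Exhausting $\s(J^\circ)$ by such intervals yields the claim.

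For parts~(2) and~(3), I would invoke the It\^o--Tanaka decomposition
\[
\Y_t = \q(\X_0) + \int_0^t \q'_+(\X_s) \, d\X_s + \tfrac{1}{2}\int L^x_t(\X) \, \q''(dx),
\]
valid since $\q$ is now known to be DC, and observe that the semimartingale property of $\Y$ forces the drift $\int L^x_T(\X) \, |\q''|(dx)$ to be almost surely finite. In the absorbing case of part~(2), $\X$ is a non-negative local martingale absorbed at $\s(\l)$; applying Tanaka's formula to $|\X_T - x|$ together with the supermartingale property of $\X$ gives the linear bound $\E[L^x_T(\X)] \leq 2(x - \s(\l))$ for $x$ close to $\s(\l)$, and a standard localization argument yields $\int_{\s(\l)+}(x - \s(\l))|\q''|(dx) < \infty$. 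In the reflecting case of part~(3), in contrast, the boundary local time $L^{\s(\l)}_T(\X)$ is strictly positive with positive probability; by right-continuity of $x \mapsto L^x_T(\X)$, one obtains $\inf_{x \in [\s(\l),z]} L^x_T(\X) > 0$ on a positive-probability event for $z$ sufficiently close to $\s(\l)$, and finiteness of the drift then forces $|\q''|([\s(\l), z]) < \infty$, as asserted.

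The main obstacle is the converse direction of It\^o--Tanaka invoked in part~(1): while the ``sufficiency'' of $\q$ being DC is immediate from the formula, the ``necessity'' relies on the delicate occupation-measure analysis of \c{C}inlar--Jacod--Protter--Sharpe, which I would treat as a black box. The boundary analyses in parts~(2) and~(3) then reduce to the qualitative dichotomy between linear decay of $L^x_T(\X)$ near an absorbing boundary and its uniform positivity near a reflecting one.
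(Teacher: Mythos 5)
Your part~(1) is exactly the paper's route (the paper's entire proof is a citation of \cite[Section~5]{CinJacProSha80}, i.e., the converse It\^o--Tanaka is a black box there too), and your part~(3) is sound in spirit. But part~(2) contains a genuine gap: the inequality you derive points in the wrong direction. The lemma is a \emph{necessity} statement (semimartingale $\Rightarrow$ $\int_{\s(\l)+}(x-\s(\l))\,|\q''|(\rd x)<\infty$), whereas your Tanaka estimate $\E[L^x_T(\X)]\le 2(x-\s(\l))$ is an \emph{upper} bound on the local time and therefore only shows that the integrability condition makes the drift $\tfrac12\int L^x_t(\X)\,\q''(\rd x)$ of finite (indeed integrable) variation --- i.e., the \emph{sufficiency} direction, which is not what is claimed. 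To get necessity you must convert the $\P$-a.s.\ finiteness of $\int L^x_{T\wedge T_{\s(\l)}(\X)}(\X)\,|\q''|(\rd x)$ into the deterministic statement, and for that you need a \emph{lower} bound on the local time profile near the absorbing boundary holding with positive probability. Even the easy lower bound in expectation, $\tfrac12\E[L^x_T(\X)]=\E[(x-\X_T)^+]\ge (x-\s(\l))\,\P(T_{\s(\l)}(\X)\le T)$, does not close the argument, because almost-sure finiteness of the random integral does not imply finiteness of its expectation. The standard repair --- and precisely the mechanism in \cite[Section~5]{CinJacProSha80} --- is the first Ray--Knight theorem (after a Dambis--Dubins--Schwarz reduction, $x\mapsto L^x_{T_{\s(\l)}}$ is a squared Bessel process of dimension $2$ started from $0$ at the boundary, so $L^x$ is of order $x-\s(\l)$ \emph{in law}) combined with Jeulin's lemma, which upgrades the a.s.\ finiteness of $\int L^x\,|\q''|(\rd x)$ to $\int (x-\s(\l))\,|\q''|(\rd x)<\infty$. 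Your phrase ``a standard localization argument'' cannot substitute for this step.

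A secondary, fixable issue affects both boundary cases: you invoke the It\^o--Tanaka decomposition of $\Y=\q(\X)$ globally, but part~(1) gives that $\q$ is DC only on the \emph{open} interior $\s(J^\circ)$, so the formula is not yet justified once $\X$ reaches $\s(\l)$ --- using it there presupposes the local finiteness of $|\q''|$ near the boundary that parts~(2) and~(3) are meant to establish. One must first localize: stop strictly before the boundary (absorbing case) or replace $\q$ by modifications that are affine on $[\s(\l),\s(\l)+1/n]$ (reflecting case) and pass to the limit by monotone convergence, also noting that the total variation of the drift really equals $\tfrac12\int L^x_t(\X)\,|\q''|(\rd x)$ (Hahn decomposition plus the fact that $\rd L^x_t$ is carried by $\{\X_t=x\}$). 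Granting this, your reflecting-boundary argument --- positivity of $L^{\s(\l)}_T(\X)$ with positive probability plus right-continuity of the local time field giving $\inf_{x\in[\s(\l),z]}L^x_T(\X)>0$ on a positive-probability event, hence $|\q''|([\s(\l),z])<\infty$ --- is correct and is a nice elementary explanation of the dichotomy that the paper leaves implicit in its citation.
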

    
	\begin{proof}
		These statements follow directly from the discussion in \cite[Section~5]{CinJacProSha80}. 
	\end{proof}
	Of course, suitable adjustments of the last two statements from Lemma~\ref{lem: isf DC} hold also for more general state spaces~\(J\). 

    \smallskip
	In the following, our financial market is supposed to contain one risky asset that is given by the general diffusion semimartingale \(\Y\). 
	Furthermore, we fix a deterministic constant interest rate \(r \in \bR\). The discounting will be done by the usual bank account process \(e^{r t}\) for \(t \in [0, T]\), leading to the discounted price process \(\S = (\S_t)_{t \in [0, T]}\) that is given by 
	\[
	\S_t \triangleq e^{- rt} \, \Y_t, \quad t \in [0, T].
	\]
	We proceed by recalling the notions of no-arbitrage that are under consideration in this paper.

\subsection{Increasing Profits, Strong Arbitrage and Unbounded Profits with Bounded Risk} \label{sec: NIP, NSA, NUPBR intro}
In this section, we recall three rather weak notions of no-arbitrage
\begin{itemize}
\item
\emph{no increasing profit} (NIP),

\item
\emph{no strong arbitrage} (NSA) and

\item
\emph{no unbounded profit with bounded risk} (NUPBR)
\end{itemize}
together with characterization results for them, i.e.,
their corresponding \emph{fundamental theorems of asset pricing} (FTAPs).

The NIP condition is similar to the ``no unbounded increasing profit'' condition that was introduced by Karatzas and Kardaras in \cite{KaraKard07}.
Our presentation follows Fontana \cite{Fontana15}.
In the sequel we use the notation $L(\S)$ for the set of all predictable processes that are integrable w.r.t. the continuous semimartingale \(\S\).
The elements $H\in L(\S)$ are alternatively called \emph{strategies}.

\begin{definition}
A strategy \(H \in L(\S)\) is called an {\em increasing profit} if 
\begin{equation}
\begin{aligned}
\P\text{-a.s.},\;
\text{for all}\; 0\le s \le t \le T:\qquad 
\int_0^s H_u \vd\S_u \leq \int_0^t H_u \vd\S_u
\end{aligned}
\end{equation}
and 
\[
\P\Big( \int_0^T H_u \vd\S_u > 0 \Big) > 0.
\]
If there exist no such strategies, we say that the \emph{NIP} condition holds. 
\end{definition}

We recall the FTAP for NIP, which is essentially due to \cite{KaraKard07}.
The formulation in \cite[Theorem~3.1]{Fontana15} is already adapted to our setting.
Denote the canonical decomposition of the semimartingale~\(\S\) by 
\begin{align} \label{eq: notation DMD}
\S = \S_0 + M + A, 
\end{align}
where \(M\) is a continuous local martingale and \(A\) is a continuous process of finite variation, both starting at zero.

\begin{theorem}[FTAP for NIP]\label{theo: FTAP_NIP}
The NIP condition holds if and only if $\P$-a.s. $\rd A\ll\rd\langle M,M\rangle$.
\end{theorem}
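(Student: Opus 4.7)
The plan is straightforward given that the authors explicitly invoke \cite[Theorem~3.1]{Fontana15}, which is stated for continuous semimartingales and thus applies verbatim once one observes that \(\S\) is continuous. A direct proof would rest on two standard ingredients: uniqueness of the canonical decomposition of a continuous semimartingale, and the Lebesgue decomposition of the finite-variation part \(A\) with respect to \(\langle M, M\rangle\).

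For the sufficiency direction, I would assume \(\rd A\ll\rd\langle M, M\rangle\) \(\P\)-a.s.\ and let \(H\in L(\S)\) be a candidate increasing profit. Then \(X\triangleq\int_0^\cdot H_u\vd\S_u=\int_0^\cdot H_u\vd M_u+\int_0^\cdot H_u\vd A_u\) is continuous, non-decreasing, and null at zero, hence of finite variation. Uniqueness of the canonical decomposition forces the continuous local martingale \(\int_0^\cdot H_u\vd M_u\) to vanish, so \(\int_0^\cdot H_u^2\vd\langle M, M\rangle_u\equiv 0\) and thus \(H=0\) \(\P\otimes\rd\langle M, M\rangle\)-a.e. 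The hypothesised absolute continuity then propagates to \(H=0\) \(\P\otimes\rd|A|\)-a.e., giving \(X\equiv 0\) and contradicting \(\P(X_T>0)>0\).

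For necessity, I would argue the contrapositive: if \(\rd A\not\ll\rd\langle M, M\rangle\) on an event of positive probability, the Lebesgue decomposition yields \(\rd A=\rd A^{\ac}+\rd A^\perp\) with predictable finite-variation parts, where \(A^\perp\) lives on a predictable set \(D\subset\Omega\times[0, T]\) that is \(\langle M, M\rangle\)-null. Splitting \(D=D^+\sqcup D^-\) predictably to accommodate the Jordan decomposition \(A^\perp=A^{\perp,+}-A^{\perp,-}\), at least one of \(A^{\perp,\pm}_T\) is positive with positive probability. The strategies \(H=\mathbbm{1}_{D^+}\) and \(H=-\mathbbm{1}_{D^-}\) then satisfy \(\int_0^\cdot H_u^2\vd\langle M, M\rangle_u\equiv 0\) (since \(H\) is supported on \(D\)) and \(\int_0^\cdot H_u\vd A^{\ac}_u\equiv 0\) (since \(A^{\ac}\ll\langle M, M\rangle\)), so \(\int_0^\cdot H_u\vd\S_u=\int_0^\cdot H_u\vd A^\perp_u\) coincides with \(A^{\perp,+}\) or \(A^{\perp,-}\) respectively, which are non-decreasing and strictly positive at time \(T\) with positive probability.

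The main technical hurdle is the measure-theoretic bookkeeping in the necessity direction: producing genuinely \emph{predictable} versions of the supports of the Lebesgue and Jordan decompositions, and confirming that the resulting indicator strategies lie in \(L(\S)\). These points are by now folklore for continuous semimartingales and are already carried out in \cite{Fontana15}, which is why the cleanest presentation is to cite that reference rather than reproduce the argument.
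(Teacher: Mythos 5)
Your proposal is correct and matches the paper's treatment: the paper offers no proof of Theorem~\ref{theo: FTAP_NIP} at all, simply recalling it as \cite[Theorem~3.1]{Fontana15} (essentially due to \cite{KaraKard07}), which is exactly the citation you recommend. Your sketch accurately reproduces the standard argument behind that reference --- sufficiency via uniqueness of the canonical decomposition, necessity via predictable Lebesgue--Jordan supports and indicator strategies --- with the only delicate point (predictable versions of the supports) correctly identified and appropriately deferred to the cited literature.
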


The condition ``$\P$-a.s. $\rd A\ll\rd\langle M,M\rangle$'' is known in the literature as \emph{weak structure condition} (WSC),
see Fontana \cite[Section~3]{Fontana15} or Hulley and Schweizer \cite[Section~3]{HS_2010}.
More specifically, WSC means that there exists a predictable process $\theta=(\theta_t)_{t\in[0,T]}$ such that $\P$-a.s.
\begin{equation}\label{eq:210225a2}
\vd A_t = \theta_t \, \vd \langle M, M \rangle_t.
\end{equation}
Any such process \(\theta\) is called an \emph{instantaneous market price of risk} (IMPR), see \cite{Fontana15} or \cite{HS_2010}.
Thus, using this terminology, the FTAP for NIP can be reformulated as follows:
\begin{equation}\label{eq:310125b1}
\text{NIP}
\;\;\Longleftrightarrow\;\;
\text{WSC}
\;\;\Longleftrightarrow\;\;
\text{there exists an IMPR}.
\end{equation}
In relation with this formulation, it is worth noting that the WSC appeared also in the work of Strasser \cite[Theorem 2.2]{S_2005}, where it was shown that the WSC is equivalent to the condition that every non-negative, predictable wealth process of finite variation is constant.

To introduce more no-arbitrage notions, we define by
\[
\mathcal{A}_a \triangleq \Big\{ H \in L (\S) \colon \int_0^\cdot H_s \vd\S_s \geq - a \Big\}, \quad a \in \bR_+,
\]
the set of so-called ``\(a\)-admissible strategies''.
\begin{definition}
A strategy \(H \in \mathcal{A}_0\) is said to generate a
\emph{strong arbitrage opportunity} if 
\[
\P \Big( \int_0^T H_s \vd S_s > 0 \Big) > 0.
\]
If there exist no such strategies \(H \in \mathcal{A}_0\), we say that the \emph{NSA} condition holds. 
\end{definition}

The NSA condition as defined above was introduced by Strasser \cite{S_2005} under the name ``condition \(\textup{NA}_+\)''.
In fact, it is equivalent to the earlier ``no immediate arbitrage'' notion from Delbaen and Schachermayer \cite{Delbaen1995}
(the equivalence is shown in \cite[Lemma 4.1]{Fontana15}).
The following FTAP for NSA is a part of \cite[Theorem 3.5]{S_2005}
(or see \cite[Theorem 4.1]{Fontana15}).

\begin{theorem}[FTAP for NSA]\label{theo: FTAP_NSA}
The NSA condition holds if and only if there exists an IMPR \(\theta = (\theta_t)_{t \in [0, T]}\) such that
\begin{equation}\label{eq:200125a2}
\inf \Big\{ t \in [0, T) \colon \int_t^{t + h} \theta^2_s \vd \langle M, M\rangle_s = \infty, \, \forall \, h \in (0, T - t] \Big\} = \infty\quad\P\text{-a.s.}
\end{equation}
with the usual convention $\inf\emptyset\triangleq\infty$.
\end{theorem}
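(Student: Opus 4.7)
The strategy is to build on the FTAP for NIP (Theorem~\ref{theo: FTAP_NIP}) and analyze precisely which NIP-compliant models admit a \emph{strong} arbitrage. Since NSA implies NIP, the WSC already supplies an IMPR $\theta$ satisfying $\vd A_t = \theta_t \vd\langle M, M\rangle_t$, and for any $H\in L(\S)$ the wealth decomposes as
\[
\int_0^t H_s \vd\S_s = \int_0^t H_s \vd M_s + \int_0^t H_s \theta_s \vd\langle M, M\rangle_s.
\]
The remaining question is exactly when this process can stay non-negative on $[0,T]$ while being strictly positive at $T$ with positive probability.

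For the $(\Leftarrow)$ direction, suppose $\theta$ satisfies \eqref{eq:200125a2} and take $H\in \mathcal A_0$ with $\int_0^T H \vd \S \geq 0$. Let $\sigma \triangleq \inf\{t\colon \int_0^t H_s^2 \vd\langle M,M\rangle_s > 0\}$. For $t\le \sigma$ the martingale part vanishes and Cauchy--Schwarz gives
\[
\Big|\int_0^t H_s \theta_s \vd\langle M,M\rangle_s\Big|^2 \le \Big(\int_0^t H_s^2 \vd\langle M,M\rangle_s\Big) \Big(\int_0^t \theta_s^2 \vd\langle M,M\rangle_s\Big),
\]
so the drift also vanishes whenever $\int_0^t \theta_s^2 \vd\langle M,M\rangle_s<\infty$. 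Condition \eqref{eq:200125a2} secures exactly this finiteness on a right-neighbourhood of every time in $[0,T)$, and an inductive stopping-time argument restarting after each hitting of the ``trading'' set $\{\int_0^\cdot H^2 \vd\langle M,M\rangle > 0\}$ extends the conclusion to all of $[0,T]$, forcing $V^H\equiv 0$.

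For the $(\Rightarrow)$ direction, I argue by contrapositive. If \eqref{eq:200125a2} fails, then the infimum $\tau$ appearing there satisfies $\P(\tau \le T)>0$, and on $\{\tau\le T\}$ one has $\int_\tau^{\tau+h}\theta_s^2 \vd\langle M,M\rangle_s = \infty$ for every $h>0$. Setting $\tau_n \triangleq \inf\{t>\tau\colon \int_\tau^t \theta_s^2 \vd\langle M,M\rangle_s \ge n^2\}$ gives $\tau_n \downarrow \tau$, and the candidate $H^{(n)} \triangleq (\theta/n) \1_{(\tau,\tau_n]}$ (truncated to $\{|\theta|\le k\}$ so that $H^{(n)}\in L(\S)$) produces wealth whose drift equals $n$ deterministically and whose continuous local-martingale part has quadratic variation $1$. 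Doob's $L^2$-inequality controls the infimum of the martingale part uniformly in $n$, so stopping $H^{(n)}$ at the first time wealth threatens to become negative and letting $n\to\infty$ yields a $0$-admissible strategy with strictly positive wealth on a subset of $\{\tau\le T\}$, i.e.\ a strong arbitrage.

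The hard part is the $(\Rightarrow)$ direction: one must (i) localize the classical Delbaen--Schachermayer immediate-arbitrage construction around an arbitrary (in general strictly positive) predictable time $\tau$ rather than at $0$; (ii) patch the strategies $H^{(n)}$ into a single $0$-admissible strategy on all of $[0,T]$, by stopping each the moment wealth risks crossing zero and running them on disjoint intervals; and (iii) address measurability since $\theta$ is only determined $\vd\langle M,M\rangle \otimes \vd\P$-a.e. These technicalities are precisely those carried out by Strasser~\cite{S_2005} and Fontana~\cite{Fontana15}.
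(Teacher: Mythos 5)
Note first that the paper does not prove Theorem~\ref{theo: FTAP_NSA} at all: it imports the result from Strasser \cite[Theorem 3.5]{S_2005} (see also \cite[Theorem 4.1]{Fontana15}), so your sketch must be measured against those proofs, whose general route (extract an IMPR from the FTAP for NIP, then analyze immediate explosion of the mean--variance tradeoff) you do follow. However, both directions of your sketch have genuine gaps at exactly the points where the real work lies. In the \((\Leftarrow)\) direction, your Cauchy--Schwarz display only treats \(t\le\sigma\), where \(\int_0^t H_s^2\vd\langle M,M\rangle_s=0\); there \(H=0\) holds \(\rd\langle M,M\rangle\)-a.e.\ anyway, so the drift vanishes trivially and \emph{independently} of~\eqref{eq:200125a2}. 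The proposed ``inductive restart after each hitting of the trading set'' then stalls at \(\sigma\): immediately after \(\sigma\) the quadratic variation of the martingale part is strictly positive by definition of \(\sigma\), so the same argument yields nothing, and nothing in your sketch excludes that the non-negative wealth becomes strictly positive there. The actual mechanism (in \cite{S_2005}, going back to \cite{Delbaen1995}) requires an extra idea: \eqref{eq:200125a2} furnishes, to the right of any stopping time, an interval on which the mean--variance tradeoff is finite (bounded after stopping), so a Girsanov change of measure makes \(S\) a local martingale on that interval; a non-negative wealth process started at \(0\) is then a non-negative local martingale, hence identically zero, and this is iterated. Alternatively, a time-change/law-of-the-iterated-logarithm argument shows that non-negativity of a wealth process with active martingale part forces \(\int_\sigma^{\sigma+h}\theta_s^2\vd\langle M,M\rangle_s=\infty\) for all small \(h>0\). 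Cauchy--Schwarz alone does not substitute for either argument.

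In the \((\Rightarrow)\) direction, your stopping times are degenerate as defined: if \eqref{eq:200125a2} fails at \(\tau\), then by the very definition of \(\tau\) one has \(\int_\tau^{\tau+h}\theta_s^2\vd\langle M,M\rangle_s=\infty\) for \emph{every} \(h>0\), so \(\tau_n\triangleq\inf\{t>\tau\colon\int_\tau^t\theta_s^2\vd\langle M,M\rangle_s\ge n^2\}\) equals \(\tau\) for all \(n\), the interval \((\tau,\tau_n]\) is empty, and the claims ``drift equals \(n\)'' and ``quadratic variation \(1\)'' are vacuous. The truncation you relegate to a parenthesis must be built into the construction from the start: with \(\theta\1_{\{|\theta|\le k\}}\) the truncated tradeoff is finite and continuous, and \(k=k(n)\) must be chosen so that it actually reaches \(n^2\); only then does one obtain genuinely decreasing times \(\tau_n\searrow\tau\) (note also that the truncated process is no longer an IMPR, though the drift computation survives since \(\theta\cdot\theta\1_{\{|\theta|\le k\}}=(\theta\1_{\{|\theta|\le k\}})^2\)). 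Beyond this, NSA concerns a \emph{single} \(0\)-admissible strategy, so the pieces must be patched onto disjoint shrinking intervals with a Borel--Cantelli-type argument guaranteeing non-negativity of the combined wealth, and one must verify that the explosion set is progressive so that \(\tau\) is a stopping time (also, your event should be \(\{\tau<T\}\), not \(\{\tau\le T\}\)). These steps, which you name but defer wholesale to \cite{S_2005} and \cite{Fontana15}, constitute the substance of the theorem; as written, the proposal is a correct roadmap of the cited proof rather than a proof.
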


\begin{remark}\label{rem:090325a1}
(i)
In connection with this formulation of the FTAP for NSA it is worth noting that, obviously, the integral in~\eqref{eq:200125a2} does not depend on the chosen version of the IMPR.
That is, if there exists an IMPR $\theta$ satisfying~\eqref{eq:200125a2}, then every IMPR satisfies~\eqref{eq:200125a2}.

(ii)
The following point concerning the meaning of~\eqref{eq:200125a2} is worth mentioning.
By the monotone and dominated convergence theorems applied pathwise,
the so-called
\emph{mean-variance tradeoff process}
$K=(K_t)_{t\in[0,T]}$,
$$
K_t=\int_0^t \theta^2_s \vd \langle M, M\rangle_s,
$$
is an increasing process vanishing in the origin
which is continuous and finite on $[0,\zeta)$,
left-continuous at time~$\zeta$
and infinite on $(\zeta,T]$, where
$$
\zeta\triangleq T\wedge\inf\Big\{
t\in[0,T] \colon 
\int_0^t \theta^2_s \vd \langle M, M\rangle_s=\infty\Big\}.
$$
On the event $\{\zeta<T\}$, we have
\begin{itemize}
\item
either $K_\zeta<\infty$
(in which case $K$ jumps to infinity immediately after time~$\zeta$)

\item
or $K_\zeta=\infty$
(which means that $K$ reaches infinity in a continuous way at time~$\zeta$).
\end{itemize}
It is, therefore, tempting to identify~\eqref{eq:200125a2}
with the condition that the mean-variance tradeoff process $K$ does not jump to infinity,
i.e.,
\begin{equation}\label{eq:090325a1}
\P(\zeta<T,K_\zeta<\infty)=0.
\end{equation}
But in fact~\eqref{eq:200125a2} is strictly stronger than~\eqref{eq:090325a1}.
The implication \eqref{eq:200125a2}$\implies$\eqref{eq:090325a1} is straightforward.
The converse is not true because it can happen that
the infimum in~\eqref{eq:200125a2} equals $\zeta<T$ also on paths, for which the process $K$ reaches infinity in a continuous way.
More formally, we will see that, in Example~\ref{ex: x^3} below,
\eqref{eq:090325a1} holds, while~\eqref{eq:200125a2} is violated.
\end{remark}

\begin{definition}
A sequence of trading strategies $(H^n)_{n = 1}^\infty\subset\mathcal A_1$ is said to generate an \emph{unbounded profit with bounded risk} if the sequence of random variables
\[
\Big( \int_0^T H^n_s \vd S_s \Big)_{n = 1}^\infty
\]
is unbounded in \(\P\)-probability.
If there exist no such sequences $(H^n)_{n = 1}^\infty\subset\mathcal A_1$, we say that the \emph{NUPBR} condition holds. 
\end{definition}

The NUPBR condition was introduced under this name and profoundly studied by Karatzas and Kardaras \cite{KaraKard07}.
Earlier, Kabanov \cite{Kabanov_97} had considered the same notion under the name ``BK condition''.
Although not explicitly referenced, NUPBR played a crucial role in the seminal work of Delbaen and Schachermayer \cite{Delbaen1994} on the FTAP for ``no free lunch with vanishing risk''.
Finally, it is worth mentioning that NUPBR is also equivalent to the notions ``no arbitrage of the first kind'', ``no asymptotic arbitrage of the first kind'' and ``no cheap thrill'' that were studied in the literature.
We refer to \cite[Section~5]{Fontana15} and to \cite[Appendix A.1]{KabanovKardarasSong2016} for a detailed account of these notions, precise references and the mentioned equivalences.
The following FTAP for NUPBR is a part of
Choulli and Stricker \cite[Theorem 2.9]{Choulli_Stricker_96}
(or see \cite[Theorem 5.1]{Fontana15} or \cite[Theorem~7]{HS_2010}).

\begin{theorem}[FTAP for NUPBR]\label{theo: FTAP_NUPBR}
The NUPBR condition holds if and only if there exists an IMPR $\theta=(\theta_t)_{t\in[0,T]}$ such that
\begin{equation}\label{eq:310125a2}
\int_0^T \theta^2_s\, \vd \langle M, M\rangle_s < \infty\quad\P\text{-a.s.}
\end{equation}
\end{theorem}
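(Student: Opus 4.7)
The plan is to derive this FTAP by going through an intermediate characterization of NUPBR in terms of the existence of a \emph{local martingale deflator} for $\S$, namely a strictly positive continuous local martingale $Z$ with $Z_0 = 1$ such that $Z\S$ is a local martingale. For continuous semimartingales this intermediate equivalence is classical and underlies the proofs of the references cited in the statement. Granted this equivalence, the task reduces to translating ``existence of a local martingale deflator'' into a statement on the canonical decomposition $\S = \S_0 + M + A$.

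For the direction ($\Leftarrow$), I would start from an IMPR $\theta$ satisfying $\int_0^T \theta_s^2 \vd \langle M, M\rangle_s < \infty$ $\P$-a.s. The stochastic integral $N \triangleq -\int_0^\cdot \theta_s \vd M_s$ is then a well-defined continuous local martingale, so $Z \triangleq \mathcal{E}(N)$ is a strictly positive continuous local martingale with $Z_0 = 1$. A direct integration-by-parts computation, using $\vd A_t = \theta_t \vd \langle M, M\rangle_t$ and $\vd \langle N, M\rangle_t = -\theta_t \vd \langle M, M\rangle_t$, yields
$$
\vd(Z\S)_t = Z_t \bigl(1 - \theta_t \S_t\bigr) \vd M_t,
$$
so $Z\S$ is a local martingale and $Z$ is a local martingale deflator; NUPBR then follows from the classical equivalence above via the standard supermartingale argument applied to the wealth processes of $a$-admissible strategies.

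For the direction ($\Rightarrow$), assume NUPBR holds and obtain from the classical equivalence a local martingale deflator $Z$. Since every strictly positive continuous local martingale has the form $Z = \mathcal{E}(N)$ for some continuous local martingale $N$ with $N_0 = 0$, the Kunita--Watanabe decomposition of $N$ relative to $M$ gives $N = -\int \theta \vd M + L$ with $L$ strongly orthogonal to $M$. Integration by parts then shows that the finite-variation part of $Z\S$ equals $\int_0^\cdot Z_t \bigl(\vd A_t - \theta_t \vd \langle M, M\rangle_t\bigr)$. Requiring $Z\S$ to be a local martingale forces this part to vanish, yielding $\vd A = \theta \vd \langle M, M\rangle$ $\P$-a.s. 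Hence $\theta$ is an IMPR, and the integrability $\int_0^T \theta_s^2 \vd \langle M, M\rangle_s < \infty$ $\P$-a.s.\ is automatic from the fact that $\int \theta \vd M$ is a well-defined continuous local martingale.

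The main obstacle is the appeal to the classical characterization of NUPBR via local martingale deflators: this is the deep ingredient on which both directions rest. Once granted, everything else reduces to routine stochastic calculus in the continuous semimartingale setting. A minor subtlety, analogous to Remark~\ref{rem:090325a1}(i) for NSA, is that an IMPR is only determined up to $\vd \langle M, M\rangle$-null sets, but this does not affect the well-definedness of the integrability condition~\eqref{eq:310125a2}.
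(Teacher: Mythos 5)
Your proposal is correct, but note that the paper does not prove this statement at all: Theorem~\ref{theo: FTAP_NUPBR} is recalled from the literature, cited as part of Choulli and Stricker \cite[Theorem 2.9]{Choulli_Stricker_96} (see also \cite[Theorem 5.1]{Fontana15} and \cite[Theorem~7]{HS_2010}), so there is no in-paper argument to compare against. Your sketch is essentially a reconstruction of the standard proof underlying those references: both directions of your calculus are sound --- in particular the computation $\rd(Z\S)_t = Z_t(1-\theta_t \S_t)\vd M_t$ for $Z=\mathcal E(-\int\theta\vd M)$ under $\vd A=\theta\vd\langle M,M\rangle$ is right, the representation $Z=\mathcal E(N)$ for a strictly positive continuous local martingale and the Kunita--Watanabe decomposition $N=-\int\theta\vd M+L$ with $L$ strongly orthogonal to $M$ are legitimate, uniqueness of the canonical decomposition of the continuous semimartingale $Z\S$ forces the finite-variation part $\int Z(\rd A-\theta\vd\langle M,M\rangle)$ to vanish, and the integrability \eqref{eq:310125a2} is indeed automatic because a continuous local martingale on $[0,T]$ has $\P$-a.s. finite quadratic variation at $T$. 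The one point to be transparent about is that the ``classical equivalence'' between NUPBR and the existence of a local martingale deflator, which carries all the depth in your direction ($\Rightarrow$), is itself essentially the content of the cited FTAPs; your argument is non-circular only because that deflator characterization is an independently established result (Kardaras; Choulli--Stricker in the continuous case), and your contribution is then the routine translation into the structure condition on $(M,A)$. With that dependency made explicit, the proposal is a faithful and correct account of why the theorem holds.
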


\begin{remark}
If there exists an IMPR $\theta$ satisfying~\eqref{eq:310125a2}, then every IMPR satisfies~\eqref{eq:310125a2} (cf. Remark~\ref{rem:090325a1}~(i)).
\end{remark}

In the literature, ``WSC together with~\eqref{eq:310125a2}'' is known as \emph{structure condition} (SC), see \cite[Section~3]{HS_2010}.
With this terminology at hand, we can recast the FTAP for NUPBR as follows (cf.~\eqref{eq:310125b1}):
$$
\text{NUPBR}
\;\;\Longleftrightarrow\;\;
\text{SC}.
$$
Summarizing all three FTAPs above, we can say that they have a common structure that the no-arbitrage notion is equivalent to the existence of an IMPR plus ``something else''.
For NIP, ``something else'' is empty condition, while
for NSA (resp., NUPBR),
it is~\eqref{eq:200125a2} (resp.,~\eqref{eq:310125a2}).

To conclude our discussion, we observe the implications
$$
\text{NUPBR}
\;\;\implies\;\;
\text{NSA}
\;\;\implies\;\;
\text{NIP},
$$
which are evident both from the definitions and from the FTAPs.
The implications are strict in continuous semimartingale markets
(see \cite{Fontana15} for the examples).
As we will see in Section~\ref{sec: discussion} below, the implications are also strict in our homogeneous diffusion framework (irrespectively of the value of the interest rate \(r \in \bR\)).

	\section{A Deterministic Characterization of NIP}
	\label{sec_NIP}

	In this section we present a deterministic characterization of NIP that only depends on the scale and speed of the general diffusion \(\Y\). Specifically, we will observe intriguing new effects, some of which arise from the presence of a non-zero interest rate.
	By Standing Assumption~\ref{SA: semi + boundary} and Lemma~\ref{lem: isf DC}, on the interior \(\s (J^\circ)\) the inverse scale function \(\q = \s^{-1}\) is the difference of two convex functions. Consequently, the second derivative measure \(\q'' (\rd x)\) is well-defined on \(\s (J^\circ)\). 
	By Lebesgue's decomposition theorem, there exists a unique decomposition
	\[
	\q'' (\rd x)= \fqpp (x) \vd x + \q''_\si (\rd x)\quad
        \text{on }\mathcal B(\s(J^\circ)),
	\]
    where \(\q''_\si\) is a signed measure that is singular w.r.t. the Lebesgue measure \(\llambda\). For $\llambda$-a.a. $x\in\s(J^\circ)$,
the second derivative $\q''(x)$ of $\q$ at the point $x$ exists,
is finite and $\q''(x)=\fqpp(x)$.
Therefore, in what follows, we prefer to write $\q''(x)$ instead of $\fqpp(x)$.

	It is well-known that the process \(\U \triangleq \s (Y)\) is a diffusion on natural scale (i.e., up to increasing affine transformations, the scale function is the identity) and that its speed measure is given by \(\m^{\U} \triangleq \m \circ \s^{-1}\), cf. \cite[Exercise~VII.3.18]{RevYor}. 
	We denote the Lebesgue decomposition (w.r.t. the Lebesgue measure) of the speed measure \(\m^{\U}\) by 
	\[
	\m^{\U}(\rd x) = \fmu (x) \vd x + \m^{\U}_\si(\rd x)\quad
        \text{on } \mathcal{B}(\s(J^\circ)).
	\]
The following is the main result of this section. 
	
	\begin{theorem} \label{theo: main_NIP}
		The NIP condition is satisfied if and only if the following conditions~hold:
		\begin{enumerate}
			\item[\textup{(i)}]
			Every accessible boundary point \(b \in J \setminus J^\circ\) satisfies one of the following two conditions: 
			\begin{enumerate}
				\item[\textup{(i.a)}] \(b\) is absorbing and either \(r = 0\) or \(b = 0\);
				\item[\textup{(i.b)}] \(b\) is reflecting and 
				\[
				r b\, \m^\U (\{ \s (b) \}) = \begin{cases} \frac12 \q'_+ (\s (\l)), & b = \l, \\[1mm] \frac12 \q'_- (\s (\r)), & b = \r. \end{cases}
				\]
			\end{enumerate}
			\item[\textup{(ii)}]
			\(r \q (x) \m^{\U}_\si (\rd x) = \frac{1}{2} \q''_\si (\rd x)\) on \(\mathcal{B}(\s (J^\circ) )\).
			\item[\textup{(iii)}]
			\(r \q (x) \fmu (x) = \frac{1}{2}\q'' (x)\) for $\llambda$-a.a. \(x \in \{z \in \s (J^\circ) \colon \q' (z) = 0\}\).
		\end{enumerate}
        Furthermore, in the case where \textup{(i)}--\textup{(iii)} hold, (a version of) the IMPR is given by the formula
        \begin{equation}\label{eq:200125a1}
        \theta_t=e^{rt}\gamma(\U_t),\quad
        \gamma(x)\triangleq\frac{\frac12\q''(x)-r\q(x) \fmu(x)}{[\q'(x)]^2}\1_{\{\q' \ne 0\}}(x)\quad
        \text{for $\llambda$-a.a. }x\in\s(J).
        \end{equation}
        In particular, for every accessible boundary $b\in J\setminus J^\circ$,
    the value $\gamma(\s(b))$ can be chosen arbitrarily.
        \end{theorem}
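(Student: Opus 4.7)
My plan is to apply the FTAP for NIP (Theorem~\ref{theo: FTAP_NIP}) to reduce the claim to verifying the weak structure condition $\vd A\ll\vd\langle M,M\rangle$, and then extract the deterministic conditions (i)--(iii) by explicitly computing the canonical decomposition $\S=\S_0+M+A$ in terms of $\s$ and $\m$. Since $\U\triangleq\s(\Y)$ is a continuous local martingale (as a diffusion on natural scale) and the inverse scale function~$\q$ is the difference of two convex functions on~$\s(J^\circ)$ by Lemma~\ref{lem: isf DC}, I would first apply the Meyer--Tanaka formula to obtain
\[
\Y_t=\Y_0+\int_0^t \q'_-(\U_s)\vd\U_s+\tfrac{1}{2}\int L^x_t(\U)\,\q''(\vd x),
\]
and then integration by parts on $\S_t=e^{-rt}\Y_t$ to identify
\[
M_t=\int_0^t e^{-rs}\q'_-(\U_s)\vd\U_s,\qquad A_t=\tfrac{1}{2}\int_0^t e^{-rs}\vd\!\left(\int L^x_s(\U)\,\q''(\vd x)\right)-r\int_0^t e^{-rs}\q(\U_s)\vd s.
\]

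The second step combines the two occupation time formulas for~$\U$, $\int_0^t g(\U_s)\vd s=\int g(x)L^x_t(\U)\m^\U(\vd x)$ and $\int_0^t g(\U_s)\vd\langle\U,\U\rangle_s=\int g(x)L^x_t(\U)\vd x$, with a Fubini interchange to rewrite $A$ as a single integral against a spatial signed measure on $\s(J)$:
\[
A_t=\int_{\s(J)}\left[\int_0^t e^{-rs}\vd L^x_s(\U)\right]\!\bigl[\tfrac{1}{2}\q''(\vd x)-r\q(x)\m^\U(\vd x)\bigr].
\]
Lebesgue decomposing the bracketed measure and comparing with $\vd\langle M,M\rangle_t=e^{-2rt}[\q'_-(\U_t)]^2\vd\langle\U,\U\rangle_t$ splits $\vd A$ into an absolutely continuous part w.r.t.\ $\vd\langle\U,\U\rangle$ with density $e^{-rt}[\tfrac{1}{2}\q''(\U_t)-r\q(\U_t)\fmu(\U_t)]$, and a singular part coming from $\tfrac{1}{2}\q''_\si-r\q\,\m^{\U}_\si$ together with possible atoms at accessible boundaries.

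From this decomposition I would read off conditions (ii), (iii), and the IMPR formula simultaneously. The WSC forces the singular drift contribution w.r.t.\ $\vd\langle\U,\U\rangle$ on the interior $\s(J^\circ)$ to vanish; using that a regular diffusion $\U$ visits every set of positive local-time mass, this yields condition~(ii). On the set $\{\q'=0\}$ the martingale part has zero quadratic variation, so the absolutely continuous drift density must vanish there as well, giving~(iii). On $\{\q'\neq 0\}$ one divides the absolutely continuous density of $\vd A$ by that of $\vd\langle M,M\rangle$ to solve for $\theta$, producing the formula~\eqref{eq:200125a1}; the freedom to assign $\gamma$ arbitrarily at accessible boundaries reflects the fact that $\vd\langle M,M\rangle$ has no mass on these null sets of~$\U$.

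For the boundary conditions~(i), I would analyze each accessible boundary $b$ separately. If $b$ is absorbing, then after hitting $b$ one has $\vd\S_t=-re^{-rt}b\vd t$ while $\vd\langle M,M\rangle_t=0$, so WSC forces $rb=0$, yielding~(i.a). If $b$ is reflecting, the atoms of $\q''$ and $\m^\U$ at $\s(b)$ couple with the boundary local time and contribute a singular drift on $\{\U_t=\s(b)\}$, whereas $\vd\langle M,M\rangle_t$ vanishes there (by the second occupation formula applied to $g=\1_{\{\s(b)\}}$); requiring the atom contribution to cancel yields~(i.b). I expect the main obstacle to be the careful bookkeeping of these atoms: specifically, identifying the atom of $\q''$ at a reflecting boundary as $\q'_+(\s(\l))$ at the lower end and $-\q'_-(\s(\r))$ at the upper end (which requires unpacking Lemma~\ref{lem: isf DC}), and rigorously justifying the Fubini interchange together with the occupation formula at these boundary atoms of~$\m^{\U}$.
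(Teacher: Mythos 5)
Your overall architecture coincides with the paper's proof: reduce to the weak structure condition via Theorem~\ref{theo: FTAP_NIP}, compute the canonical decomposition of \(\S\) by the generalized It\^o formula applied to \(\Y = \q(\U)\), rewrite \(A\) through the two occupation time formulas, Lebesgue-decompose the resulting spatial measure, and read off (i)--(iii) together with the IMPR. There is, however, one genuine flaw at the foundation: your premise that \(\U = \s(\Y)\) is a continuous local martingale ``as a diffusion on natural scale'' is false precisely in the reflecting case, which is the case governing condition~(i.b) --- reflecting Brownian motion is the standard counterexample, and in general only \(\U - \tfrac12 L^{\s(\l)}(\U)\) is a local martingale (\cite[V.47.23~(ii)]{RogWilV2}, reproved in Appendix~\ref{appendix_A: proof_lem: comp}). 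Consequently your identification \(M_t = \int_0^t e^{-rs}\q'_-(\U_s)\vd\U_s\) is not justified as stated: to see that this integral is a local martingale one must already know that the finite-variation part of \(\U\) is carried by \(\{\U = \s(\l)\}\), which is exactly the content of the cited fact. Your boundary-atom bookkeeping (assigning the atom \(\q'_+(\s(\l))\) to \(\q''\) at the lower endpoint, i.e.\ implicitly extending \(\q\) constantly past the boundary) does reproduce the correct drift \(\tfrac12\q'_+(\s(\l))\vd L^{\s(\l)}_t(\U)\) from the paper's \eqref{eq: final DMD - re}, so the gap is repairable, but it must be closed by the local-martingale fact above rather than by the false premise.

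A second, smaller gap concerns the absorbing case, where you apply the Meyer--Tanaka formula and the Fubini interchange globally. Lemma~\ref{lem: isf DC} only guarantees \(\int_{\s(\l)+}(x - \s(\l))\,|\q''|(\rd x) < \infty\) there: the measure \(|\q''|\) may be infinite near \(\s(\l)\) and \(\q'_\pm(u)\) may fail to converge as \(u \searrow \s(\l)\), so neither your formula for \(\Y\) nor the interchange is licensed at and beyond \(T_{\s(\l)}(\U)\). The paper circumvents this by writing \(\rd \Y_t = \1_{\{t < T_\l(\Y)\}}\vd \Y_t\) and applying It\^o only strictly before the hitting time, which is the role of the indicator in \eqref{eq: final DMD}. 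Finally, to upgrade the almost-sure identities between random measures to the deterministic statements (i.b), (ii) and (iii), it is not enough that ``a regular diffusion visits every set of positive local-time mass'': on the finite horizon \([0,T]\) you need positive probability of reaching the relevant point before time \(T\) and strict positivity of the accumulated local time, for which the paper invokes \cite[Theorem~1.1]{BrugRuf16} and \cite[Lemma~C.18]{CU22}; your sketch leaves this finite-horizon step unaddressed.
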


In relation with (iii) above it is worth noting that, as $\q$ is the difference of two convex functions, the derivative $\q'$ is well-defined up to a Lebesgue null set (more precisely, even up to a countable set).

    Our main tool in the proof of the following result is the FTAP for NIP that we recalled in Theorem~\ref{theo: FTAP_NIP}. To use it we need to establish the semimartingale decomposition of the price process~\(S\).
    For what follows, recall the notation $S=S_0+M+A$ from~\eqref{eq: notation DMD}.
    For a process \(Z = (Z_t)_{t \in [0, T]}\), we define the hitting time of a point \(x\) by
    \[
    T_x (Z) \triangleq \inf \{t \in [0, T] \colon Z_t = x \}.
    \]
    Furthermore, throughout this paper, for a continuous semimartingale \(Z\), \(\{L^x_t (Z) \colon (x, t) \in \bR \times [0, T]\}\) denotes the right semimartingale local time field of \(Z\). 
    
\begin{lemma} \label{lem: semimartingale decomposition concrete}
Suppose that \(J^\circ = (\l, \infty)\). 
\begin{enumerate}
\item[\textup{(a)}]
In case \(\l\) is inaccessible or absorbing for \(Y\), then 
  	\begin{equation} \label{eq: final DMD}
			\begin{split}
				\rd \langle M, M\rangle_t &= e^{- 2rt} \big[\q'_+ (U_t) \big]^2 \1_{\{t < T_{\s (\l)} (U)\}} \vd \langle U, U\rangle_t, \\ 
				\rd A_t &= e^{-rt}   \Big[ - r \q (U_t) \vd t + \frac{1}{2} \int_{\s (J^\circ)} \rd L^x_t (U) \, \q'' (\rd x) \Big], 
			\end{split}
		\end{equation}
        where the indicator \(\1_{\{t < T_{\s (\l)} (U)\}}\) is included
        to emphasize that we do not require \(\q'_+ (\s(\alpha))\) to be well-defined (and, indeed, the limit of $\q'_+(u)$, as $u\searrow\s(\alpha)$, can fail to exist).

\item[\textup{(b)}] 
In case \(\l\) is reflecting for \(Y\), then 
  	\begin{equation} \label{eq: final DMD - re}
			\begin{split}
				\rd \langle M, M\rangle_t &= e^{- 2rt} \big[\q'_+ (U_t) \big]^2 \vd \langle U, U\rangle_t, \\ 
				\rd A_t &= e^{-rt} \Big[ - r \q (U_t) \vd t+ \frac{1}{2} \q'_+ (\s (\l)) \vd L^{\s (\l)}_t (U) + \frac{1}{2} \int_{\s (J^\circ)} \rd L^x_t (U) \, \q'' (\rd x) \Big].
			\end{split}
		\end{equation}
\end{enumerate}
\end{lemma}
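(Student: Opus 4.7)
The idea is to apply the It\^o--Tanaka formula to $S_t=e^{-rt}\q(U_t)$, where $\q=\s^{-1}$ and $U=\s(Y)$. Since $e^{-rt}$ is deterministic and $C^1$ in $t$, the product rule yields
\[
\rd S_t=-re^{-rt}\q(U_t)\vd t+e^{-rt}\,\rd \q(U_t).
\]
By Standing Assumption~\ref{SA: semi + boundary} and Lemma~\ref{lem: isf DC}, $\q$ is the difference of two convex functions on $\s(J^\circ)$, so the generalized It\^o--Tanaka formula (with right-hand derivative and right local time) gives, on any stochastic interval on which $U$ stays in $\s(J^\circ)$,
\[
\rd \q(U_t)=\q'_+(U_t)\vd U_t+\tfrac12\int_{\s(J^\circ)}\rd L^x_t(U)\,\q''(\rd x).
\]
It then suffices to decompose $U$ into its own local martingale and finite variation parts and to identify the resulting pieces of $S$.

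For case (a), $U$ is a continuous local martingale on $[0,T]$: in the inaccessible case $U$ stays in $\s(J^\circ)$ where natural scale yields the local martingale property, while in the absorbing case $U^{T_{\s(\l)}(U)}$ is a local martingale and $U$ remains constant thereafter. Consequently $M_t=\int_0^t e^{-rs}\q'_+(U_s)\vd U_s$ is a continuous local martingale with the claimed quadratic variation; the indicator $\1_{\{s<T_{\s(\l)}(U)\}}$ is present only because $\q'_+$ may fail to admit a limit at $\s(\l)$, but $\rd\langle U,U\rangle_s=0$ on $\{U_s=\s(\l)\}$ in the absorbing case, so the integrand there is immaterial. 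Well-definedness of the Tanaka integral near $\s(\l)$ is ensured by the estimate $L^{x}_t(U)\lesssim x-\s(\l)$ combined with the bound $\int_{\s(\l)+}(x-\s(\l))\,|\q''|(\rd x)<\infty$ from Lemma~\ref{lem: isf DC}.

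For case (b), Lemma~\ref{lem: isf DC} ensures that $\q''$ is a finite signed measure on each $[\s(\l),z]$, so $\q'_+(\s(\l))$ exists as a finite right limit. Here $U$ admits a Skorokhod-type decomposition $U_t=U_0+N_t+\tfrac12 L^{\s(\l)}_t(U)$ with $N$ a continuous local martingale: applying the Tanaka formula to $(U_t-\s(\l))^+=U_t-\s(\l)$, valid since $U\geq\s(\l)$, one sees that the bounded variation part, being supported on $\{U=\s(\l)\}$ and non-decreasing, must coincide with $\tfrac12 L^{\s(\l)}_t(U)$ under the paper's right-local-time convention. Substituting this into the Tanaka expansion of $\q(U)$ and using that $\rd L^{\s(\l)}_s(U)$ charges only $\{U_s=\s(\l)\}$ to replace $\q'_+(U_s)$ by $\q'_+(\s(\l))$ yields~\eqref{eq: final DMD - re}.

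The main obstacle will be the careful boundary analysis: in case (a), making rigorous sense of the Tanaka integral when $\q'_+$ may be ill-behaved at $\s(\l)$, which I would handle via localization with the stopping times $T_{\s(\l)+\varepsilon}(U)$ and passing $\varepsilon\searrow 0$; and in case (b), pinning down the exact coefficient $\tfrac12$ in the Skorokhod decomposition under the right-local-time convention of the paper. Both are standard but slightly delicate consequences of the Tanaka formula and the occupation times formula.
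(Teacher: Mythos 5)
Your overall route coincides with the paper's: integrate by parts on $S_t=e^{-rt}\q(U_t)$, expand $\q(U)$ by the generalized It\^o--Tanaka formula (justified by Lemma~\ref{lem: isf DC} and Standing Assumption~\ref{SA: semi + boundary}), and then identify the martingale and finite-variation parts through the decomposition of $U$. Case (a) is essentially the paper's argument ($U$, resp.\ $U$ stopped at $T_{\s(\l)}(U)$, is a local martingale; the paper cites \cite[Corollary~V.46.15]{RogWilV2}), with one caveat: your bound $L^x_t(U)\lesssim x-\s(\l)$ is false as a pathwise statement --- there is no deterministic constant controlling the local time field near the boundary. What is true is the expectation bound $\E[L^x_t(U)]\le 2\,(x-\s(\l))$ for the process stopped at $T_{\s(\l)}(U)$ (Tanaka applied to $(U_t-x)^-$ plus localization), which, combined with $\int_{\s(\l)+}(x-\s(\l))\,|\q''|(\rd x)<\infty$ from Lemma~\ref{lem: isf DC} and Fubini, does give a.s.\ finiteness of $\int L^x_t(U)\,|\q''|(\rd x)$ up to and including the hitting time. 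Alternatively one can argue as the paper does: work strictly on $\{t<T_{\s(\l)}(U)\}$, where the local time field has compact support inside the open interval $\s(J^\circ)$, and extend afterwards; this matches your $T_{\s(\l)+\varepsilon}(U)$ localization.

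The genuine gap is in case (b), where your derivation of the Skorokhod decomposition is circular. First, applying Tanaka to $(U-\s(\l))^+=U-\s(\l)$ presupposes that $U$ is a continuous semimartingale, which is not automatic here: $\s$ may have infinite slope at $\l$ (e.g., reflected squared Bessel processes, where $\q'_+(\s(\l))=0$), so the semimartingale property of $U=\s(Y)$ cannot be extracted from that of $Y$ by It\^o calculus. Second, even granting it, Tanaka only yields $U_t=U_0+\int_0^t\1_{\{U_s>\s(\l)\}}\vd U_s+\frac12 L^{\s(\l)}_t(U)$; to conclude that $N\triangleq\int\1_{\{U>\s(\l)\}}\vd U$ is a local martingale you must know that the finite-variation part of $U$ does not charge $\{U>\s(\l)\}$ --- which is precisely the ``supported on $\{U=\s(\l)\}$'' assertion you take for granted, i.e., the nontrivial content of the claim. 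This needs a separate argument: either a localization in compact subintervals of $\s(J^\circ)$, where the natural-scale diffusion stopped before reaching the boundary is a local martingale, or a time-change argument. The paper simply cites \cite[V.47.23~(ii)]{RogWilV2} for the fact that $U-L^{\s(\l)}(U)/2$ is a local martingale (and reproves it in Appendix~\ref{appendix_A: proof_lem: comp} via a time change), and it invokes \cite[Lemma~C.28]{CU22} for the It\^o expansion of $\q(U)$ valid up to the reflecting boundary --- a point you also elide, since you justify It\^o--Tanaka only ``on stochastic intervals on which $U$ stays in $\s(J^\circ)$'', which fails in the reflecting case (local-time continuity in the space variable holds on the interior but not necessarily at the boundary, cf.\ \cite[Lemma~C.15]{CU22}). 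With these two ingredients supplied (by citation or by a worked proof), your case (b) closes; as written, it assumes what it needs to prove.
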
 

\begin{proof}
(a)
In the following, we suppose that \(\l\) is inaccessible or absorbing for the diffusion \(Y\). This yields that \(\P\)-a.s.
\[
\rd Y_t = \1_{\{t  <  T_{\l} (Y)\}} \vd Y_t.
\]
Using this identity and integration by parts yields that \(\P\)-a.s.
\begin{align} \label{eq: S dynamics - abs}
\rd S_t = - r S_t \vd t + e^{- rt} \1_{\{t \, < \, T_{\l} (Y)\}} \vd Y_t.
\end{align} 
Next, we identify a formula for \(\1_{\{t  < T_{\l} (Y)\}} \vd Y_t\).
Recall from Lemma~\ref{lem: isf DC} and Standing Assumption~\ref{SA: semi + boundary} that \(\q\) is the difference of two convex functions on \(\s (J^\circ)\). Furthermore, recall from \cite[Theorem~V.49.1]{RogWilV2} that \(\P\)-a.s. \(x \mapsto L^x_t (U)\) is continuous on \(\s (J^\circ)\) for every \(t \in [0, T]\). Hence, by the generalized It\^o formula (\cite[Theorem~IV.45.1]{RogWilV2}) applied to \(Y = \q (\s (Y)) = \q (U)\), we obtain that \(\P\)-a.s.
\begin{align} \label{eq: in lem Y ito}
\rd Y_t = \q'_+ (U_t) \vd U_t + \frac{1}{2} \int_{\s (J^\circ)} \vd L^{x}_t (U) \, \q'' (\rd x), \quad t < T_{\l} (Y) = T_{\s (\l)} (U). 
\end{align} 
Now, combining \eqref{eq: S dynamics - abs} and \eqref{eq: in lem Y ito}, we get that \(\P\)-a.s.
\[
\rd S_t = - r e^{- rt} \q (U_t) \vd t + e^{-rt} \1_{\{t < T_{\s (\l)} (U)\}} \Big( \q'_+ (U_t) \vd U_t + \frac{1}{2} \int_{\s (J^\circ)} \vd L^{x}_t (U) \, \q'' (\rd x) \Big).
\]
As \(U\) is a local martingale (which follows from \cite[Corollary~V.46.15]{RogWilV2} and the assumption that \(\l\) is inaccessible or absorbing), this equation entails \eqref{eq: final DMD} immediately.

\smallskip
(b)
Assume that \(\l\) is reflecting for \(Y\). Then, \cite[Lemma~C.28]{CU22} shows that \(\P\)-a.s.
\begin{align*} 
\rd Y_t = \q'_+ (U_t) \vd U_t + \frac{1}{2} \int_{\s (J^\circ)} \vd L^{x}_t (U)\, \q'' (\rd x), 
\end{align*} 
where its prerequisites are satisfies by Lemma~\ref{lem: isf DC} and Standing Assumption~\ref{SA: semi + boundary}.
Notice that at this point we also use that the semimartingale local time of the diffusion $U$ on natural scale is continuous in the space variable on $\s(J^\circ)$
(but not necessarily at the boundaries),
cf. \cite[Lemma~C.15 and/or Remark~C.16~(a)]{CU22}.
Now, integration by parts yields that \(\P\)-a.s.
\begin{align} \label{eq: lem decomposition; ref S dyn}
\rd S_t = - r e^{- rt} \q (U_t) \vd t + e^{- rt} \Big(  \q'_+ (U_t) \vd U_t + \frac{1}{2} \int_{\s (J^\circ)} \vd L^{x}_t (U) \, \q'' (\rd x) \Big).
\end{align}
Finally, \cite[V.47.23 (ii)]{RogWilV2} shows that the process
		\(
		U - L^{\s(\l)} (U) / 2
		\)
		is a local martingale. In Appendix~\ref{appendix_A: proof_lem: comp} we provide a new proof of this fact, which we believe to be instructive. Using this fact and \eqref{eq: lem decomposition; ref S dyn}, the formulas in \eqref{eq: final DMD - re} follow immediately.
\end{proof} 
	
\begin{proof}[Proof of Theorem~\ref{theo: main_NIP} (Necessity)]
		First, we assume that the NIP condition holds. To ease our presentation, we make the simplifying assumption that \(J^\circ = (\l, \infty)\). The general case is notationally more complex but otherwise similar. In the following we show that the properties (i)--(iii) are satisfied.

\smallskip 
{\em Proof of (i):}
By the FTAP for NIP as restated by Theorem~\ref{theo: main_NIP}, there exists an IMPR \(\theta = (\theta_t)_{t \in [0, T]}\) such that \(\P\)-a.s.
\begin{align} \label{eq: IMPR eq}
\rd A_t = \theta_t \vd \langle M, M\rangle_t.
\end{align} 
Using Lemma~\ref{lem: semimartingale decomposition concrete} and the occupation time formula (\cite[Theorem~IV.45.1]{RogWilV2}),
we obtain that \(\P\)-a.s.
\begin{align} \label{eq: appl IMPREq b - 1}
\1_{\{U_t = \s (\l)\}} \vd \langle M, M \rangle_t = 0.
\end{align}
Again by Lemma~\ref{lem: semimartingale decomposition concrete}, we also get that \(\P\)-a.s.
\begin{align} \label{eq: appl IMPREq b - 2}
e^{rt} \1_{\{U_t =  \s (\l)\}} \vd A_t = - r \l \1_{\{U_t = \s (\l)\}} \vd t + \1_{\{\l \text{ is reflecting for \(Y\)\}}} \, \frac{1}{2} \q'_+ (\s (\l)) \vd L^{\s (\l)}_t (U). 
\end{align} 
Combining \eqref{eq: IMPR eq} with \eqref{eq: appl IMPREq b - 1} and \eqref{eq: appl IMPREq b - 2}, we get that \(\P\)-a.s.
\begin{align} \label{eq: IMPR - pf}
- r \alpha \1_{\{U_t = \s (\l)\}} \vd t + \1_{\{\l \text{ is reflecting for \(Y\)\}}} \, \frac{1}{2} \q'_+ (\s (\l)) \vd L^{\s (\l)}_t (U) = 0.
\end{align}
Now, if \(\l\) is absorbing for \(Y\), we deduce that \(r \l = 0\), because \(\P (T_{\s (\l)} (U) < T / 2) > 0\) by \cite[Theorem~1.1]{BrugRuf16}. In case \(\l\) is reflecting for \(Y\), 
\cite[Theorem~136, p.~160]{freedman} yields that \(\P\)-a.s.
		\begin{align*} 
		- r \alpha \1_{\{U_t = \s (\l)\}} \vd t = - r \l \m^U (\{\s (\l)\}) \vd L^{\s (\l)}_t (U),
		\end{align*}
and consequently, with \eqref{eq: IMPR - pf} we get \(\P\)-a.s.
\begin{align} \label{eq: boundary identity}
\Big( \frac{1}{2} \q'_+ (\s (\l)) - r \alpha \m^U (\{\s (\alpha)\}) \Big) \vd L^{\s (\l)}_t (U) = 0.
\end{align}
Because \(U = \s (\Y)\) is a diffusion on natural scale, (still in the case where \(\l\) is reflecting) we deduce from \cite[Lemma~C.18]{CU22} that \(\P\)-a.s. \(L^{\s (\l)}_t (\U) > 0\) for all \(t > T_{\s (\l)} (\U)\).
Now, as \(\P (T_{\s (l)} (\U) < T / 2) > 0\) by \cite[Theorem~1.1]{BrugRuf16}, \eqref{eq: boundary identity} yields \(\frac{1}{2} \q'_+ (\s (\l)) - r \alpha \m^U (\{\s (\alpha)\}) = 0\). In summary, the properties from~(i) hold.

		\smallskip
		{\em Proof of (ii) + (iii):}
		In case \(\l\) is an absorbing boundary point, recall that we have \(r = 0\) or \(\l = 0\) by part (i). Hence, using \cite[Theorem~V.49.1]{RogWilV2} for absorbing \(\l\) and \cite[Theorem~136, p.~160]{freedman} for reflecting or inaccessible \(\l\), we obtain that \(\P\)-a.s.
		\begin{align} \label{eq: ds integral}
			\int_0^t r \q (U_s) \vd s = \int_{\s (J)} r \q (x) L^x_t (U) \, \m^{\U} (\rd x), \quad t \in [0, T].
		\end{align}
        Take a set \(B \in \mathcal{B} (\s (J^\circ))\).
        Recalling Lemma~\ref{lem: semimartingale decomposition concrete}, \eqref{eq: IMPR eq} and \eqref{eq: ds integral}, and using the semimartingale occupation time formula (\cite[Theorem~IV.45.1]{RogWilV2}), we obtain that \(\P\)-a.s. 
        \begin{align} \label{eq: main idenity (ii), (iii)}
     - \int_{B}  \, r \q (x) \vd L^x_t (U) \, \m^{\U} (\rd x) + \frac{1}{2} \int_{B} \vd L^x_t (U) \, \q'' (dx) = e^{- rt} \, \theta_t \, \int_B \, \big[ \q'_+ (x) \big]^2 \vd L^x_t (U) \vd x.
        \end{align} 
		Let \(G \in \mathcal{B} (\s (J^\circ))\) be a Lebesgue null set such that 
		\[
		\m^{\U}_\si (G \cap \, \cdot \, ) = \m^{\U}_\si, \quad \q''_\si (G \cap \, \cdot \, ) = \q''_\si.
		\]
		Such a set exists by the definition of the singular parts.
        Taking \(B = H \cap G\), with \(H \in \mathcal{B} (\s (J^\circ))\), in \eqref{eq: main idenity (ii), (iii)} yields that \(\P\)-a.s.
        \begin{align} \label{eq: main for (ii)}
         \int_H
         r \q (x) \vd L^x_t (U) \, \m^{\U}_\si (\rd x)
         =
         \int_H
         \frac{1}{2} \vd L^x_t (U) \, \q''_\si (\rd x).
        \end{align}
        Similarly, taking \(B = H \cap G^c \cap \{\q'_+ = 0\}\) in \eqref{eq: main idenity (ii), (iii)} gives \(\P\)-a.s.
         \begin{align} \label{eq: main for iii}
         \int_H
         \1_{\{\q'_+ (x) = 0\}}  \, r \q (x) \vd L^x_t (U) \fmu (x) \vd x
         =
         \int_H
         \frac{1}{2} \1_{\{\q'_+ (x) = 0\}} \vd L^x_t (U)\q'' (x) \vd x.
        \end{align}
		Take \(z_0 \in \s (J^\circ)\) with \(z_0 \not = \s (x_0)\). 
		To ease our notation, we write 
		\[
		\U^{\st} \triangleq (U_{t \wedge T_{\s(\l)} (U)})_{t \in [0, T]},
		\]
		which is the diffusion \(U\) stopped at its accessible boundary point.
		By \cite[Theorem 1.1]{BrugRuf16}, 
		\[
		\P (T_{z_0} (\U^{\st}) < T ) > 0.
		\]
		Further, as \(\U^{\st}\) is a local martingale (\cite[Corollary~V.46.15]{RogWilV2}), recall from \cite[Corollary~29.18]{Kal21} that \(\P\)-a.s. 
		\begin{align} \label{eq: local time support}
			\Big\{ L^x_{t}(\U^{\st}) > 0 \Big\} = \Big\{ \min_{s \in [0, t]}\U^{\st}_s < x < \max_{s \in [0, t]}\U^{\st}_s \Big\}, \quad x \in \bR, \, t \in [0, T].
		\end{align} 
		Consequently, \(\P\)-a.s. 
		\[
		\Big\{T_{z_0} (\U^{\st}) < T, \, L_T^x (\U^{\st}) > 0 \text{ for all } x \in (\s(x_0) \wedge z_0, \s(x_0) \vee z_0) \Big\} = \Big\{ T_{z_0} (\U^{\st}) < T \Big\},
		\]
		which is a set of positive probability. Now, using that \(\P\)-a.s. \(L^\cdot_{\cdot \, \wedge T_{\s(\l)} (U)} (U) = L^\cdot_\cdot (U^{\st})\) and  integrating 
		\[ 
		(t, x) \mapsto \frac{\1_{[0, T_{\s (\l)} (U) \wedge T]}(t) \, \1_{F} (x)}{L^x_{T_{\s (\l)} (U) \wedge T} (U)} \, \1_{\{L^x_{T_{\s (\l)} (U)} \, >\,  0\}}, \quad F \in \mathcal{B}((\s(x_0) \wedge z_0, \s(x_0) \vee z_0)), 
		\] 
		against both sides in \eqref{eq: main for (ii)} and~\eqref{eq: main for iii}
        yields that on \(\mathcal{B}((\s(x_0) \wedge z_0, \s(x_0) \vee z_0))\)
        \begin{align*} 
        r \q (x) \, \m^U_\si (\rd x) &= \frac{1}{2} \, \q''_\si (\rd x), \\
r \q (x) \fmu (x) \1_{\{\q'_+ (x) = 0\}} \vd x &= \frac{1}{2} \,\q'' (x) \1_{\{\q'_+ (x) = 0\}} \vd x
        \end{align*}
(here we consider both sides in \eqref{eq: main for (ii)} and~\eqref{eq: main for iii}
as measures on $[0, T] \times \bR$).
As \(z_0 \in \s (J^\circ) \setminus \{\s (x_0)\}\) was arbitrary,
we conclude that (ii) and~(iii) hold.
	\end{proof}

	\begin{proof}[Proof of Theorem~\ref{theo: main_NIP} (Sufficiency)] Assume that the properties (i)--(iii) hold. Again, to ease our presentation, we only consider the case \(J = [\l, \infty)\) with reflecting \(\l\). The other cases are proved by similar methods. 
		
		The equation \eqref{eq: main idenity (ii), (iii)} gives the impression that the process \(\theta = (\theta_t)_{t \in [0, T]}\) from \eqref{eq:200125a1} is a good candidate for an IMPR. 
		We now verify this impression.
		Using Lemma~\ref{lem: semimartingale decomposition concrete}, \eqref{eq: ds integral} and the assumptions (i) and (ii), we get that \(\P\)-a.s.
		\begin{align*}
			e^{rt} \vd A_t &= - r \q (U_t) \vd t+ \frac{1}{2} \q'_+ (\s(\l)) \vd L^{\s (\l)}_t (U) + \frac{1}{2} \int_{\s (J^\circ)} \vd L^x_t (U) \, \q'' (\rd x) 
            \\&= -\int_{\s (J)} r \q (x)\vd L^x_t (\U) \, \m^U (\rd x) + r \q (\s (\l)) \m^U (\{ \s (\l)\}) \vd L^{\s (\l)}_t (U) 
            \\&\hspace{1.5cm}+ \frac{1}{2} \int_{\s (J^\circ)} \vd L^x_t (U) \,  \q'' (\rd x) 
			\\&= - \int_{\s (J^\circ)} r \q (x) \vd L^x_t (U) \, \m^{\U} (\rd x) + \frac{1}{2} \int_{\s (J^\circ)} \vd L^x_t (U) \, \q'' (\rd x)
			\\&=  - \int_{\s (J^\circ)} r \q (x) \vd L^x_t (U) \, \fmu (x) \vd x + \frac{1}{2} \int_{\s (J^\circ)} \vd L^x_t (U) \q'' (x) \vd x
			\\&\hspace{1.5cm} + \frac{1}{2} \int_{\s (J^\circ)} \vd L^x_t (U) \big(  \q''_\si (\rd x) - 2r \q (x) \m^{\U}_\si (\rd x) \big) 
			\\&=  \int_{\s (J^\circ)} \Big( - r \q (x) \fmu (x) + \frac{1}{2}\q'' (x)\Big) \vd L^x_t (U) \vd x.
		\end{align*}
		Using again Lemma~\ref{lem: semimartingale decomposition concrete}, the semimartingale occupation time formula (\cite[Theorem~IV.45.1]{RogWilV2}) and assumption (iii), we obtain \(\P\)-a.s. 
		\begin{align*}
			\theta_t\, \rd \langle M, M\rangle_t &=  \frac{e^{rt} (- r \q (U_t) \fmu (U_t) + \frac{1}{2}\q'' (U_t))}{[\q'_+ (U_t)]^2} \1_{\{\q'_+ (U_t) \not = 0\}} e^{- 2rt} \big[ \q'_+ (U_t) \big]^2 \vd \langle U, U\rangle_t \phantom \int
			\\&= e^{- rt} \Big( - r \q (U_t) \fmu (U_t) + \frac{1}{2}\q'' (U_t) \Big) \1_{\{\q'_+ (U_t) \not = 0\}} \vd \langle U, U\rangle_t \phantom \int
			\\&= e^{- rt} \int_{\s (J^\circ)} \Big( - r \q (x) \fmu (x) + \frac{1}{2}\q'' (x) \Big) \1_{\{\q'_+ (x) \not = 0\}} \vd L^x_t (U) \vd x
			\\&= e^{- rt} \int_{\s (J^\circ)} \Big( - r \q (x) \fmu (x) + \frac{1}{2}\q'' (x) \Big) \vd L^x_t (U) \vd x.
		\end{align*}
        In summary, we have \(\theta_t \vd \langle M, M\rangle_t = \rd A_t\) and 
		Theorem~\ref{theo: FTAP_NIP} implies that the NIP condition holds. Furthermore, we proved that the process from \eqref{eq:200125a1} defines an IMPR.
	\end{proof}

It is worth discussing what Theorem~\ref{theo: main_NIP} yields in the zero interest rate regime.

    \begin{corollary} \label{coro: NIP r = 0} 
Assume that \(r = 0\).
Then NIP holds if and only if the following two conditions are satisfied:
		\begin{enumerate}
			\item[\textup{(i)}]
            Every reflecting boundary point $b$ of \(\Y\) satisfies \(\q' ( \s (b)) = 0\), where \(\q'\) is the right- or left-hand derivative of \(\q\) depending on whether \(b\) is the left or right boundary point.
			\item[\textup{(ii)}]
            The inverse scale function $\q$ is a $C^1$-function on $\s(J^\circ)$ with a (locally) absolutely continuous derivative.
		\end{enumerate}
    \end{corollary}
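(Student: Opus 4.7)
The plan is to specialize Theorem~\ref{theo: main_NIP} to $r = 0$ and verify that its three conditions collapse to the two conditions of the corollary, with (iii) becoming automatic once (ii) is in force.

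For condition (i) of Theorem~\ref{theo: main_NIP}, setting $r = 0$ makes sub-case (i.a) trivially satisfied by any absorbing boundary (the clause ``$r = 0$ or $b = 0$'' always holds), so it imposes no constraint on the scale function there; sub-case (i.b) simplifies to ``$b$ is reflecting and $\q'_\pm(\s(b)) = 0$'', where the side of the derivative depends on whether $b = \l$ or $b = \r$. Since every accessible boundary is either absorbing or reflecting, this is exactly condition~(i) of the corollary.

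For condition (ii) of Theorem~\ref{theo: main_NIP}, the right-hand side vanishes at $r = 0$, yielding $\q''_\si \equiv 0$ on $\mathcal{B}(\s(J^\circ))$, i.e., the second-derivative measure $\q''(\rd x)$ is locally absolutely continuous w.r.t.\ Lebesgue measure. Since $\q''((x, y]) = \q'_+(y) - \q'_+(x)$, absolute continuity of this measure rules out atoms, hence jumps of $\q'_+$, and makes $\q'_+$ locally absolutely continuous. A right-continuous function without jumps is continuous, so $\q'_+ = \q'_-$ everywhere, and consequently $\q \in C^1(\s(J^\circ))$ with locally absolutely continuous derivative. This is condition~(ii) of the corollary, and the converse implication is immediate from writing $\q''(\rd x) = \q''(x)\,\rd x$.

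It remains to show that condition (iii) of Theorem~\ref{theo: main_NIP} is automatic under (ii) with $r = 0$: it reads $\q''(x) = 0$ for Lebesgue-a.a.\ $x$ in $F \triangleq \{z \in \s(J^\circ) \colon \q'(z) = 0\}$. Under (ii), $\q'$ is continuous, so $F$ is closed, and $\q'$ is differentiable Lebesgue-a.e.\ with derivative $\q''$. By the Lebesgue density theorem, a.e.\ $x_0 \in F$ has density one in $F$; at such an $x_0$ where $\q''(x_0)$ exists, one can extract sequences $h_n \to 0^+$ and $h_n \to 0^-$ with $x_0 + h_n \in F$, so that the difference quotients $(\q'(x_0 + h_n) - \q'(x_0))/h_n$ vanish identically, forcing $\q''(x_0) = 0$. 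This standard real-analytic fact---that the a.e.\ derivative of a function vanishes a.e.\ on its level sets---is the only nontrivial step, and after it is in hand the corollary follows by pure bookkeeping with Theorem~\ref{theo: main_NIP}.
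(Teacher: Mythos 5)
Your proof is correct and follows essentially the same route as the paper: specialize Theorem~\ref{theo: main_NIP} to $r=0$, identify condition~(ii) of the corollary with the vanishing of $\q''_\si$ on $\mathcal B(\s(J^\circ))$, and reduce everything to the fact that $\q''\1_{\{\q'=0\}}=0$ $\llambda$-a.e. The paper establishes this last fact (its Lemma~\ref{lem:270125a1}) by decomposing $\{\q'=0\}$ into interior, accumulation and isolated points instead of invoking the Lebesgue density theorem, but your density-point argument is a correct and equally standard variant of the same step.
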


In part~(ii) of Corollary~\ref{coro: NIP r = 0} and in similar contexts below,
the expression ``(locally) absolutely continuous on $\s(J^\circ)$''
means absolutely continuous on every compact subinterval of $\s(J^\circ)$. The proof of Corollary~\ref{coro: NIP r = 0} is given after the following discussion.

\begin{discussion}\label{disc:040225a1}
(i) 
As we illustrate in Example~\ref{ex: bessel} below, part~(i) in Corollary~\ref{coro: NIP r = 0} is a non-empty condition in the sense that there are examples of models with reflecting boundaries that satisfy NIP. This observation is surprising at first glance, because there is already some evidence in the literature (see \cite{BDH23, CU23}) that reflecting boundaries lead to arbitrage opportunities even in rather
strong
forms.

(ii)
We also observe that part~(ii) in Corollary~\ref{coro: NIP r = 0} excludes the case where $Y$ is a \emph{skew diffusion}.
Naturally generalizing the concept of the skew Brownian motion,
we understand the \emph{skew diffusion} as a general diffusion with the scale function $\s$ that has a kink at some point $c\in J^\circ$ in the sense that the left- and right-hand derivatives $\s'_-(c)$ and $\s'_+(c)$ exist and belong to $(0,\infty)$ but $\s'_-(c)\ne\s'_+(c)$.
On the other hand, part~(ii) of Corollary~\ref{coro: NIP r = 0} does not exclude irregularities of the type when $\s'(c)=\infty$ at some point $c\in J^\circ$
(see Examples \ref{ex: x^3} and~\ref{ex:060225a1} below).

(iii)
It is interesting to discuss if there is a relation between the no-arbitrage notions in the zero interest rate regime and the \emph{representation property} (RP) of $Y$, which is often connected to market completeness (cf. \cite[Section VII.2.d]{shir}).
We first recall what the RP means.
Let $\bfFY$ denote the right-continuous filtration generated by $Y$.
By Stricker's theorem (see \cite[Theorem~9.19]{jacod79}), the $\mathbf F$-$\P$-semimartingale $Y$ is also an $\bfFY$-$\P$-semimartingale.
We say that the \emph{RP} holds for $Y$ if every $\bfFY$-$\P$-local martingale $M$ has a representation
$$
M=M_0+\int_0^\cdot H_s\vd Y_s^c,
$$
where $Y^c$ denotes the continuous local martingale part of $Y$ w.r.t. the filtration $\bfFY$ and $H$ is an $\bfFY$-predictable process integrable w.r.t. $Y^c$.
The main result of \cite{CU24} states that the following are equivalent:
\begin{itemize}
\item
the RP holds for $Y$,

\item
$\llambda(\{x\in\s(J^\circ):\q'(x)=0\})=0$,

\item
the scale function $\s$ is absolutely continuous on all compact subintervals of $J^\circ$.
\end{itemize}
In Example~\ref{ex:060225a1} below we have NIP but $\llambda(\{x\in\s(J^\circ):\q'(x)=0\})>0$, that is, the RP fails for~$Y$.
Thus, for $r=0$, NIP does not imply the RP.
Below we will see that this is different for the stronger no-arbitrage notions.

(iv)
A comparison of Theorem~\ref{theo: main_NIP} and Corollary~\ref{coro: NIP r = 0}
also reveals the interesting and surprising new effect that in the non-zero interest rate regime the NIP condition requires less regularity of the scale function than in the zero interest rate regime.
Indeed, by Corollary~\ref{coro: NIP r = 0}, in the case $r=0$, NIP does not allow the second derivative measure \(\q'' (\rd x)\) to have a singular part.
On the other hand, Theorem~\ref{theo: main_NIP} shows that, in the case $r\ne0$, the measure \(\q'' (\rd x)\) could have a singular part (coming for instance from skewness), as long as it is compensated by a singular part of the speed measure (coming for instance from stickiness).
This effect appears to be new and it is only visible in the presence of non-zero interest rates.
Example~\ref{ex: sticky skew} below illustrates how skewness and stickiness can cancel each other.
\end{discussion}

For the proof of Corollary~\ref{coro: NIP r = 0} we need the following elementary observation.

\begin{lemma}\label{lem:270125a1}
Consider an open interval $I\subset\bR$ and a function $f\colon I\to\bR$ such that
\begin{itemize}
\item[\textup{(a)}]
$f'$ exists and is finite on $I$,

\item[\textup{(b)}]
$f''$ exists and is finite $\llambda$-a.e. on $I$.
\end{itemize}
Then, $f''\1_{\{f'=0\}}=0$ $\llambda$-a.e. on $I$.
\end{lemma}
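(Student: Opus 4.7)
The plan is to reduce the claim to an elementary limit computation using the classical fact that the isolated points of any subset of $\bR$ form an at most countable (hence $\llambda$-null) set. Write $E \triangleq \{x \in I : f'(x) = 0\}$ and let $N \subset I$ denote the $\llambda$-null set on which $f''$ fails to exist or is infinite (null by assumption~(b)). It suffices to show that $f''(x) = 0$ at every point $x \in E \setminus N$ that is an accumulation point of $E$, since the isolated points of $E$ together with $N$ form an $\llambda$-null set.

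At such an accumulation point $x$, I would pick a sequence $(y_n) \subset E \setminus \{x\}$ with $y_n \to x$ (which exists by the accumulation-point property) and exploit that, by assumption~(a), $f'(y_n)$ is defined and equal to $0$ for every $n$. Since $f''(x)$ exists as a finite limit of difference quotients of $f'$, the value along any particular sequence $y_n \to x$ must agree with this limit, so
\[
f''(x) = \lim_{n \to \infty} \frac{f'(y_n) - f'(x)}{y_n - x} = \lim_{n \to \infty} \frac{0 - 0}{y_n - x} = 0.
\]
Letting $E^*$ denote the set of accumulation points of $E$ that lie in $E$, the set $(E \setminus E^*) \cup N$ is $\llambda$-null, and therefore $f'' \1_{\{f' = 0\}} = 0$ $\llambda$-a.e.\ on $I$.

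I do not foresee any serious obstacle here. The only subtle point is recognising that the isolated points of $E$ may simply be discarded because they form an at most countable set, so that the pointwise argument at accumulation points of $E$ suffices. Assumption~(a), namely that $f'$ is defined at \emph{every} point of $I$ rather than only $\llambda$-a.e., is essential, as it is exactly what permits plugging $y_n \in E$ into the difference quotient defining $f''(x)$; without it one could not guarantee that $f'(y_n)$ is well-defined and equals zero along the chosen sequence.
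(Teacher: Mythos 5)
Your proof is correct and takes essentially the same approach as the paper: both arguments discard the (at most countable, hence $\llambda$-null) isolated points of $\{f'=0\}$ and show $f''=0$ at the remaining accumulation points by evaluating the difference quotient defining $f''$ along a sequence of zeros of $f'$, using assumption (a) exactly as you note. The only cosmetic difference is that the paper splits off the interior points of the set and phrases the accumulation-point step via one-sided derivatives, whereas your single sequence argument covers both cases at once.
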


\begin{proof}
Take $R\in\mathcal B(I)$ such that $f''$ exists and is finite on $R$ and $\llambda(I\setminus R)=0$.
We set $A\triangleq\{x\in I\colon f'(x)=0\}$ and consider the decomposition
\begin{equation}\label{eq:270125a1}
A=A^{\on{int}}\sqcup A^{\on{acc}}\sqcup A^{\on{iso}},
\end{equation}
where $A^{\on{int}}$ (resp., $A^{\on{acc}}$; resp., $A^{\on{iso}}$)
is the subset of $A$ containing all interior points (resp., accumulation points; resp., isolated points) of $A$.
Notice that $A^{\on{iso}}$ is at most countable hence does not matter for the claim of the lemma.
Furthermore,
everywhere on $A^{\on{int}}$ we clearly have $f''=0$,
while on $A^{\on{acc}}\cap R$ we have
either $\rd^+ f'/\rd x=0$ or $\rd^- f'/\rd x=0$
(use the property of the accumulation points),
which again yields $f''=0$ everywhere on $A^{\on{acc}}\cap R$.
Thus, $\llambda$-a.e. on $A$ we have $f''=0$, as needed.
\end{proof}

\begin{proof}[Proof of Corollary~\ref{coro: NIP r = 0}]
We first remark that (ii) in the statement of Corollary~\ref{coro: NIP r = 0}
is nothing else than an equivalent reformulation of $\q''_\si(\rd x)=0$ on $\mathcal B(\s(J^\circ))$.
From Theorem~\ref{theo: main_NIP}, we obtain that,
in the case $r=0$,
NIP is equivalent to (i)--(iii), where (i) and~(ii) are as in the formulation of Corollary~\ref{coro: NIP r = 0} and (iii) is as follows:
\begin{enumerate}
\item[\textup{(iii)}]
$\llambda$-a.e. on $\s(J^\circ)$ we have \(\q''\1_{\{\q' = 0\}} = 0\).
\end{enumerate}
It remains only to observe that (ii) implies~(iii).
Indeed, this is just an application of Lemma~\ref{lem:270125a1},
where (ii) guarantees that the assumptions of Lemma~\ref{lem:270125a1} are satisfied for $f=\q$ and $I=\s(J^\circ)$.
The proof is complete.
\end{proof}

\begin{discussion}
As a little technical observation, in the context of Corollary~\ref{coro: NIP r = 0}, we notice that the decomposition~\eqref{eq:270125a1} for the set
$$
A\triangleq\{x\in\s(J^\circ):\q'(x)=0\}
$$
has more structure than in the more general situation of Lemma~\ref{lem:270125a1}. Namely,
\begin{itemize}
\item
as $\q'$ is continuous, $A$ is closed in $\s(J^\circ)$;

\item
as $\q$ is strictly monotone, $A^{\on{int}}=\emptyset$.
\end{itemize}
Now, it is natural to ask whether $A^{\on{iso}}$ and/or $A^{\on{acc}}$ can be non-empty in the context of Corollary~\ref{coro: NIP r = 0}.
The answer is affirmative.
In Example~\ref{ex: x^3} below
we have $\s(J^\circ)=\bR$ and $\q(x)=x^3$. Hence, the set $A$ is just one point, so $A^{\on{iso}}\ne\emptyset$.
Furthermore, Example~\ref{ex:060225a1} illustrates that the set $A$ can have positive Lebesgue measure.
In view of the fact that $A^{\on{int}}=\emptyset$, in that example $A^{\on{acc}}\ne\emptyset$ (and even $\llambda(A^{\on{acc}})>0$).
\end{discussion}

    \section{A Deterministic Characterization of NSA}
    \label{sec_NSA}

   We now establish a deterministic characterization of the NSA condition that only depends on scale and speed. 
  The following is the main result of this section. Below we also provide a simplified version for the zero interest rate regime. 

    	\begin{theorem}\label{theo: main_NSA}
		NSA holds if and only if
        \textup{(i)}--\textup{(iii)} from Theorem~\ref{theo: main_NIP} hold and additionally:
		\begin{enumerate}
			\item[\textup{(iv)}]
            The function \(\varphi \colon \s (J^\circ) \to \bR\) defined by 
			\begin{align*}
				\varphi \triangleq \frac{\frac{1}{2}\q'' - r \q \fmu}{\q'} \1_{\{\q' \ne 0\}},\quad\text{on } \s (J^\circ),
			\end{align*}
			satisfies 
			\[
			\varphi \in L^2_\textup{loc} (\s (J^\circ)),
			\]
            and for every accessible boundary $b\in J\setminus J^\circ$ it holds:
            $$
            \text{If }b\text{ is reflecting, then }
            \int_{I} \varphi^2 (x) \vd x < \infty
            $$
            for some (equivalently, for every) open interval \(I \subsetneq \s (J^\circ)\) with \(\s (b)\) as endpoint.
		\end{enumerate}
	\end{theorem}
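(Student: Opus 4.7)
The plan is to combine the FTAP for NSA (Theorem~\ref{theo: FTAP_NSA}) with the work already done for NIP. Since NSA implies NIP, Theorem~\ref{theo: main_NIP} gives conditions (i)--(iii) together with (a version of) the IMPR $\theta_t = e^{rt}\gamma(U_t)$ from \eqref{eq:200125a1}. Using Lemma~\ref{lem: semimartingale decomposition concrete} and the elementary identity $\gamma(x)^2[\q'(x)]^2 \mathbbm{1}_{\{\q' \ne 0\}}(x) = \varphi^2(x)$, a direct computation yields
\[
\theta_s^2 \, \rd \langle M, M\rangle_s = \varphi^2(U_s)\, \rd \langle U, U\rangle_s \quad \P\text{-a.s.}
\]
Via the semimartingale occupation times formula (\cite[Theorem~IV.45.1]{RogWilV2}), condition \eqref{eq:200125a2} then becomes the pathwise requirement that $\P$-a.s., for every $t \in [0, T)$, there exists $h \in (0, T - t]$ with
\begin{equation}\label{plan:key}
\int_{\s(J)} \varphi^2(x) \bigl(L^x_{t+h}(U) - L^x_t(U)\bigr)\, \rd x < \infty.
\end{equation}
The proof thus reduces to the equivalence between the validity of \eqref{plan:key} and condition (iv).

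For the sufficiency direction, fix $\omega$ in a full-measure event and $t \in [0, T)$. If $U_t(\omega) \in \s(J^\circ)$, pick by path continuity some $h > 0$ such that $\{U_s(\omega) : s \in [t, t+h]\}$ lies in a compact subinterval $K \subset \s(J^\circ)$; since $(x,s)\mapsto L^x_s(U)(\omega)$ is continuous in $x$ on $\s(J^\circ)$ (cf.\ \cite[Lemma~C.15, Remark~C.16~(a)]{CU22}), it is bounded on $K \times [0, T]$, and the local $L^2$ part of~(iv) yields \eqref{plan:key}. If $U_t(\omega) = \s(b)$ for a reflecting boundary $b$, choose $h$ so that $U_\cdot(\omega)$ remains in a one-sided interval $I \subsetneq \s(J^\circ)$ with $\s(b)$ as endpoint; local time is bounded on $\overline{I} \times [0, T]$, and the one-sided integrability in~(iv) gives \eqref{plan:key}. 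If $U_t(\omega) = \s(b)$ for an absorbing boundary, Lemma~\ref{lem: semimartingale decomposition concrete}(a) shows that $\rd\langle M,M\rangle$ is killed past~$t$, so \eqref{plan:key} is trivial.

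For the necessity direction, assume NSA; then Theorem~\ref{theo: main_NIP} gives (i)--(iii) and it remains to establish (iv). Suppose, aiming for a contradiction, that (iv) fails. Then either (a) there exists $c \in \s(J^\circ)$ with $\int_{(c-\epsilon, c+\epsilon)} \varphi^2\, \rd x = \infty$ for every small $\epsilon > 0$, or (b) there exists a reflecting boundary $b$ such that $\int_I \varphi^2\, \rd x = \infty$ on every open interval $I \subsetneq \s(J^\circ)$ with $\s(b)$ as endpoint. In case~(a), set $\tau \triangleq T_c(U)$; by \cite[Theorem~1.1]{BrugRuf16} we have $\P(\tau < T/2) > 0$, while regularity of $c$ for $U$ forces $L^c_{\tau+h}(U) - L^c_\tau(U) > 0$ $\P$-a.s.\ on $\{\tau < T\}$ for every $h > 0$. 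Continuity of $x \mapsto L^x_{\tau+h}(U) - L^x_\tau(U)$ on $\s(J^\circ)$ then produces random $\epsilon(\omega),\delta(\omega) > 0$ such that $L^x_{\tau+h}(U) - L^x_\tau(U) \geq \delta(\omega)$ on $(c - \epsilon(\omega), c + \epsilon(\omega))$; combined with the deterministic divergence of $\int \varphi^2$, this forces the integral in \eqref{plan:key} (with $t = \tau$) to equal $+\infty$ for every $h > 0$. Hence the infimum in \eqref{eq:200125a2} drops strictly below~$T$ on a set of positive probability, contradicting NSA. Case~(b) is analogous, using right-continuity of $x \mapsto L^x_{\tau+h}(U) - L^x_\tau(U)$ at $\s(b)$ together with strict positivity of $L^{\s(b)}(U)$ after $T_{\s(b)}(U)$ (cf.\ \cite[Lemma~C.18]{CU22}).

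The delicate step is case~(a) of necessity: converting a purely deterministic divergence of $\int \varphi^2$ around a point $c$ into divergence of a stochastic integral on arbitrarily short time intervals immediately after $\tau$. This hinges on the joint space--time regularity of the local time of the natural-scale diffusion $U$, specifically on the strict positivity of $L^c_{\tau+h}(U) - L^c_\tau(U)$ for every $h > 0$ (which comes from regularity of $c$ for itself) propagated to a full neighborhood of $c$ via continuity of $L^{\cdot}(U)$ on $\s(J^\circ)$.
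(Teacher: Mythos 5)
Your proposal is correct and follows essentially the same route as the paper's proof: reduce via the FTAP for NSA to showing that, under (i)--(iii) and with the IMPR from \eqref{eq:200125a1}, condition \eqref{eq:200125a2} is equivalent to (iv), using the identity $\theta_s^2\,\rd\langle M,M\rangle_s=\varphi^2(U_s)\,\rd\langle U,U\rangle_s$, the occupation time formula, and for necessity the hitting time $T_u(U)$ with \cite[Theorem~1.1]{BrugRuf16}, the strict local-time increase from \cite[Lemma~C.18]{CU22} and spatial (right-)continuity of the local time. The only (harmless) presentational differences are that the paper establishes the occupation-time identity for increments of unbounded nonnegative integrands via a monotone class step, which you leave implicit, and that in the sufficiency direction the paper proves the global bound \eqref{eq:310125a2} in the non-absorbing cases and uses constancy of $M$ after absorption, whereas you argue locally in $(t,\omega)$.
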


    By virtue of Theorem~\ref{theo: main_NIP}, (iv) is precisely what distinguishes NIP and NSA, i.e., what is needed in addition to NIP so that NSA holds.

\begin{proof}
By Theorem~\ref{theo: FTAP_NSA}, NSA holds if and only if NIP and \eqref{eq:200125a2} hold. Thanks to Theorem~\ref{theo: main_NIP}, NIP is equivalent to (i)--(iii). Hence, Theorem~\ref{theo: main_NSA} follows once we prove that, under (i)--(iii), \eqref{eq:200125a2} is equivalent to (iv). 
To streamline our presentation, we only consider the case $J^\circ=(\alpha,\infty)$, stressing that all other cases can be treated by similar methods. For the remainder of this proof, we assume that (i)--(iii) hold.

\smallskip 
As a preparation, we deduce from \eqref{eq:200125a1}, \eqref{eq: final DMD} and~\eqref{eq: final DMD - re} that,
if $\alpha$ is absorbing, then,
$\P$-a.s. for $s< T_\alpha(Y)\;(=T_{\s(\alpha)}(U))$, 
\begin{equation}\label{eq:310125a1}
\begin{split}
\theta_s^2\vd\langle M,M\rangle_s 
&= e^{2 r s } \gamma^2 (U_s) \, e^{- 2 r s} \big[ \q'_+ (U_s)\big]^2 \vd \langle U, U \rangle_s
\\
&=
\gamma^2(U_s)\big[\q'_+(U_s)\big]^2\rd\langle U,U\rangle_s,
\end{split}
\end{equation} 
while, if $\alpha$ is inaccessible or reflecting, the above identity holds globally (i.e., \(\P\)-a.s. for all \(s \in [0, T]\)).
As $\llambda$-a.e. $\gamma\q'_+=\varphi$ on $\s(J^\circ)$,
the semimartingale occupation time formula yields that
$\P$-a.s. for all $t\in[0,T_\alpha(Y)\wedge T)$, if \(\l\) is absorbing, or for all $t\in[0,T]$, if \(\l\) is inaccessible or reflecting,
\begin{equation}\label{eq:200125a3}
\int_0^t \theta_s^2\vd\langle M,M\rangle_s
=
\int_0^t \varphi^2(U_s)\vd\langle U,U\rangle_s
=
\int_{\s(J^\circ)}\varphi^2(x)L_t^x(\U)\vd x.
\end{equation}
(In fact,
both \eqref{eq:310125a1} and~\eqref{eq:200125a3} hold globally even when $\alpha$ is absorbing, but this fact is not needed at this point.)

\smallskip 
We now turn to the main body of the proof, starting with the case where (iv) holds.
Let $\alpha$ be inaccessible or reflecting.
Then \eqref{eq:200125a3} implies that,
$\P$-a.s. for all $t\in[0,T]$,
$\int_0^t \theta_s^2 \vd\langle M,M\rangle_s<\infty$
(notice that $\P$-a.s. the function $\s(J)\ni x\mapsto L_t^x(U)$ is bounded as c\`adl\`ag function with a compact support in $\s(J)$).
It follows that \eqref{eq:310125a2}, hence also~\eqref{eq:200125a2}, is satisfied.
Now, consider the case of an absorbing $\alpha$.
As above, in this case, \eqref{eq:200125a3} yields that,
$\P$-a.s. for all $t\in[0,T_\alpha(Y)\wedge T)$,
$\int_0^t \theta_s^2 \vd\langle M,M\rangle_s<\infty$.
As $M$ is constant on $[T_\alpha(Y)\wedge T,T]$,
the jump to infinity as described by the left-hand side of~\eqref{eq:200125a2}
cannot happen after $T_\alpha(Y)\wedge T$.
Thus, \eqref{eq:200125a2} is satisfied again.
We thus proved that, if (iv) is satisfied, then \eqref{eq:200125a2} holds.

\smallskip 
Lastly, we assume that (iv) is violated and we prove that \eqref{eq:200125a2} fails, too.
There are two possibilities: 
\begin{enumerate}
\item[(a)]
either there is a point $u\in(\s(\alpha),\infty)$ such that
\begin{equation}\label{eq:200125a4}
\int_I \varphi^2(x)\vd x=\infty
\end{equation}
for every open neighborhood $I\subset(\s(\alpha),\infty)$ of $u$,

\item[(b)]
or $\alpha$ is reflecting and \eqref{eq:200125a4} holds for every open interval $I\subset(\s(\alpha),\infty)$ with $\s(\alpha)$ as endpoint.
In this latter case we set $u\triangleq\s(\alpha)$.
\end{enumerate}
Using the semimartingale occupation time formula twice, 
we obtain that, $\P$-a.s for all $h\in(0,\infty)$ and for all
\emph{bounded} Borel functions $f\colon\s(J)\to\bR$,
\begin{equation}\label{eq:220125a1}
\int_{T_u(U)}^{T_u(U)+h}
f(U_s)\vd\langle U,U\rangle_s
=
\int_{\s(J^\circ)}
f(x)\big(
L_{T_u(U)+h}^x(U)
-
L_{T_u(U)}^x(U)
\big)\vd x
\end{equation}
(as $\llambda(\s(J)\setminus\s(J^\circ))=0$, we can integrate over $\s(J^\circ)$ on the right-hand side).
By a standard monotone class argument, \eqref{eq:220125a1} extends to all \emph{nonnegative} Borel functions $f\colon\s(J)\to[0,\infty]$
(we cannot do the subtraction in general, as we may get $\infty-\infty$).
Recall that, if $\alpha$ is absorbing, then $\P$-a.s. for $s<T_\alpha(Y)$ we have
$\theta_s^2\vd\langle M,M\rangle_s=\varphi^2(U_s)\vd\langle U,U\rangle_s$,
while, if $\alpha$ is inaccessible or reflecting, this holds globally.
Take some $h_0\in(0,T)$.
By \cite[Theorem 1.1]{BrugRuf16}, it holds that $\P(T_u(U)<T-h_0)>0$.
We further notice that, if $\alpha$ is absorbing, then
$\{T_u(U)<T-h_0\}\subset\{T_u(U)<T_\alpha(Y)\}$.
The above discussions and \eqref{eq:220125a1} with $f=\varphi^2$ yield that
$\P$-a.s. on the event $\{T_u(U)<T-h_0\}$ of strictly positive $\P$-probability for all \emph{sufficiently small}\footnote{More precisely,
``sufficiently small'' means in this context that
$h<h_0$ whenever $\alpha$ is inaccessible or reflecting,
while $h<h_0\wedge(T_\alpha(Y)-T_u(U))$ whenever $\alpha$ is absorbing.}
$h>0$ it holds
\begin{equation}\label{eq:220125a2}
\int_{T_u(U)}^{T_u(U)+h}
\theta_s^2\vd\langle M,M\rangle_s
=
\int_{\s(J^\circ)}
\varphi^2(x)\big(
L_{T_u(U)+h}^x(U)
-
L_{T_u(U)}^x(U)
\big)\vd x.
\end{equation}
Observe that $\P$-a.s. for all $h>0$ the function
$x\mapsto L_{T_u(U)+h}^x(U)-L_{T_u(U)}^x(U)$
is continuous on $\s(J^\circ)$
(\cite[Theorem V.49.1]{RogWilV2})
and right-continuous at the point $\s(\alpha)$
(recall that we work with the right semimartingale local time in this paper).
Furthermore, by the strong Markov property of $U$ together with \cite[Lemma~C.18]{CU22},
$\P$-a.s. for all $h>0$ we have
$L_{T_u(U)+h}^u(U)-L_{T_u(U)}^u(U)>0$.
Now, the choice of the point $u$ together with \eqref{eq:220125a2} imply that \eqref{eq:200125a2} is violated.
This concludes the proof.
\end{proof}

\begin{remark}\label{rem: NSA NUPBR proof}
The proof of Theorem~\ref{theo: main_NSA} shows that if NIP holds and all boundaries of \(\Y\) are either inaccessible or reflecting, then \eqref{eq:200125a2} already implies~\eqref{eq:310125a2}.
As a consequence, in the absence of absorbing boundary points, NSA and NUPBR are equivalent. In contrast, if one of the boundary points is absorbing, it is possible that \eqref{eq:200125a2} holds while \[\int_0^T \theta^2_s \vd \langle M, M\rangle_s = \infty\] with positive \(\P\)-probability
(see the proof of Corollary~\ref{cor:070225a1} below for an explicit example in our setting).
\end{remark} 

For the zero interest rate regime, we deduce the following collection of interesting equivalent characterizations of NSA.

\begin{corollary} \label{coro: NSA r = 0}
For $r=0$ the following are equivalent:
\begin{enumerate}
\item[\textup{(i)}]
NSA holds.

\item[\textup{(ii)}]
The scale function $\s$ is a $C^1$-function on $J^\circ$ with a strictly positive (locally) absolutely continuous derivative,
$\s''/\s'\in L^2_\textup{loc}(J^\circ)$ and the boundary points of $J$ are inaccessible or absorbing for~$Y$.

\item[\textup{(iii)}]
The scale function $\s$ admits the representation
\begin{equation}\label{eq:220125a3}
\s(x)=\int^x\exp\Big\{\int^y \mu(z)\vd z\Big\}\vd y,
\quad x\in J^\circ,
\end{equation}
with some $\mu\in L^2_\textup{loc}(J^\circ)$ and the boundary points of $J$ are inaccessible or absorbing for~$Y$.

\item[\textup{(iv)}]
The inverse scale function $\q$ is a $C^1$-function on $\s(J^\circ)$ with a strictly positive (locally) absolutely continuous derivative,
$\q''/\q'\in L^2_\textup{loc}(\s(J^\circ))$ and the boundary points of $J$ are inaccessible or absorbing for~$Y$.

\item[\textup{(v)}]
The inverse scale function $\q$ admits the representation
\begin{equation}\label{eq:220125a4}
\q(x)=\int^x\exp\Big\{\int^y \nu(z)\vd z\Big\}\vd y,
\quad x\in\s(J^\circ),
\end{equation}
with some $\nu\in L^2_\textup{loc}(\s(J^\circ))$ and the boundary points of $J$ are inaccessible or absorbing for~$Y$.
\end{enumerate}
\end{corollary}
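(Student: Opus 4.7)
The plan is to deduce this corollary from Theorem~\ref{theo: main_NSA} specialized to $r=0$, which reduces to Corollary~\ref{coro: NIP r = 0} (giving NIP) together with the additional condition~(iv) of Theorem~\ref{theo: main_NSA}. I would first prove (i)~$\Longleftrightarrow$~(iv), then (iv)~$\Longleftrightarrow$~(v), and finally transfer to the $\s$-side statements (ii) and~(iii) via the inversion $\s=\q^{-1}$.

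\textbf{Step 1: From NSA to (iv).} By Corollary~\ref{coro: NIP r = 0}, NIP at $r=0$ forces $\q$ to be $C^1$ on $\s(J^\circ)$ with a locally absolutely continuous derivative and forces $\q'(\s(b))=0$ at any reflecting boundary~$b$. On top of this, condition~(iv) of Theorem~\ref{theo: main_NSA} at $r=0$ reads $\varphi=(\q''/(2\q'))\1_{\{\q'\ne 0\}}\in L^2_\textup{loc}(\s(J^\circ))$ together with $L^2$-integrability of $\varphi$ on a one-sided neighborhood of each reflecting boundary. The crucial subclaim is that this integrability rules out both interior zeros of $\q'$ and reflecting boundaries. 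Indeed, on any subinterval $[a,b]\subset\s(J^\circ)$ where $\q'>0$ one has
$$
\log \q'(b)-\log \q'(a) = \int_a^b \frac{\q''(x)}{\q'(x)}\,\vd x,
$$
and Cauchy--Schwarz bounds the right-hand side by $\sqrt{b-a}\,\|\q''/\q'\|_{L^2((a,b))}$. If $\q'$ vanished at an interior point or at a reflecting boundary, letting $a$ or $b$ tend to that point would send the left-hand side to $-\infty$ while the right-hand side stays bounded, a contradiction. Hence $\q'>0$ everywhere on $\s(J^\circ)$ and no reflecting boundary can occur; under $\q'>0$, the condition $\varphi\in L^2_\textup{loc}$ reads $\q''/\q'\in L^2_\textup{loc}$, yielding~(iv).

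\textbf{Step 2: (iv)~$\Longleftrightarrow$~(v).} Assuming (iv), $\log \q'$ is locally absolutely continuous on $\s(J^\circ)$ with derivative $\nu \triangleq \q''/\q'\in L^2_\textup{loc}$, so $\q'(y)=\q'(c)\exp\{\int_c^y \nu(z)\,\vd z\}$ and integration yields the representation~\eqref{eq:220125a4} (the constants are absorbed into the indefinite integrals). Conversely, \eqref{eq:220125a4} makes $\q$ of class $C^1$ with strictly positive, locally absolutely continuous derivative and $\q''/\q'=\nu\in L^2_\textup{loc}$, so~(iv) holds.

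\textbf{Step 3: Transfer to the $\s$-side.} The boundary condition appears identically in (ii) and~(iv), so it suffices to translate the regularity statements via $\s=\q^{-1}$. Since $\q$ is a $C^1$-diffeomorphism with $\q'>0$, we have $\s'(x)=1/\q'(\s(x))$, which is continuous, strictly positive, and locally absolutely continuous (as $\q'$ is and $\s$ is a $C^1$-diffeomorphism with locally Lipschitz inverse on compacts). A direct computation gives $(\s''/\s')(x)=-(\q''/\q')(\s(x))\,\s'(x)$, and the change of variables $y=\s(x)$ yields
$$
\int_a^b \Big(\frac{\s''}{\s'}\Big)^{\!2}(x)\,\vd x = \int_{\s(a)}^{\s(b)} \frac{\nu^2(y)}{\q'(y)}\,\vd y,
$$
so the two $L^2_\textup{loc}$ conditions are equivalent (since $1/\q'$ is bounded on compacts of $\s(J^\circ)$). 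This gives (i)~$\Longleftrightarrow$~(ii), and (ii)~$\Longleftrightarrow$~(iii) is proved exactly as in Step~2 with the roles of $\q$ and $\s$ interchanged.

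\textbf{Main obstacle.} The only genuinely delicate point is Step~1: rigorously exploiting the $L^2$-integrability near the boundary (as opposed to merely on compact subintervals of $\s(J^\circ)$) to combine with the NIP consequence $\q'(\s(b))=0$ and reach a contradiction at any putative reflecting boundary, and simultaneously ruling out interior zeros of $\q'$ from the $L^2_\textup{loc}$ hypothesis alone. One has to be careful to choose the neighborhood to which Cauchy--Schwarz is applied in accordance with the exact version of condition~(iv) (half-open versus open subinterval). Once this is done, the rest of the corollary is a clean combination of change of variables, the fundamental theorem of calculus for absolutely continuous functions, and inversion of $C^1$-functions.
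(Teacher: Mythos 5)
Your proposal is correct and follows the paper's overall skeleton---specialize Theorem~\ref{theo: main_NSA} at $r=0$ via Corollary~\ref{coro: NIP r = 0}, show $\q'>0$ everywhere and rule out reflecting boundaries, then pass between $\q$ and $\s$ through $\s'(x)=1/\q'(\s(x))$ and the identity $(\s''/\s')(x)=-(\q''/\q')(\s(x))\,\s'(x)$ with a change of variables, exactly as the paper does via~\eqref{eq:050225a3}---but your crucial positivity step runs on a genuinely different engine. The paper reads $\q''=\nu\q'$ ($\nu$ the function in~\eqref{eq:270125a2}) as a linear Carath\'eodory ODE for $\q'$ and invokes the uniqueness result \cite[Theorem~XVIII, p.~121]{walter} to conclude $\q'=C\exp\{\int^\cdot\nu(z)\vd z\}$ with $C\ne 0$ (as $\q$ is strictly increasing); this eliminates interior zeros in one stroke, yields the representation~\eqref{eq:220125a4} for free, and disposes of reflecting boundaries because $\nu\in L^2(I)\subset L^1(I)$ on the finite-length interval $I$ (accessibility gives $|\s(b)|<\infty$), so $\q'(\s(b))>0$, contradicting the NIP consequence $\q'(\s(b))=0$. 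You instead use an elementary Cauchy--Schwarz bound on $\log\q'$ along a maximal interval of positivity $[a,b^*)$ ending at a putative zero: $\log\q'(x)-\log\q'(a)=\int_a^x 2\varphi(z)\vd z$ stays bounded while the left side would tend to $-\infty$. Your argument is sound and buys independence from the Carath\'eodory uniqueness citation, at the price of the boundary subtlety you yourself flag: at a reflecting boundary you must either verify $\q'(x)\to\q'_+(\s(b))$ as $x\to\s(b)$ (this does hold, since by Lemma~\ref{lem: isf DC} the measure $\q''(\rd x)$ is finite near a reflecting boundary and, under NIP, has no singular part), or argue contrapositively that the bounded integral forces a uniform lower bound $\q'\ge\varepsilon>0$ near $\s(b)$ and hence $\q'_+(\s(b))\ge\varepsilon>0$ via $\q(x)-\q(\s(b))=\int_{\s(b)}^x\q'(u)\vd u$; either patch is routine. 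One further small omission: you never spell out the converse direction (iv)$\implies$(i), i.e., that (iv) of the corollary implies conditions (i)--(iii) of Theorem~\ref{theo: main_NIP} and (iv) of Theorem~\ref{theo: main_NSA}---all immediate since $\{\q'=0\}=\emptyset$ and $\q'$ is locally absolutely continuous---but the paper dismisses the analogous step (``(v) implies (a)--(d)'') as obvious too, so this is cosmetic rather than a gap.
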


\begin{proof}
Recasting (i)--(iii) from Theorem~\ref{theo: main_NIP} and (iv) from Theorem~\ref{theo: main_NSA} for the case $r=0$, we obtain that NSA holds if and only if the following conditions (a)--(d) are satisfied\footnote{As proved in Corollary~\ref{coro: NIP r = 0}, condition~(c) can be dropped here. But for this proof it is convenient to have (c) explicitly.}:
\begin{enumerate}
\item[(a)]
Every reflecting boundary point $b$ of \(\Y\) satisfies \(\q' ( \s (b)) = 0\), where \(\q'\) is the right- or left-hand derivative of \(\q\) depending on whether \(b\) is the left or right boundary point.

\item[(b)]
\(\q''(\rd x) = \q'' (x) \vd x\) on $\mathcal B(\s(J^\circ))$.

\item[(c)]
$\llambda$-a.e. on $\s(J^\circ)$ we have \(\q''\1_{\{\q' = 0\}} = 0\).

\item[(d)]
It holds
\begin{equation}\label{eq:270125a2}
\frac{\q''}{\q'}\1_{\{\q'\ne0\}}\in L^2_\textup{loc}(\s(J^\circ))
\end{equation}
and, for every reflecting boundary point $b$, we have
\begin{equation}\label{eq:270125a3}
\int_I
\Big(\frac{\q''}{\q'}\Big)^2(x)\,\1_{\{\q'\ne0\}}(x)\vd x<\infty
\end{equation}
for some (equivalently, for every) open interval $I\subsetneq\s(J^\circ)$ with $\s(b)$ as endpoint.
\end{enumerate}
It is obvious that (v) implies (a)--(d) and hence, (v)$\implies$(i).
Next, we prove the reverse implication (i)$\implies$(v).
Assume that (a)--(d) hold.
It follows from (b), (c) and~\eqref{eq:270125a2} that
$\q$ is a $C^1$-function on $\s(J^\circ)$ with a (locally) absolutely continuous derivative $\q'$ such that there exists a Borel function $\nu\in L^2_\mathrm{loc}(\s(J^\circ))$
with
$$
\q'' = \nu \q'\quad\llambda\text{-a.e. on }\s(J^\circ)
$$
(we can simply define $\nu$ as the function in~\eqref{eq:270125a2}).
This is a differential equation for the function $\q'$ in the sense of Carath\'eodory.
By a standard existence and uniqueness result for such equations (see \cite[Theorem~XVIII, p. 121]{walter}), the above differential equation (in the sense of Carath\'eodory) has a unique (up to multiplicative constants) solution
\begin{equation}\label{eq:270125a4}
\q' = \exp \Big\{\int^\cdot \nu (z) \vd z \Big\} \quad \text{on } \s (J^\circ).
\end{equation}
It remains only to prove that there are no reflecting boundaries.
For contradiction, assume that $b$ is a reflecting boundary point.
Take an open interval $I\subsetneq\s(J^\circ)$ with $\s(b)$ as endpoint.
As $b$ is accessible, $|\s(b)|<\infty$ and hence, $\llambda(I)<\infty$.
Therefore, by~\eqref{eq:270125a3}, we get that
$$
\nu\in L^2(I)\subset L^1(I),
$$
which, together with~\eqref{eq:270125a4}, yields $\q'(\s(b))>0$.
This contradiction to~(a) proves that there are no reflecting boundaries.
We thus established the equivalence (i) $\Longleftrightarrow$ (v).

Further, the equivalences (ii) $\Longleftrightarrow$ (iii) and (iv) $\Longleftrightarrow$ (v) are straightforward.
We only notice that, in the implications (ii)$\implies$(iii) and (iv)$\implies$(v),
the representations \eqref{eq:220125a3} and~\eqref{eq:220125a4} are again due to the mentioned uniqueness result for differential equations in the sense of Carath\'eodory.

To prove the implication (ii)$\implies$(iv), we notice that, under~(ii),
$\q$ is a $C^1$-function on $\s(J^\circ)$ such that
$\q'(x)=1/\s'(\q(x))$ for $x\in\s(J^\circ)$.
Being $C^1$, the function $\q$ is (locally) absolutely continuous.
This fact, together with the above formula for $\q'$ and the (local) absolute continuity and strict positivity of $\s'$, yields that $\q'$ is (locally) absolutely continuous on $\s(J^\circ)$.
Further, by a straightforward calculation,
\begin{equation}\label{eq:050225a3}
\frac{\q''}{\q'}(x)=-\frac{\s''}{\s'}(\q(x))\,\q'(x),
\quad\text{for }\llambda\text{-a.a. }x\in\s(J^\circ).
\end{equation}
Finally, the latter formula,
the local square integrability of $\s''/\s'$ on $J^\circ$
and the local boundedness of $\q'$ on $\s(J^\circ)$
imply the local square integrability of $\q''/\q'$ on $\s(J^\circ)$.
The reverse implication (iv)$\implies$(ii) is symmetric.
Thus, we established the remaining equivalence (ii) $\Longleftrightarrow$ (iv).
This concludes the proof.
\end{proof}

\begin{discussion}\label{disc:040225a2}
(i)
A comparison of Theorem~\ref{theo: main_NSA} and Corollary~\ref{coro: NSA r = 0}
reveals that NSA forces accessible boundaries for $Y$ to be absorbing only in the case $r=0$.
In contrast, it is possible for $r\ne0$ that NSA (and even the stronger condition NUPBR) holds although \(\Y\) has reflecting boundaries, see Example~\ref{ex: NUBPR ref interest rate} below.

(ii)
In contrast to Discusion~\ref{disc:040225a1}~(ii), in the case $r=0$, NSA excludes not only skew diffusions but also the possibility of $\s'(c)=\infty$ at some point $c\in J^\circ$.

(iii)
In contrast to Discussion~\ref{disc:040225a1}~(iii), for $r=0$, NSA (hence, also NUPBR) implies the RP for $Y$.
This is a direct consequence of Corollary~\ref{coro: NSA r = 0} and the characterization of the RP recalled in Discussion~\ref{disc:040225a1}~(iii).

(iv)
As in the case of NIP, we again see that in the non-zero interest rate regime the NSA condition requires less regularity of the scale function than in the zero interest rate regime
(cf. Discussion~\ref{disc:040225a1}~(iv)).

(v)
The fact that, in the case $r=0$, NSA forces accessible boundaries to be absorbing is established in the proof of Corollary~\ref{coro: NSA r = 0} in a purely analytic way.
It is instructive to deduce this directly from the definition of NSA.
Assume for contradiction that the left boundary point $\alpha$ is reflecting and consider the strategy $H\triangleq\1_{(T_\alpha(Y)\wedge T,T]}$.
As $r=0$, we have $S=Y$ and hence,
$$
\int_0^t H_s\vd S_s=Y_t-Y_{t\wedge T_\alpha(Y)}\ge0 ,\quad t\in[0,T],
$$
which means that $H\in\mathcal A_0$.
It is also reasonable to expect that
$$
\P\Big(\int_0^T H_s\vd S_s>0\Big)
=
\P(Y_T-Y_{T\wedge T_\alpha(Y)}>0)
>0.
$$
We refer to the proof of Lemma~5.7 in \cite{CU23} for the formal argument.
Thus, $H$ generates a strong arbitrage opportunity.
\end{discussion}

	\section{A Deterministic Characterization of NUPBR}
	\label{sec_NUPBR}
	
	For our setting \emph{without} interest rate, a deterministic characterization of the NUPBR condition has been established in the paper \cite{CU23}. The proofs there used deep results on separating times for general diffusions that were established in \cite{CU22}.
    The present paper suggests, in particular, a different way of proving the characterization of NUPBR.
    But it is worth noting that, with our current methods, we cannot handle the other no-arbitrage notions considered in \cite{CU23}
    (namely, NA and NFLVR),
    while\cite{CU23} neither considered NSA nor NIP
    and we do not treat the possibility of a non-zero interest rate
    (the latter also means that many of the surprising effects discussed in this paper are not within the scope of \cite{CU23}).

\smallskip 
    The main result of this section is as follows.

	\begin{theorem}\label{theo: main_NUPBR}
		NUPBR holds if and only if
        \textup{(i)}--\textup{(iii)} from Theorem~\ref{theo: main_NIP}
        and \textup{(iv)} from Theorem~\ref{theo: main_NSA}
        hold and additionally:
		\begin{enumerate}
			\item[\textup{(v)}]
            For every accessible boundary \(b\in J\setminus J^\circ\) it holds:
				\[
                \text{If }b\text{ is absorbing, then }
				\int_{I} \, |x - \s (b)| \varphi^2 (x) \vd x < \infty
				\]
				for some (equivalently, for every) open interval \(I \subsetneq \s (J^\circ)\) with \(\s (b)\) as endpoint.
		\end{enumerate}
	\end{theorem}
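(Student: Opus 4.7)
The plan is to combine the FTAP for NUPBR (Theorem~\ref{theo: FTAP_NUPBR}) with the formula~\eqref{eq:200125a1} and the identity for $\int_0^T\theta^2\vd\langle M,M\rangle$ that was already established inside the proof of Theorem~\ref{theo: main_NSA}. Assuming that (i)--(iii) from Theorem~\ref{theo: main_NIP} hold, the candidate IMPR is $\theta_t=e^{rt}\gamma(\U_t)$, and (via the semimartingale occupation time formula as in the derivation of~\eqref{eq:200125a3}) we have
\begin{equation*}
\int_0^T \theta_s^2\vd\langle M,M\rangle_s
=\int_{\s(J^\circ)}\varphi^2(x)L_T^x(\U)\vd x\quad\P\text{-a.s.},
\end{equation*}
where, if a boundary $b$ of $\Y$ is absorbing, both $M$ and (for $x\neq\s(b)$) $L_\cdot^x(\U)$ are constant after $T_{\s(b)}(\U)$, so the right-hand side automatically uses the stopped version. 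By Theorem~\ref{theo: FTAP_NUPBR}, NUPBR is therefore equivalent to the \(\P\)-a.s. finiteness of this integral, and the whole problem reduces to controlling $\int_{\s(J^\circ)}\varphi^2(x)L_T^x(\U)\vd x$ separately near each boundary and on compact subintervals of $\s(J^\circ)$.

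For the sufficiency direction, assume that (i)--(v) hold. The map $x\mapsto L_T^x(\U)$ is P-a.s. continuous on $\s(J^\circ)$ and compactly supported. Condition~(iv) provides local square-integrability of $\varphi$, which takes care of all compact subsets of $\s(J^\circ)$; near inaccessible boundaries this is already enough. Near a reflecting boundary $b$, $L_T^\cdot(\U)$ is bounded by a P-a.s. finite constant on a one-sided closed neighborhood of $\s(b)$, so the integrability of $\varphi^2$ near $\s(b)$ provided by (iv) suffices. Near an accessible absorbing boundary $b$, I would derive a pointwise bound of the form $L_T^x(\U)\le C(\omega)\,|x-\s(b)|$ for $x$ in a one-sided neighborhood of $\s(b)$, where $C(\omega)$ is P-a.s. finite. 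Such a bound is obtained by applying Tanaka's formula to $(\U^{T_{\s(b)}(\U)}-x)^+$ and using that $\U$ is pinned at $\s(b)$ after $T_{\s(b)}(\U)$. Combined with condition~(v) this yields the desired P-a.s. finiteness of the occupation integral.

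For the necessity direction I would argue by contraposition. If (iv) fails, NSA fails by Theorem~\ref{theo: main_NSA}, and so does NUPBR. Hence suppose (iv) holds but (v) fails at some accessible absorbing boundary $b$; say $b=\l$. The goal is to produce constants $\delta,c>0$ and an event $A$ with $\P(A)>0$ on which
\begin{equation*}
L_T^x(\U)\ge c\,(x-\s(\l))\quad\text{for all }x\in(\s(\l),\s(\l)+\delta),
\end{equation*}
so that $\int_{\s(J^\circ)}\varphi^2(x)L_T^x(\U)\vd x\ge c\int_{\s(\l)}^{\s(\l)+\delta}(x-\s(\l))\varphi^2(x)\vd x=\infty$ on $A$, contradicting NUPBR. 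To build $A$, I would first use \cite[Theorem~1.1]{BrugRuf16} to force $\U$ to reach some intermediate level $\s(\l)+2\delta$ before time $T/2$, and then, via the strong Markov property at this hitting time together with excursion theory for the continuous local martingale $\U^{T_{\s(\l)}(\U)}$ (implemented through a Dambis--Dubins--Schwarz comparison with Brownian motion absorbed at $0$), condition on an excursion pattern that guarantees accumulation of local time at least linearly in $x-\s(\l)$, uniformly over the window $(\s(\l),\s(\l)+\delta)$, within the remaining time $[T/2,T]$.

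The main obstacle I anticipate is this uniform-in-$x$ linear lower bound on $L_T^x(\U)$ near the absorbing boundary. The matching upper bound needed for sufficiency follows routinely from Tanaka's formula, but the lower bound for necessity is delicate: one must quantify how a continuous local martingale absorbed at $\s(\l)$ accumulates local time near that boundary \emph{simultaneously} at a continuum of nearby levels, while staying inside the fixed finite horizon $[0,T]$. I expect this to be handled by reducing to Brownian excursion estimates via a time change and exploiting the continuity of $x\mapsto L_T^x(\U)$ on $\s(J^\circ)$.
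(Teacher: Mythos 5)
Your overall reduction is exactly the paper's: use the FTAP for NUPBR, note that (i)--(iii) are forced by NIP and (iv) by NSA, write \(\int_0^T\theta_s^2\vd\langle M,M\rangle_s=\int_{\s(J^\circ)}\varphi^2(x)L_T^x(\U)\vd x\) via the occupation time formula, and observe that (iv) settles everything except the behaviour at an accessible \emph{absorbing} boundary. Up to that point you match the paper. The gap is in how you handle the absorbing boundary, and it is a genuine one: both of your key local-time claims are false. After the natural time change, \(L^\cdot_{T_{\s(\l)}(\U)}(\U)\) coincides with the local time profile \(x\mapsto L^x_{T_{\s(\l)}(W)}(W)\) of a Brownian motion at its first hitting time of \(\s(\l)\), and by the first Ray--Knight theorem this profile, viewed as a process in the space variable started at the boundary, is a squared Bessel process of dimension \(2\) started at \(0\). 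The law of the iterated logarithm for planar Brownian motion then gives
\[
\limsup_{x\searrow\s(\l)}\frac{L^x_{T_{\s(\l)}(\U)}(\U)}{x-\s(\l)}=\infty\quad\text{a.s.},
\]
so the pathwise upper bound \(L_T^x(\U)\le C(\omega)\,|x-\s(b)|\) you want to extract from Tanaka's formula does not hold (Tanaka gives you the martingale decomposition of the profile, not a linear domination; only \(\E\,L^x_{T_{\s(\l)}}\) is linear in \(x-\s(\l)\)). Symmetrically, for the necessity direction, \(\liminf_{x\searrow\s(\l)}L^x/(x-\s(\l))=0\) a.s. (the ratio at dyadic levels is identically distributed with mass near \(0\) and sufficiently decorrelated; alternatively, the event \(\{\liminf\ge c\}\) is a germ event of a BESQ(2) from \(0\) and has probability \(0\) by Blumenthal's law), so your event \(A\) on which \(L_T^x(\U)\ge c\,(x-\s(\l))\) holds \emph{uniformly} down to the boundary has probability zero; no excursion-theoretic construction can produce it.

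What rescues the theorem -- and what the paper actually does -- is to avoid pathwise comparisons of \(L^x\) with \(x-\s(\l)\) altogether: after a Dambis--Dubins--Schwarz/time-change reduction to Brownian motion (so that \(\int\varphi^2(x)L^x_{T_{\s(\l)}}\vd x=\int_0^{T_{\s(\l)}(W)}\varphi^2(W_t)\vd t\)), it invokes the zero--one dichotomy of \cite[Lemma~4.1]{MU_12_ECP}: this integral is a.s.\ finite if \(\int_I(x-\s(\l))\varphi^2(x)\vd x<\infty\) and a.s.\ infinite otherwise. That dichotomy is a distributional statement (in essence Jeulin's lemma applied to the BESQ(2) local-time profile: \(\int f(x)Z_x\vd x<\infty\) a.s.\ iff \(\int f(x)\,x\vd x<\infty\)), and it is exactly the substitute for the two-sided linear bounds your sketch needs. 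So your proposal identifies the right quantity and the right boundary condition, but the ``main obstacle'' you flag at the end is not merely delicate -- it is insurmountable in the pathwise form you propose, and the argument must be rerouted through this zero--one law.
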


\begin{proof}
As NUPBR implies NSA,
the conditions (i)--(iii) from Theorem~\ref{theo: main_NIP}
and condition~(iv) from Theorem~\ref{theo: main_NSA}
need to hold.
By Theorem~\ref{theo: FTAP_NUPBR},
it remains to prove that, under these conditions,
the above condition~(v) is equivalent to~\eqref{eq:310125a2}.
That is, from now on we assume (i)--(iii) from Theorem~\ref{theo: main_NIP}
and (iv) from Theorem~\ref{theo: main_NSA}.
To ease our presentation, we detail the proof only for the case $J^\circ=(\alpha,\infty)$, 
the general case being similar.

Consider the version of the IMPR $\theta$ that is given by~\eqref{eq:200125a1}.
We recall from~\eqref{eq:200125a3} that,
$\P$-a.s. for all $t\in[0,T_\alpha(Y)\wedge T)$ or even for all $t\in[0,T]$,
\begin{equation}\label{eq:310125a3}
\int_0^t \theta_s^2\vd\langle M,M\rangle_s
=
\int_{\s(J^\circ)}\varphi^2(x)L_t^x(\U)\vd x,
\end{equation}
depending on whether $\alpha$ is absorbing or whether $\alpha$ is inaccessible or reflecting, respectively.
Condition~(iv) from Theorem~\ref{theo: main_NSA} together with~\eqref{eq:310125a3}
immediately imply that,
if $\alpha$ is inaccessible or reflecting, then \eqref{eq:310125a2} holds
(notice that $\P$-a.s. the function $\s(J)\ni x\mapsto L_t^x(U)$ is bounded
as it is c\`adl\`ag with compact support in $\s(J)$).
It only remains to consider the case where $\alpha$ is an absorbing boundary point for~$Y$.

Thus, we now assume that $\alpha$ is absorbing.
First, observe that the formula~\eqref{eq:310125a3} holds on the whole interval $[0,T]$ also in this case.
Indeed, this follows from that both sides in~\eqref{eq:310125a3} are left-continuous (by monotone convergence),
stopped at $T_\alpha(Y)$ and coincide on $[0,T_\alpha(Y)\wedge T)$.
In particular, \eqref{eq:310125a2} is equivalent to
$$
\int_0^{T_\alpha(Y)\wedge T} \theta_s^2\vd\langle M,M\rangle_s < \infty \quad \P\text{-a.s.}, 
$$
which, by~\eqref{eq:310125a3}, is equivalent to 
\begin{equation}\label{eq:310125a4}
\int_{\s(\alpha)}^\infty\varphi^2(x)L_{T_{\s(\alpha)}(U)\wedge T}^x(\U)\vd x
<\infty \quad \P\text{-a.s.}, 
\end{equation}
where we used that $T_\alpha(Y)=T_{\s(\alpha)}(U)$.
We remark that, as $\alpha$ is accessible, we have $|\s(\alpha)|<\infty$.
Further notice that, as $\varphi \in L^2_\textup{loc} (\s (J^\circ))$,
the inequality~\eqref{eq:310125a4} holds $\P$-a.s. on $\{T_{\s(\alpha)}(U)>T\}$ because, on this event, the function $x\mapsto L_{T_{\s(\alpha)}(U)\wedge T}^x(U)$ has a compact support in $\s(J^\circ)$.
Summarizing, we established that \eqref{eq:310125a2} is equivalent to
\begin{equation}\label{eq:040225a1}
\int_{\s(\alpha)}^\infty\varphi^2(x)L_{T_{\s(\alpha)}(U)\wedge T}^x(\U)\vd x
<\infty\quad\P\text{-a.s. on }
\{T_{\s(\alpha)}(U)\le T\},
\end{equation}
and the remaining task is to characterize when \eqref{eq:040225a1} holds.
To this end, we realize the general diffusion $U$ on natural scale as a time-changed Brownian motion.
More precisely, by \cite[Theorem~33.9]{Kal21},
\begin{equation}\label{eq:040225a2}
\on{Law}\big(\U_t;\,t\in[0,T]\big)
=
\on{Law}\big(W_{\gamma(t)};\,t\in[0,T]\big),
\end{equation}
where \(W = (W_t)_{t \geq 0}\) is a one-dimensional Brownian motion starting in $\s(x_0)$
(on some probability space whose probability measure we denote by $\overline\P$)
and
\begin{align*}
\gamma(t)
&\triangleq
\inf \{ s \geq 0 \colon A_s > t \},
\\[1mm]
A_t
&\triangleq
\begin{cases}
\int_{(\s(\alpha),\infty)} L^x_t (W) \,\m^U (\rd x),
&t< T_{\s(\alpha)}(W),
\\
\infty,
&t\ge T_{\s(\alpha)}(W).
\end{cases}
\end{align*}
Notice that, in the formula for $A$, we explicitly encoded that $\alpha$ is absorbing for $Y$, i.e., $\s(\alpha)$ is absorbing for~$U$.
Due to~\eqref{eq:040225a2}, we obtain that \eqref{eq:040225a1} is equivalent to
\begin{equation}\label{eq:040225a3}
\int_{\s(\alpha)}^\infty\varphi^2(x)
L_{T_{\s(\alpha)}
(W_{\gamma(\cdot)})\wedge T}^x
(W_{\gamma(\cdot)})
\vd x
<\infty\quad\overline\P\text{-a.s. on }
\{T_{\s(\alpha)}
(W_{\gamma(\cdot)})
\le T\}.
\end{equation}
Observing that
$L_t^x(W_{\gamma(\cdot)})=L_{\gamma(t)}^x(W)$
and
$\gamma(
T_{\s(\alpha)}(W_{\gamma(\cdot)})
)
=T_{\s(\alpha)}(W)$,
we arrive at the equivalence between \eqref{eq:040225a3} and
\begin{equation}\label{eq:040225a4}
\int_{\s(\alpha)}^\infty\varphi^2(x)
L_{T_{\s(\alpha)}(W)}^x(W)
\vd x
<\infty\quad\overline\P\text{-a.s. on }
\{T_{\s(\alpha)}
(W_{\gamma(\cdot)})
\le T\}.
\end{equation}
By the semimartingale occupation time formula,
$$
\int_0^{T_{\s(\alpha)}(W)}
\varphi^2(W_t)\vd t
=
\int_{\s(\alpha)}^\infty
\varphi^2(x)
L_{T_{\s(\alpha)}(W)}^x(W)\vd x
\quad\overline\P\text{-a.s.},
$$
which means that \eqref{eq:040225a4} is equivalent to
\begin{equation}\label{eq:040225a5}
\int_0^{T_{\s(\alpha)}(W)}
\varphi^2(W_t)\vd t
<\infty\quad\overline\P\text{-a.s. on }
\{T_{\s(\alpha)}
(W_{\gamma(\cdot)})
\le T\}.
\end{equation}
In order to investigate when \eqref{eq:040225a5} holds, we recall from
\cite[Lemma 4.1]{MU_12_ECP} that
\begin{itemize}
\item[(a)]
if
$$
\int_I (x-\s(\alpha))\varphi^2(x)\vd x<\infty
$$
for some (equivalently, for every) open interval $I\subsetneq\s(J^\circ)$ with $\s(\alpha)$ as endpoint, then
$$
\int_0^{T_{\s(\alpha)}(W)}
\varphi^2(W_t)\vd t
<\infty\quad\overline\P\text{-a.s.};
$$

\item[(b)]
if
$$
\int_I (x-\s(\alpha))\varphi^2(x)\vd x=\infty
$$
for some (equivalently, for every) open interval $I\subsetneq\s(J^\circ)$ with $\s(\alpha)$ as endpoint, then
$$
\int_0^{T_{\s(\alpha)}(W)}
\varphi^2(W_t)\vd t
=\infty\quad\overline\P\text{-a.s.}
$$
\end{itemize}
Together with the fact that, by~\eqref{eq:040225a2} and \cite[Theorem 1.1]{BrugRuf16},
$$
\overline\P(
T_{\s(\alpha)}
(W_{\gamma(\cdot)})
\le T
)
=
\P(
T_{\s(\alpha)}(U)\le T
)>0,
$$
we can, finally, conclude that \eqref{eq:040225a5} is equivalent to condition~(v) from the formulation of the theorem.
This completes the proof.
\end{proof}

\begin{remark} \label{rem: NSA NUPBR (2)}
    Theorems \ref{theo: main_NSA} and~\ref{theo: main_NUPBR} recover the observation from Remark~\ref{rem: NSA NUPBR proof}. Indeed, if \(\Y\) has no absorbing boundary points, (v) is an empty condition and NSA and NUBPR are equivalent.
    In fact, under (i)--(iv), (v) fails precisely in the situation where~\eqref{eq:200125a2} holds while
    \[
    \int_0^T \theta^2_s \vd \langle M, M\rangle_s = \infty
    \]
    with positive \(\P\)-probability. 
\end{remark}

As in the previous sections, we present a corollary for the zero interest rate regime.

\begin{corollary}\label{coro: NUPBR r = 0}
For $r=0$ the following are equivalent:
\begin{enumerate}
\item[\textup{(i)}]
NUPBR holds.

\item[\textup{(ii)}]
The scale function $\s$ is a $C^1$-function on $J^\circ$ with a strictly positive (locally) absolutely continuous derivative,
$\s''/\s'\in L^2_\textup{loc}(J^\circ)$, and
every accessible for $Y$ boundary point $b\in J\setminus J^\circ$ is absorbing
and such that
$$
\int_I | x - b | (\s''/\s')^2 (x) \vd x < \infty
$$
for some (equivalently, for every)
open interval \(I \subsetneq J^\circ\) with \(b\) as endpoint.

\item[\textup{(iii)}]
The scale function $\s$ admits the representation~\eqref{eq:220125a3}
with some $\mu\in L^2_\textup{loc}(J^\circ)$, and
every accessible for $Y$ boundary point $b\in J\setminus J^\circ$ is absorbing
and such that
\begin{equation}\label{eq:050225a1}
\int_I | x - b | \mu^2 (x) \vd x < \infty
\end{equation}
for some (equivalently, for every)
open interval \(I \subsetneq J^\circ\) with \(b\) as endpoint.

\item[\textup{(iv)}]
The inverse scale function $\q$ is a $C^1$-function on $\s(J^\circ)$ with a strictly positive (locally) absolutely continuous derivative,
$\q''/\q'\in L^2_\textup{loc}(\s(J^\circ))$, and
every accessible for $Y$ boundary point $b\in J\setminus J^\circ$ is absorbing
and such that
$$
\int_I | x - \s(b) | (\q''/\q')^2 (x) \vd x < \infty
$$
for some (equivalently, for every)
open interval \(I \subsetneq \s(J^\circ)\) with \(\s(b)\) as endpoint.

\item[\textup{(v)}]
The inverse scale function $\q$ admits the representation~\eqref{eq:220125a4}
with some $\nu\in L^2_\textup{loc}(\s(J^\circ))$, and
every accessible for $Y$ boundary point $b\in J\setminus J^\circ$ is absorbing
and such that
\begin{equation}\label{eq:050225a2}
\int_I | x - \s(b) | \nu^2 (x) \vd x < \infty
\end{equation}
for some (equivalently, for every)
open interval \(I \subsetneq \s(J^\circ)\) with \(\s(b)\) as endpoint.
\end{enumerate}
\end{corollary}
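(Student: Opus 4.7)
My plan is to apply Theorem~\ref{theo: main_NUPBR} in the regime $r=0$ and combine it with the equivalences already established in Corollary~\ref{coro: NSA r = 0}, and then to transfer the integral conditions between the $\s$- and $\q$-formulations via their duality. Since NUPBR implies NSA, Corollary~\ref{coro: NSA r = 0} already forces the scale function $\s$ (equivalently, $\q$) to be $C^1$ with strictly positive, locally absolutely continuous derivative, so in particular $\q'>0$ everywhere on $\s(J^\circ)$. In consequence, the indicator $\1_{\{\q'\ne0\}}$ in the definition of $\varphi$ becomes superfluous and $\varphi=\q''/(2\q')$ holds $\llambda$-a.e. Writing $\nu\triangleq\q''/\q'$, condition~(v) of Theorem~\ref{theo: main_NUPBR} reads $\int_I|x-\s(b)|\nu^2(x)\vd x<\infty$ for every accessible boundary $b$, and Corollary~\ref{coro: NSA r = 0} ensures that every such $b$ is necessarily absorbing. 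This would yield the equivalence (i)$\Longleftrightarrow$(v).

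Next, I would establish (v)$\Longleftrightarrow$(iv) and (ii)$\Longleftrightarrow$(iii) by repeating, essentially verbatim, the Carath\'eodory-type argument from the proof of Corollary~\ref{coro: NSA r = 0}: under (iv) one defines $\nu=\q''/\q'\in L^2_\textup{loc}(\s(J^\circ))$ and invokes \cite[Theorem~XVIII, p.~121]{walter} to solve the equation $\q''=\nu\q'$ (in the sense of Carath\'eodory), recovering the exponential representation of $\q'$; the absorbing-boundary integral condition transfers unchanged since it is expressed directly in terms of $\nu$ on both sides. The symmetric argument yields (ii)$\Longleftrightarrow$(iii) via $\mu=\s''/\s'$.

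The main obstacle lies in establishing (ii)$\Longleftrightarrow$(iv) (equivalently, (iii)$\Longleftrightarrow$(v)), i.e., in matching the integral conditions on $\mu$ and~$\nu$. The regularity part works exactly as in Corollary~\ref{coro: NSA r = 0} via differentiating $\q'(\s(x))\s'(x)=1$ to obtain $\nu(\s(x))=-\mu(x)/\s'(x)$ for $\llambda$-a.a.~$x\in J^\circ$. To transfer the integral condition, I would change variables $y=\s(x)$ in the integral from~(v), obtaining
\[
\int_I |y-\s(b)|\,\nu^2(y)\vd y
=
\int_{\q(I)} |\s(x)-\s(b)|\,\frac{\mu^2(x)}{\s'(x)}\vd x,
\]
and compare this with $\int_{\q(I)}|x-b|\mu^2(x)\vd x$. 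The hard part will be controlling the ratio $(\s(x)-\s(b))/(\s'(x)|x-b|)$ near the accessible boundary $b$, where $\s'$ might a priori tend to $0$ or~$\infty$. I expect that combining the local absolute continuity of~$\s'$, the finiteness $|\s(b)|<\infty$ from accessibility, and a careful mean-value-type estimate $\s(x)-\s(b)=\int_b^x \s'(z)\vd z$ together with the representation $\s'(z)/\s'(x)=\exp\{\int_z^x\mu\}$, will yield a two-sided bound on this ratio on small neighborhoods of~$b$, matching the two integral conditions and completing the chain of equivalences.
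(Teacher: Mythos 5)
Your route coincides with the paper's own proof in all but the last step: (i)$\Longleftrightarrow$(v) is obtained exactly as in the paper by combining Theorem~\ref{theo: main_NUPBR} for $r=0$ with Corollary~\ref{coro: NSA r = 0} (so that $\q'>0$, $\varphi=\tfrac12\,\q''/\q'$, and condition~(v) of Theorem~\ref{theo: main_NUPBR} becomes~\eqref{eq:050225a2}); the equivalences (ii)$\Longleftrightarrow$(iii) and (iv)$\Longleftrightarrow$(v) via the Carath\'eodory uniqueness argument are the paper's ``straightforward'' step; and your substitution $y=\s(x)$ together with $\nu(\s(x))=-\mu(x)/\s'(x)$ reproduces verbatim the paper's computation~\eqref{eq:050225a6}, reducing everything to the comparison of $\int_I |\s(y)-\s(b)|\,\mu^2(y)/\s'(y)\vd y$ with $\int_I|y-b|\,\mu^2(y)\vd y$.

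At that final comparison there is a genuine gap. You propose to conclude via a two-sided bound on the ratio $|\s(x)-\s(b)|/\bigl(\s'(x)\,|x-b|\bigr)$ near $b$, derived only from $\mu\in L^2_\textup{loc}(J^\circ)$, $|\s(b)|<\infty$ and the exponential representation of $\s'$. No such unconditional bound exists: with $b=0$ a left endpoint, let $\s'\equiv1$ except for smooth dips down to values $\epsilon_n\searrow0$ on intervals of width $\delta_n\ll 2^{-n}$ around the points $2^{-n}$; then $\mu=(\log\s')'$ is locally square integrable on the open interval and $\s(0+)$ is finite, yet at the bottom of the $n$-th dip the ratio is of order $1/\epsilon_n\to\infty$. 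What is true is a \emph{conditional} estimate, where each implication direction uses its own finiteness hypothesis: for instance, if $\int_I|w-b|\,\mu^2(w)\vd w<\infty$, then Cauchy--Schwarz gives, for $b<z<x$ in $I$,
\[
\Big|\int_z^x\mu(w)\vd w\Big|
\le
\Big(\int_I|w-b|\,\mu^2(w)\vd w\Big)^{1/2}
\Big(\log\frac{x-b}{z-b}\Big)^{1/2},
\]
so $\s'(z)/\s'(x)=\exp\{-\int_z^x\mu(w)\vd w\}$ grows subpolynomially in $(x-b)/(z-b)$, which makes $|\s(x)-\s(b)|/\bigl(\s'(x)\,|x-b|\bigr)$ bounded on $I$ and yields the finiteness of the other integral; the converse direction is symmetric, using the weight $(\s(w)-\s(b))/\s'(w)$ in the Cauchy--Schwarz step (and if both integrals are infinite there is nothing to prove). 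The paper sidesteps this entirely by invoking the equivalences (24)$\,\Longleftrightarrow\,$(25) and (26)$\,\Longleftrightarrow\,$(27) of \cite{MU12_PTRF}, which are precisely this statement; either cite that result or carry out the conditional argument above --- the unconditional neighborhood bound your sketch anticipates is false.
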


It is worth observing that Corollary~\ref{coro: NUPBR r = 0} recovers \cite[Theorem~3.10]{CU23}
(more precisely, the latter is the equivalence (i) $\Longleftrightarrow$ (iii)).
We highlight that the proofs in the present paper are quite different from those in \cite{CU23}.
In particular, we now have a second proof for \cite[Theorem~3.10]{CU23} that does not rely on the concept of separating times for general diffusions (\cite{CU22}) that was used in \cite{CU23} in a crucial manner.

\begin{proof}
The equivalence (i) $\Longleftrightarrow$ (v) is easily built on the corresponding equivalence in Corollary~\ref{coro: NSA r = 0}.
We only need to add the obvious argument relating claim~(v) in Theorem~\ref{theo: main_NUPBR} with~\eqref{eq:050225a2}.
The equivalences
(ii) $\Longleftrightarrow$ (iii)
and
(iv) $\Longleftrightarrow$ (v)
are straightforward.
It remains to prove the equivalence
(iii) $\Longleftrightarrow$ (v).
The equivalence between
(iii) without~\eqref{eq:050225a1}
and
(v) without~\eqref{eq:050225a2}
follows from Corollary~\ref{coro: NSA r = 0}.
In the following, we thus assume that
(iii) without~\eqref{eq:050225a1}
(equivalently, (v) without~\eqref{eq:050225a2})
is satisfied.
In particular, we can use the functions $\mu$ and $\nu$ from
\eqref{eq:220125a3} and~\eqref{eq:220125a4}.
Let $b\in J\setminus J^\circ$ be an accessible boundary point for $Y$.
The aim is to prove that
\eqref{eq:050225a1} is equivalent to~\eqref{eq:050225a2}.
To this end, we recall the formulas
\begin{align}
\q'(x)
&=
\frac1{\s'(\q(x))}\quad\text{for all }x\in\s(J^\circ),
\label{eq:050225a4}\\[1mm]
\nu(x)
&=
-\mu(\q(x))\q'(x)\quad\text{for }\llambda\text{-a.a. }x\in\s(J^\circ)
\label{eq:050225a5}
\end{align}
(cf.~\eqref{eq:050225a3}), take an open interval $I\subsetneq J^\circ$ with $b$ as endpoint and compute
\begin{align}
\int_{\s(I)}
|x-\s(b)|\nu^2(x)\vd x
&=
\int_{\s(I)}
|x-\s(b)|\mu^2(\q(x))(\q'(x))^2\vd x
\label{eq:050225a6}\\[1mm]
&=
\int_I
\frac{|\s(y)-\s(b)|\mu^2(y)}{\s'(y)}\vd y,
\notag
\end{align}
where we used \eqref{eq:050225a4}--\eqref{eq:050225a5} together with the substitution $x=\s(y)$, $y\in I$.
It remains to notice that,
under the condition
$\mu\in L^2_\textup{loc}(J^\circ)$ (which is assumed),
we have the equivalence
$$
\int_I
\frac{|\s(y)-\s(b)|\mu^2(y)}{\s'(y)}\vd y<\infty
\;\;\Longleftrightarrow\;\;
\int_I
|y-b|\mu^2(y)\vd y<\infty
$$
(see the equivalencies (24) $\Longleftrightarrow$ (25) and (26) $\Longleftrightarrow$ (27) from \cite{MU12_PTRF}).
Together with~\eqref{eq:050225a6},
this yields the required equivalence
\eqref{eq:050225a1} $\Longleftrightarrow$ \eqref{eq:050225a2}.
This concludes the proof.
\end{proof}

\section{Discussions and Examples} \label{sec_DE}
In this section, (i) we discuss the influence of the no-arbitrage notions NIP, NSA and NUPBR on the boundary behavior of the diffusion \(Y\), (ii) we highlight their influence on the regularity of the scale function and (iii) we take a particular look at interrelations. 
Our findings reveal several surprising observations that contrast with existing results in the literature.
Specifically, we show that, even in the case where $r=0 $, the presence of reflecting boundaries does not necessarily rule out NIP. Also, we show that skewness and stickiness can ``cancel'' each other, allowing even the strongest of the three no-arbitrage conditions, NUPBR, to hold. 
This latter effect is observed only in non-zero interest rate environments.

\subsection{Influence of Reflecting Boundaries}\label{sec: examples reflection}

In this section we answer the question to which degree the presence of reflecting boundaries entails certain arbitrage notions. 
As we will explain in the following, the presence of non-zero interest rates has a major influence on the answer to this question.

\subsubsection{The Zero Interest Rate Regime}
Let us start with a discussion of the zero interest rate regime, i.e., the case \(r = 0\). For this setting, it was proved in \cite{CU23} that NUPBR cannot hold in the presence of reflecting boundaries, see also Corollary~\ref{coro: NUPBR r = 0} for a different proof. Furthermore, Corollary~\ref{coro: NSA r = 0} shows that even the weaker NSA condition forces the absence of reflecting boundaries
(also see Discussion~\ref{disc:040225a2}~(v)).
Similar observations for different no-arbitrage concepts were also made in~\cite{BDH23,MelWan21}. In particular, the paper \cite{BDH23} shows that even the weaker NIP condition fails for a geometric Brownian motion model with reflection (at a strictly positive reflecting boundary).
All these results might lead to the impression that the absence of reflecting boundaries is a necessary structural requirement
even for very weak forms of absence of arbitrage.
The following example makes the surprising observation that reflecting boundaries do not exclude the weak notion NIP, which adds a new layer to the picture drawn in \cite{BDH23}.

\begin{example}
\label{ex: bessel}
		Take \(r = 0\) and suppose that \(\Y\) is a square Bessel process of dimension \(\delta \in (0, 2)\) with starting value \(x_0 > 0\), i.e., a solution process to the stochastic differential equation 
		\begin{align*}
			\rd \Y_t = 2 \sqrt{|\Y_t|} \vd W_t + \delta \vd t, \quad \Y_0 = x_0, 
		\end{align*}
		where \(W\) is a standard Brownian motion. 
		It is well-known that \(\Y\) is a (general) diffusion with state space \(\bR_+\) for which the origin is instantaneously reflecting (cf. \cite[Proposition~XI.1.5]{RevYor}).
        In particular,
		\(
		\1_{\{\Y_t = 0\}} \vd t= 0
		\)
		(see also \cite[V.48.6 (ii)]{RogWilV2})
		and consequently,
		\[
		\rd \Y_t = 2 \sqrt{\Y_t} \vd W_t + \delta \1_{\{\Y_t \not = 0\}} \vd t, \quad \Y_0 = x_0.
		\]
		Now, it is easy to see that 
		\[
		\theta \triangleq \frac{\delta \1_{\{\Y \not = 0\}}}{4 \Y}
		\]
		is an IMPR and the NIP condition holds by Theorem~\ref{theo: FTAP_NIP}.
		
		This observation is in line with Theorem~\ref{theo: main_NIP}. Indeed,
the scale function \(\s\) associated to \(\Y\) is given by 
$$
\s (x) = x^{1 - \delta / 2}, \quad x\in\bR_+,
$$
		cf. \cite[V.48.6 (iii)]{RogWilV2}.
		Clearly, we have
		\begin{align*}
			\q (x) = x^{1 / (1 - \delta / 2)}, \quad x \in \s (\bR_+) = \bR_+.
		\end{align*} 
As $1/(1-\delta/2)>1$, it follows that \(\q' (\s (0)) = \q' (0) = 0\), i.e.,
(i) of Corollary~\ref{coro: NIP r = 0} is satisfied.
Obviously, (ii) of Corollary~\ref{coro: NIP r = 0} holds as well and, as a consequence, NIP holds.
	\end{example}

    \subsubsection{The Non-Zero Interest Rate Regime}
    Next, we discuss what changes in the presence of non-zero interest rates, i.e., in the case \(r \ne 0\). Surprisingly, it turns out that in this case even NUPBR might hold in the presence of reflecting boundaries. This situation is illustrated by the following example. 

    \begin{example} \label{ex: NUBPR ref interest rate}
    Assume that \(\Y\) is a Brownian motion with (instantaneous or sticky) reflection at one and a starting value \(x_0 > 1\).
More precisely, we consider the case \(J = [1, \infty)\) and
			\[
			\s (x) = x, \quad \m (\rd x) = \rd x + \rho \delta_{1} (\rd x)
			\]
with a parameter \(\rho \ge 0\).
Notice that, if $\rho=0$, then the reflection in $1$ is instantaneous, while in the case $\rho>0$, the boundary $1$ is sticky reflecting.
			We observe that 
			\begin{align*}
				2r\rho = 1 \quad \Longleftrightarrow \quad \textup{NIP} \quad \Longleftrightarrow \quad \textup{NSA} \quad \Longleftrightarrow \quad \textup{NUPBR}.
			\end{align*}
			Indeed, (i.b) from Theorem~\ref{theo: main_NIP} is precisely the equality \(2r \rho = 1\), and (ii) and (iii) hold trivially. This shows the first equivalence.
            The function $\varphi$ from Theorem~\ref{theo: main_NSA} is given by the formula
            \(\varphi (x) = - r x\), \(x \geq 1\).
            Therefore, the additional (local) integrability requirements from the Theorems~\ref{theo: main_NSA} and~\ref{theo: main_NUPBR} clearly hold as well.
            This proves that \(2 r \rho = 1\) implies even the strongest notion NUPBR, which completes the picture.
    \end{example}
    
Example~\ref{ex: NUBPR ref interest rate} is different from Example~\ref{ex: bessel}, as it requires \(r \neq 0\) for NIP, NSA and NUPBR to hold (with an appropriately chosen value for the stickiness parameter $\rho$).
For the case $r=0$, Example~\ref{ex: NUBPR ref interest rate} also shows that NIP fails for the Bachelier model with reflection (regardless of whether instantaneous or sticky).
For this model, the non-existence of an equivalent martingale measure was explained in~\cite{MelWan21}.
Our discussion sharpens this observation in the sense that even NIP fails.

\subsubsection{Summary}
The following table gives an overview on the observations discussed in Section~\ref{sec: examples reflection}.
\begin{center}
\begin{tabular}{c| c | c | c}
& NIP & NSA & NUPBR
\\ 
\hline 
\(r = 0\) & possible to have & no reflecting & no reflecting
\\
& reflecting boundaries & boundaries possible & boundaries possible
\\
\hline 
\(r \not = 0\) & possible to have & possible to have & possible to have 
\\ 
& reflecting boundaries & reflecting boundaries & reflecting boundaries
\end{tabular} 
\end{center}

\subsection{Implications for the Regularity of the Scale Function}\label{sec: examples regularity}

In this section we want to discuss the question how much regularity of the scale function is entailed by the notions NIP, NSA and NUPBR. As in Section~\ref{sec: examples reflection}, we distinguish between the zero and non-zero interest rate regimes.

\subsubsection{The Zero Interest Rate Regime}
Let us start by a discussion of the zero interest rate regime, i.e., the case \(r = 0\).
One interesting observation from the paper \cite{CU23} is that NUPBR forces the scale function \(\s\) to be continuously differentiable with locally absolutely continuous derivative, which means that it is of the same form as the scale function of a diffusion that is given through a stochastic differential equation. An improvement of this observation is provided by Corollary~\ref{coro: NSA r = 0}, which shows that even the weaker NSA condition entails such a structure of the scale function. However, as the following example illustrates, this is not the case for the weaker NIP condition.

\begin{example}\label{ex: x^3}
Let \(W\) be a one-dimensional Brownian motion starting at \(x_0 \not = 0\) and consider the diffusion semimartingale \(Y \triangleq W^3\). In this situation, the scale function is given by 
\[
\s (x) = \begin{cases}
    - (- x)^{1/3}, & x \leq 0, \\ x^{1/3}, & x \geq 0, 
\end{cases}
\]
which is not continuously differentiable on $\bR$
(even not the difference of two convex functions on~$\bR$),
as the right (and also the left) derivative in zero explodes. 

In this case, we have $\q(x)=x^3$ for $x\in\bR$ and 
Corollary~\ref{coro: NIP r = 0} immediately implies that NIP holds,
while Corollary~\ref{coro: NSA r = 0} yields that NSA, hence also NUPBR, are violated
(the function $\q'$ is \emph{not} strictly positive,
cf. Corollary~\ref{coro: NSA r = 0} (iv) or~(v)).

It is also instructive to show this via methods from stochastic calculus.
It\^o's formula yields that 
\[
\rd Y_t = 3 W^2_t \vd W_t + 3 W_t \vd t.
\]
Now, it is easy to see that \(\theta \triangleq \1_{\{W \ne 0\}} / (3 W^3)\) is an IMPR, which implies that NIP holds by Theorem~\ref{theo: FTAP_NIP}.
In this example, we have
$\theta_s^2\vd\langle M,M\rangle_s=(\1_{\{W_s\ne0\}}/W_s^2)\vd s$.
By \cite[Theorem 2.6]{MU_12_ECP} and strong Markov property of Brownian motion,
\[
\P \Big( \int_{T_0 (W)}^{T_0 (W) + \varepsilon} \frac{\1_{\{W_s\ne0\}}}{W^2_s} \vd s = \infty, \, \forall \, \varepsilon > 0 \Big) = 1.
\]
As a consequence, \(\P\)-a.s. on \(\{T_0 (W) < T\}\)
\[
\inf \Big\{ t \in [0, T) \colon \int_t^{t + h} \theta^2_s \vd \langle M, M\rangle_s = \infty, \, \forall \, h \in (0, T - t] \Big\} \leq T_0 (W) < T.
\]
As \(\P (T_0 (W) < T) > 0\),
NSA (and, consequently, also NUPBR) fails by Theorem~\ref{theo: FTAP_NSA}.

We, finally, discuss the point mentioned in the end of Remark~\ref{rem:090325a1}~(ii),
i.e.,
we now prove that \eqref{eq:090325a1} holds (notice that we already observed that \eqref{eq:200125a2} is violated).
In this example, the mean-variance tradeoff process $K=(K_t)_{t\in[0,T]}$ is given by the formula
$$
K_t=\int_0^\cdot \frac{\1_{\{W_s\ne0\}}}{W_s^2}\vd s.
$$
By \cite[Lemma 4.1]{MU_12_ECP}, it holds
$$
\lim_{t\nearrow T_0(W)}K_t=\infty\quad\P\text{-a.s.},
$$
hence \eqref{eq:090325a1} is satisfied (that is, the mean-variance tradeoff process does not jump to infinity
but reaches infinity in a continuous way).
\end{example}

It is also interesting to compare
the Corollaries \ref{coro: NIP r = 0}, \ref{coro: NSA r = 0} and~\ref{coro: NUPBR r = 0}
from the viewpoint of regularity requirements on the inverse scale function $\q$.
In the following table we discuss only the regularity implied by the no-arbitrage notions
and do not mention requirements of other types
(integrability and boundary behavior):
\begin{center}
\begin{tabular}{c| c | c}
& NIP & NSA, NUPBR
\\ 
\hline 
regularity of $\q$ & $\q$ is $C^1$ on $\s(J^\circ)$ with a & $\q$ is $C^1$ on $\s(J^\circ)$ with a
\\
in the case $r=0$ & locally absolutely continuous $\q'$ & \emph{strictly positive}
\\
& & locally absolutely continuous $\q'$
\end{tabular} 
\end{center}
Given that $\q$ is always strictly increasing,
the difference between the second and the third columns
(namely, the words ``strictly positive'')
looks somewhat empty at first glance.
But Example~\ref{ex: x^3} shows that this difference really matters.

In Example~\ref{ex: x^3}, the set $\{\q'=0\}$ consists only of one point, which is already enough to violate NSA.
On the other hand, in the following example,
the set $\{\q'=0\}$ has a positive Lebesgue measure,
while NIP is still satisfied.
As mentioned in Discussion~\ref{disc:040225a1}~(iii), this also means that the RP fails for $Y$ from the following example.

\begin{example}\label{ex:060225a1}
The idea is taken from \cite[Lemma~2.1]{CU24}.
To make the example self-contained, we recall some details.

First, we take a closed set $F\subset[0,1]$ with empty interior such that $\llambda(F)>0$.
This could be a \emph{fat} Cantor set or, alternatively, one could construct such a set as follows (cf. \cite[Example~1.7.6]{bogachev}).
Let $\{q_n\colon n\in\mathbb N\}$ be an enumeration of all rational points in $[0,1]$.
Take $a\in(0,1)$ and a sequence $\{r_n\colon n\in\mathbb N\}$ such that
$\sum_{n=1}^\infty 2r_n\le a$.
It is easy to verify that
$F\triangleq[0,1]\setminus G$,
where $G\triangleq\bigcup_{n\in\mathbb N}(q_n-r_n,q_n+r_n)$,
satisfies the requirements.

Next, consider a one-dimensional Brownian motion $W$ starting at \(x_0\in\bR\) and set \(Y \triangleq \q(W)\), where
$$
\q (x) \triangleq \int_0^x d_F (z) \vd z,
\quad x \in \bR,
\quad d_F (z) \triangleq \inf_{y \in F} |z - y|.
$$
Notice that $\q$ is a $C^1$-function on $\bR$ with
\begin{equation}\label{eq:070225a1}
\{x\in\bR\colon\q'(x)=0\}=F
\end{equation}
(because $z\mapsto d_F (z)$ is continuous and $F$ is closed) and
$\q$ is strictly increasing (because $F$ is closed and does not contain any open interval).
Then $Y$ is a general diffusion with state space
$\q(\bR)=\bR$, scale function $\s\triangleq\q^{-1}$ and speed measure $\llambda\circ\q^{-1}$.
Furthermore, $\q'=d_F(\cdot)$ is Lipschitz continuous on $\bR$, hence absolutely continuous.
In particular, $\q'$ has a locally finite variation, showing that $\q$ is the difference of two convex functions on $\bR$.
Consequently, Standing Assumption~\ref{SA: semi + boundary} is satisfied.

Here, like in the previous example,
Corollary~\ref{coro: NIP r = 0} immediately implies that NIP holds,
while Corollary~\ref{coro: NSA r = 0} yields that NSA, hence also NUPBR, are violated.
In contrast to the previous example, by~\eqref{eq:070225a1}, we now have
$\llambda(\{\q'=0\})>0$.

This example can also be continued in the direction of even more regularity on $\q$, but still failure of NSA and NUPBR:
Namely, with a similar idea one can even construct a strictly increasing \(C^\infty\)-function $\q$ whose derivatives have a zero set of positive Lebesgue measure, cf. \cite[Remark~2.3]{CU24}.
\end{example}

\subsubsection{The Non-Zero Interest Rate Regime}

Next, we consider the non-zero interest rate regime, which turns out to be fundamentally different from its zero interest counterpart. Indeed, as the following example shows, even the strongest NUPBR condition does not force the scale function to be continuously differentiable. 

	\begin{example} \label{ex: sticky skew}
		Consider a sticky-skew Brownian motion model $Y$ that has state space \(J = \bR\), scale function \(\s\) and speed measure \(\m\) defined by
		$$
				\s(x) = (x-\xi) v_\kappa (x)
                \quad\text{and}\quad
				\m (\rd x) = \frac{\rd x}{v_\kappa (x)} + c\, \delta_{\xi} (\rd x),
		$$
		where 
		\(\xi \in \bR\) is the sticky-skew point, \(\kappa \in (0, 1)\) is the skewness parameter, \(c > 0\) is the stickiness parameter and 
        		\[
		v_\kappa (x) = \begin{cases} 1 - \kappa, & x > \xi, \\ \kappa, & x \leq \xi. \end{cases}
		\]
		We easily compute that
		\[
		\q (x) = \begin{cases} \xi + \frac{x}{1 - \kappa}, & x \geq 0, 
        \\ \xi + \frac{x}{\kappa}, & x < 0,\end{cases}
		\]
		and 
		\[
		\q_+' (x) = \begin{cases} \frac{1}{1 - \kappa}, & x \geq 0, 
        \\ \frac{1}{\kappa}, & x < 0.\end{cases} 
		\]
		Evidently, we also get that 
		\begin{align*}
			\q'' (\rd x) =  \frac{2 \kappa - 1}{(1 - \kappa) \kappa}\, \delta_0 (\rd x).
		\end{align*}
        Straightforward calculations yield
        $$
        \m^U(\rd x)=\frac{\rd x}{a_\kappa(x)}+c\,\delta_0(\rd x),
        $$
        where
        $$
        a_\kappa(x)=\begin{cases} (1 - \kappa)^2, & x > 0, \\ \kappa^2, & x \leq 0, \end{cases}
        $$
which implies
		\[
		\m_{\ac}^U (x) = \frac{1}{a_\kappa (x)}, \quad \m_{\on{si}}^U (\rd x) = c \, \delta_0 (\rd x).
		\] 
		It follows
        that (i) and~(iii) from Theorem~\ref{theo: main_NIP} always hold
        (part~(iii) holds because \(\{\q'_+ = 0\} = \emptyset\)) and (ii) holds if and only if 
		\begin{align} \label{eq: sticky skew condition}
			r \xi c = \frac{2 \kappa - 1}{2 \kappa (1 - \kappa)}.
		\end{align} 
		In summary, we conclude that NIP holds if and only if \eqref{eq: sticky skew condition} is satisfied. 
Moreover, as \(\varphi\) from Theorem~\ref{theo: main_NSA} is given by
\(
\varphi = - r \q \m^U_\ac/ \q'_+ \in L^2_\textup{loc} (\bR), 
\)
using also Theorems \ref{theo: main_NSA} and~\ref{theo: main_NUPBR}, we get that NUPBR \(\Longleftrightarrow\) NSA \(\Longleftrightarrow\) NIP \(\Longleftrightarrow\) \eqref{eq: sticky skew condition}.

		\smallskip 
		Condition~\eqref{eq: sticky skew condition} can also be established by stochastic calculus methods. To explain this,  in the following lemma, we recall that a sticky-skew Brownian motion can be realized via a stochastic equation with local time constraint. The existence and uniqueness parts are known (cf., e.g., \cite{WT21}). We provide the computations of scale and speed in Appendix~\ref{app: pf lemma}.
        
		\begin{lemma} \label{lem: sticky-skew BM}
		For any $x_0\in\bR$,
            the system 
			\begin{align*}
				\rd \widehat{\Y}_t &= \1_{\{\widehat{\Y}_t \not = \xi\}} \vd W_t + \frac{2 \kappa - 1}{2 \kappa} \, \rd L^\xi_t (\widehat{\Y}), \\
				\1_{\{\widehat{\Y}_t = \xi \}} \vd t&= c\, (1 - \kappa) \, \rd L^\xi_t (\widehat{\Y}) 
			\end{align*}
            with initial condition $\widehat Y_0=x_0$
			satisfies weak existence and uniqueness in law. Furthermore, the unique in law solution $\widehat Y$ is a general diffusion with scale function \(\s\) and speed measure \(\m\) as defined above.
		\end{lemma}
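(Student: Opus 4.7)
Weak existence and uniqueness in law for the sticky-skew system can be obtained from standard results on skew-sticky SDEs (see \cite{WT21} and references therein); the real work is to identify the scale function and the speed measure of the unique-in-law solution $\widehat \Y$.

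For the scale function, the plan is to apply the generalized It\^o formula (in the form pairing the right semimartingale local time with the left derivative, as in \cite[Theorem~IV.45.1]{RogWilV2}) to $\s(\widehat \Y)$ with $\s(x)=(x-\xi)v_\kappa(x)$, and show that $\s(\widehat \Y)$ is a continuous local martingale. Since $\s$ is piecewise linear with $\s'_-(\xi)=\kappa$ and $\s'_+(\xi)=1-\kappa$, its second derivative measure is $\s''(\rd x)=(1-2\kappa)\delta_\xi(\rd x)$, yielding the atomic contribution $\tfrac{1-2\kappa}{2}L^\xi_t(\widehat \Y)$ in the It\^o expansion. Plugging the defining SDE into $\int_0^t \s'_-(\widehat \Y_u)\,\rd\widehat \Y_u$ and using $\s'_-(\xi)=\kappa$ together with the fact that $L^\xi(\widehat \Y)$ is supported on $\{\widehat \Y=\xi\}$, the local-time contribution from this integral equals $\kappa\cdot\tfrac{2\kappa-1}{2\kappa}L^\xi_t(\widehat \Y)=\tfrac{2\kappa-1}{2}L^\xi_t(\widehat \Y)$. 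These two atomic terms cancel exactly, so $\s(\widehat \Y)$ is a continuous local martingale, identifying $\s$ as the scale function.

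For the speed measure, set $U\triangleq\s(\widehat \Y)$. From the It\^o expansion just obtained, $\rd\langle U\rangle_t=v_\kappa^2(\widehat \Y_t)\1_{\{\widehat \Y_t\ne\xi\}}\rd t=a_\kappa(U_t)\1_{\{U_t\ne 0\}}\rd t$. A short Tanaka computation applied to $\widehat \Y$ at level $\xi$ and to $U$ at level $0$ yields the kink-crossing identity $L^0_t(U)=(1-\kappa)L^\xi_t(\widehat \Y)$, so the sticky constraint $\1_{\{\widehat \Y_t=\xi\}}\rd t=c(1-\kappa)\rd L^\xi_t(\widehat \Y)$ translates into $\1_{\{U_t=0\}}\rd t=c\,\rd L^0_t(U)$. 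Splitting $\int_0^t h(U_s)\rd s$ into its contributions over $\{U\ne 0\}$ and $\{U=0\}$, and applying the semimartingale occupation time formula on the first part combined with the sticky constraint on the second, one obtains for every bounded Borel $h$:
\[
\int_0^t h(U_s)\,\rd s \;=\; \int_\bR h(x)\,L^x_t(U)\,\m^U(\rd x),\qquad \m^U(\rd x)=\frac{\rd x}{a_\kappa(x)}+c\,\delta_0(\rd x).
\]
Since a regular continuous strong Markov process on natural scale is characterized by exactly this occupation-time relation (see \cite[Theorem~136]{freedman}), this identifies the speed measure of $U$; transporting back via $\s^{-1}$ recovers $\m$ as the speed measure of $\widehat \Y$.

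The main technical obstacle is the careful handling of the kink of $\s$ at $\xi$ in combination with the right local time convention: the cancellation of atomic terms in the It\^o expansion hinges on using the left derivative $\s'_-$ in the generalized It\^o formula (rather than $\s'_+$, which would produce a non-vanishing residue of order $(2\kappa-1)^2/(2\kappa)$ since $\widehat \Y$ is not a local martingale), and the identity $L^0(U)=(1-\kappa)L^\xi(\widehat \Y)$ transporting local times across the kink requires a careful Tanaka argument with the right local time. Once these technicalities are in place, the remaining identifications reduce to standard applications of the occupation time formula.
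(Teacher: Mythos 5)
Your proposal is correct, and the existence/uniqueness and scale-function parts coincide with the paper's proof: the paper also cites \cite[Theorem~2.1]{WT21} (noting that the construction there additionally yields that $\widehat\Y$ is a general diffusion, a fact you implicitly need later) and performs exactly your It\^o computation, pairing the left derivative $v_\kappa=\s'_-$ with the right local time so that $v_\kappa(\xi)\tfrac{2\kappa-1}{2\kappa}\,\rd L^\xi_t(\widehat\Y)$ cancels $\tfrac{1-2\kappa}{2}\,\rd L^\xi_t(\widehat\Y)$, concluding via \cite[Proposition~VII.3.5]{RevYor}. Where you genuinely diverge is the speed measure. The paper runs a martingale problem argument: it takes test functions $f\in C_b$ with $\rd f'_+=2g\,\rd\widehat\m$ for $\widehat\m=\m\circ\s^{-1}$, expands $f(\s(\widehat\Y))$ via the generalized It\^o formula, the local-time transfer $\rd L^{\s(x)}_t(\s(\widehat\Y))=\s'_+(x)\,\rd L^x_t(\widehat\Y)$ (\cite[Exercise~VI.1.23]{RevYor} — the same identity you derive by Tanaka, with $\s'_+(\xi)=1-\kappa$ giving your $L^0(U)=(1-\kappa)L^\xi(\widehat\Y)$), the occupation time formula and both defining equations, and shows that $f(\s(\widehat\Y))-\int_0^\cdot g(\s(\widehat\Y_u))\,\rd u$ is a local martingale; it then identifies $\widehat\m$ as the speed measure of $U=\s(\widehat\Y)$ via \cite[Theorem~B.3]{CU22} and \cite[Theorem~75, p.~131]{freedman}. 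Your route instead establishes the occupation-time identity $\int_0^t h(U_s)\,\rd s=\int h(x)L^x_t(U)\,\m^U(\rd x)$ directly, by splitting over $\{U\ne0\}$ (where $\rd\langle U\rangle_t=a_\kappa(U_t)\1_{\{U_t\ne0\}}\rd t$ and the occupation time formula apply) and $\{U=0\}$ (where the translated sticky constraint $\1_{\{U_t=0\}}\rd t=c\,\rd L^0_t(U)$ applies). This is shorter and more elementary, and your computations check out, including the parenthetical residue $(2\kappa-1)^2/(2\kappa)$ that would arise from mispairing $\s'_+$ with the right local time. One small caveat: \cite[Theorem~136, p.~160]{freedman} is stated as the \emph{forward} direction (the occupation-time formula holds with the true speed measure), so invoking it as a characterization requires the brief converse step — if two measures both satisfy the identity, then $\int h(x)L^x_t(U)\,(\m^U-\m^{U,\mathrm{true}})(\rd x)=0$ for all $t$ and $h$, and a.s. strict positivity of $x\mapsto L^x_t(U)$ between the running minimum and maximum forces the measures to agree. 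This is standard (the paper uses the same extraction device in the necessity proof of its NIP theorem), but you should state it rather than fold it into the citation; the paper's martingale-problem detour exists precisely to outsource this identification to ready-made results.
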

        
		Given the dynamics of \(\widehat{\Y}\) from Lemma~\ref{lem: sticky-skew BM}, we observe that 
		\[
		\rd ( e^{-rt} \widehat{\Y}_t ) = e^{-rt} \Big( \1_{\{\widehat{\Y}_t \not = \xi\}} \vd W_t - r \widehat{\Y}_t \1_{\{\widehat{\Y}_t \not = \xi\}} \vd t+ \Big( \frac{2 \kappa - 1}{2 \kappa c (1 - \kappa)}  - r \xi \Big) \, \1_{\{\widehat{Y}_t = \xi \}} \vd t\Big).
		\] 
		It is not hard to conclude from this formula that, for \(S_t \equiv e^{-rt}\widehat{Y}_t\), the characterization of NIP from Theorem~\ref{theo: FTAP_NIP} holds if and only if the \(\1_{\{\widehat{Y}_t = \xi\}} \vd t\) part vanishes, which is precisely the case when \eqref{eq: sticky skew condition} is satisfied. 
	\end{example}

\subsubsection{Summary}
The following table summarizes the most important observations from Section~\ref{sec: examples regularity}.
\begin{center}
\begin{tabular}{c| c | c | c}
& NIP & NSA & NUPBR
\\ 
\hline 
\(r = 0\) & scale function possibly & scale function & scale function
\\
& less regular than \(C^1\) & at least \(C^1\) & at least \(C^1\)
\\
\hline 
\(r \not = 0\) & scale function possibly & scale function possibly & scale function possibly
\\ 
& less regular than \(C^1\) & less regular than \(C^1\) & less regular than \(C^1\)
\end{tabular} 
\end{center}

\subsection{Discussion of Relations between NIP, NSA and NUPBR} \label{sec: discussion}

As mentioned at the end of Section~\ref{sec: NIP, NSA, NUPBR intro}, the following general implications hold:
    \begin{align} \label{eq: connection notions}
	\textup{NUPBR} \implies \textup{NSA} \implies \textup{NIP}.
\end{align}
Two following corollaries of our main results refine this picture in our general diffusion framework. 

\begin{corollary}\label{cor:070225a1}
For our general diffusion setting, all implications in \eqref{eq: connection notions} are strict in the sense that, for each implication from \eqref{eq: connection notions} and any \(r \in \bR\), there exists a general diffusion \(Y\) such that the
converse
implication fails. 
\end{corollary}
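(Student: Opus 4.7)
The plan is to produce, for each $r\in\bR$, explicit general diffusions witnessing each strict implication, and then to verify the arbitrage properties via the deterministic criteria from Theorems~\ref{theo: main_NIP}, \ref{theo: main_NSA} and~\ref{theo: main_NUPBR}. Since those criteria depend only on $\s$, $\q$, $\m^\U$ and the derived function $\varphi$, the verification reduces in each case to a short direct computation.

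For the strictness of NSA $\implies$ NIP, I would reuse the cubed Brownian motion $\Y = W^3$ from Example~\ref{ex: x^3}, with $W_0\ne 0$, for arbitrary $r\in\bR$. Here $J=\bR$ has no accessible boundaries, $\q(x)=x^3$ on $\s(J^\circ)=\bR$, and $\U=\s(\Y)=W$ is Brownian motion with $\m^\U(\rd x)=\rd x$. Conditions \textup{(i)}--\textup{(iii)} of Theorem~\ref{theo: main_NIP} are trivial for every $r$ (no accessible boundaries; both $\q''_\si$ and $\m^\U_\si$ vanish; $\{\q'=0\}=\{0\}$ is a Lebesgue null set), so NIP holds. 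On the other hand,
$$
\varphi(x) = \frac{\tfrac12\q''(x) - r\q(x)\fmu(x)}{\q'(x)} = \frac{1}{x} - \frac{rx}{3}
$$
fails to lie in $L^2_\textup{loc}(\bR)$ because of the $1/x$ singularity at $0$, so condition \textup{(iv)} of Theorem~\ref{theo: main_NSA} is violated and NSA fails.

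For the strictness of NUPBR $\implies$ NSA, I would take $J=[0,\infty)$ with $0$ accessible and absorbing, $x_0>0$, and define the general diffusion by $\s(x)=x^2/2$ and $\m(\rd x)=x\,\rd x$, for arbitrary $r\in\bR$. Then $\q(y)=\sqrt{2y}$, $\m^\U(\rd y)=\rd y$, and $\U=\s(\Y)$ is a Brownian motion on $[0,\infty)$ absorbed at $0$, so $0$ is indeed accessible. Since the unique absorbing boundary is $b=0$, condition \textup{(i.a)} of Theorem~\ref{theo: main_NIP} holds for every $r$; conditions \textup{(ii)} and \textup{(iii)} are trivial as both singular parts vanish and $\q'>0$ on $(0,\infty)$. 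A direct computation yields
$$
\varphi(y) = -\frac{1}{4y} - 2ry,
$$
which lies in $L^2_\textup{loc}((0,\infty))$, establishing condition \textup{(iv)} of Theorem~\ref{theo: main_NSA} (the reflecting-boundary part is vacuous) and hence NSA; on the other hand, $\int_0^\varepsilon y\,\varphi^2(y)\,\rd y = \infty$ for every $\varepsilon>0$ due to the $1/y$ term in $y\varphi^2$, so condition \textup{(v)} of Theorem~\ref{theo: main_NUPBR} fails and NUPBR does not hold.

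The one step requiring more care than bookkeeping is the verification of Standing Assumption~\ref{SA: semi + boundary} in the second example, since there $\s'(0)=0$ and $\q'(0+)=\infty$. The function $\q(y)=\sqrt{2y}$ is concave on $(0,\infty)$ and hence equals minus a convex function, so it is trivially the difference of two convex functions on $\s(J^\circ)=(0,\infty)$. At the absorbing boundary $0$, with $\q''(\rd y)=-(2y)^{-3/2}\,\rd y$, the integrability condition $\int_{0+} y\,|\q''|(\rd y)<\infty$ from Lemma~\ref{lem: isf DC} reduces to the convergent integral $\int_0^\varepsilon y(2y)^{-3/2}\rd y<\infty$, which together with standard semimartingale characterizations of general diffusions (\cite{CinJacProSha80}) guarantees that $\Y=\sqrt{2\U}$ is a semimartingale and fits into our framework.
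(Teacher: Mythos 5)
Your proof is correct, and it reaches the conclusion by a genuinely different route than the paper. The paper settles both strict implications with a \emph{single} one-parameter family, the ``generalized square Bessel processes'' with $\s(x)=x^{-\nu}$, $\nu\in(-1,0)$, $\m(\rd x)=\frac{x^{\nu}}{4|\nu|}\vd x$ on $(0,\infty)$ and the boundary behavior at the origin left free via $\m(\{0\})\in[0,\infty]$: there NIP always holds, $\varphi(x)\approx\mathrm{const}/x$ as $x\searrow0$, and NSA holds precisely when $0$ is absorbing, so toggling $\m(\{0\})$ produces both counterexamples at once — NSA failing (in the non-absorbing case) through the reflecting-boundary clause of Theorem~\ref{theo: main_NSA}~(iv), and NUPBR failing (in the absorbing case) through Theorem~\ref{theo: main_NUPBR}~(v). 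You instead use two separate constructions: $Y=W^3$, where $\varphi(x)=\frac1x-\frac{rx}{3}$ has an \emph{interior} singularity violating $\varphi\in L^2_{\textup{loc}}(\bR)$ — a different failure mechanism for (iv), with no accessible boundaries at all — and the diffusion with $\s(x)=x^2/2$, $\m(\rd x)=x\vd x$ and absorbing origin, where $\varphi(y)=-\frac{1}{4y}-2ry$ passes (iv) but fails the weighted test (v). I checked your computations of $\varphi$ in both examples, of conditions (i)--(iii) of Theorem~\ref{theo: main_NIP} (placing the absorbing boundary at $b=0$ neatly makes (i.a) hold for every $r$), and of the divergence $\int_0^\varepsilon y\,\varphi^2(y)\vd y=\infty$ for all $r$; all are correct, and the paper itself notes in the proof of Corollary~\ref{cor:070225a2} that Example~\ref{ex: x^3} extends readily to $r\ne0$, which confirms your first example. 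What each approach buys: the paper's family is more economical and is directly reused for Corollary~\ref{cor:070225a2}, while your pair exhibits the two distinct ways condition (iv) can fail (interior versus boundary singularity of $\varphi$). One point to tighten: Lemma~\ref{lem: isf DC} as stated gives only the \emph{necessity} of $\int_{0+}y\,|\q''|(\rd y)<\infty$ for the semimartingale property; the sufficiency you invoke does hold, but it comes from the full characterization in \cite[Section~5]{CinJacProSha80} rather than from the lemma itself, so you should cite that directly (or verify by hand, via the occupation time formula and the linear decay in $y$ of the local time of the absorbed Brownian motion $U$ near $0$, that the finite variation part $\frac12\int L^x_t(U)\,\q''(\rd x)$ is a.s. finite).
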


\begin{proof}
We provide counterexamples. Consider a ``generalized square Bessel process'' \(Y\) of dimension \(\delta \in (0, 2)\), i.e., a general diffusion with state space \(J = \bR_+\), scale function 
\[
\s (x) = x^{- \nu}, \quad \nu \triangleq \frac{\delta}{2} - 1 \in (-1, 0), \quad x \in \bR_+, 
\]
and speed measure
\[
\m (\rd x) = \frac{x^\nu}{4 | \nu |}\vd x \text{ on } \mathcal{B} ((0, \infty)), \quad \m (\{0\}) \in [0, \infty].
\] 
We use the branding ``generalized square Bessel process'', because we also allow the origin to be sticky reflecting or absorbing, while it is instantaneously reflecting in the classical case, cf. \cite[Section~XI.1]{RevYor}.
Take an arbitrary $r\in\bR$.
It is routine to check that (i)--(iii) from Theorem~\ref{theo: main_NIP} are satisfied. Thus, NIP holds.
Furthermore, straightforward computations reveal that 
\begin{align*}
\varphi (x) 
&= \Big[ \, \frac{1}{2} \Big( \frac{1}{|\nu|} - 1 \Big) - \frac{r}{4 |\nu|}x^{\frac{1}{|\nu|}} \, \Big] \, \frac{1}{x} \approx \text{const}\, \frac{1}{x}, \quad x \searrow 0. 
\end{align*} 
Hence, by Theorem~\ref{theo: main_NSA}, NSA holds if and only if the origin is absorbing, i.e., \(\m (\{0\}) = \infty\). This shows that the implication NSA \(\implies\) NIP is strict.
Further, in the absorbing case \(\m (\{0\}) = \infty\), Theorem~\ref{theo: main_NUPBR} shows that NUPBR fails. Consequently, the implication NUPBR \(\implies\) NSA is also strict. 
\end{proof}

The following result sharpens the observations from Remarks \ref{rem: NSA NUPBR proof} and~\ref{rem: NSA NUPBR (2)}.

\begin{corollary}\label{cor:070225a2}
Suppose that each boundary of \(Y\) is either inaccessible or reflecting.
Then, for any $r\in\bR$, NUBPR \(\Longleftrightarrow\) NSA. The implication NSA \(\implies\) NIP remains strict
in the sense that, for any $r\in\bR$, there exists a general diffusion $Y$ such that the
converse
implication fails.
\end{corollary}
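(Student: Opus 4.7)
The plan is to read both claims directly off the deterministic characterizations in Theorems \ref{theo: main_NSA} and \ref{theo: main_NUPBR}, together with the generalized square Bessel example already constructed in the proof of Corollary \ref{cor:070225a1}. No new computations are needed beyond inspecting which conditions of those theorems can be vacuous, and which cannot, under the boundary hypothesis.

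For the equivalence NUPBR $\Longleftrightarrow$ NSA, the key observation is that, by the two theorems, NSA is characterized by (i)--(iii) of Theorem \ref{theo: main_NIP} together with (iv) of Theorem \ref{theo: main_NSA}, while NUPBR is characterized by those same conditions plus the additional condition~(v) of Theorem \ref{theo: main_NUPBR}. But (v) is formulated as a constraint \emph{only} at absorbing boundaries, and by hypothesis $Y$ has none. Hence (v) is vacuous and the two lists of conditions coincide, giving NUPBR $\Longleftrightarrow$ NSA. This formalizes the informal statements already made in Remarks \ref{rem: NSA NUPBR proof} and \ref{rem: NSA NUPBR (2)}.

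For the strictness of NSA $\Longrightarrow$ NIP, I will reuse the generalized square Bessel process of dimension $\delta\in(0,2)$ from the proof of Corollary \ref{cor:070225a1}, with the origin taken to be (instantaneously or sticky) reflecting, i.e. with $\m(\{0\}) \in [0,\infty)$. The right boundary $\infty$ is inaccessible and $0$ is reflecting, so the boundary hypothesis holds. The verification already carried out in that proof shows, for \emph{every} $r \in \bR$, that conditions (i)--(iii) of Theorem \ref{theo: main_NIP} are satisfied --- so NIP holds --- while $\varphi(x) \asymp 1/x$ as $x \searrow 0$, which fails local square-integrability near the origin, so condition~(iv) of Theorem \ref{theo: main_NSA} is violated and NSA fails. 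Once the main theorems are in place there is no serious obstacle; the only mildly delicate point is to confirm that $\infty$ remains inaccessible across the relevant parameter regimes, which is standard for square Bessel processes of dimension less than~$2$ (e.g. via Feller's test applied to the scale $\s(x)=x^{-\nu}$ and speed $\m$ above).
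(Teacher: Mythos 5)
Your proposal is correct and takes essentially the same route as the paper: the equivalence NUPBR \(\Longleftrightarrow\) NSA is read off Theorems \ref{theo: main_NSA} and \ref{theo: main_NUPBR} because condition (v) constrains only absorbing boundaries (hence is vacuous under the hypothesis), and strictness of NSA \(\implies\) NIP is witnessed by the generalized square Bessel process from the proof of Corollary \ref{cor:070225a1} with \(\m(\{0\})<\infty\), exactly as the paper does (the paper additionally notes Example \ref{ex: x^3} as an alternative). One minor imprecision in your wording: since \(\s(J^\circ)=(0,\infty)\) is open, \(\varphi \in L^2_{\textup{loc}}(\s(J^\circ))\) in fact \emph{holds} for this example; what is violated in condition (iv) of Theorem \ref{theo: main_NSA} is the separate clause requiring \(\int_I \varphi^2(x) \vd x<\infty\) on an open interval \(I\) with the reflecting boundary point \(\s(0)=0\) as endpoint, so your conclusion stands, and your worry about inaccessibility of \(\infty\) is settled even more simply by \(\s(\infty)=\infty\).
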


\begin{proof}
If all boundaries of \(Y\) are inaccessible or reflecting, then Theorems \ref{theo: main_NSA}
and~\ref{theo: main_NUPBR} entail the claimed equivalence NUBPR \(\Longleftrightarrow\) NSA.
The example from the proof of Corollary~\ref{cor:070225a1} with $\m(\{0\})<\infty$
(that is, $0$ is reflecting)
shows that the implication NSA \(\implies\) NIP remains strict.
Alternatively, this follows from Example~\ref{ex: x^3}
(notice that both boundaries are inaccessible there),
which extends readily to the case with non-zero interest rate. 
\end{proof}

	\appendix	
	\section{Compensation with Reflecting Boundaries} 
	\label{appendix_A: proof_lem: comp}
The following lemma is a restatement of \cite[V.47.23 (ii)]{RogWilV2}. We provide here a different proof based on a change of time argument that we think is quite instructive.

	\begin{lemma}
   Take $\l\in\bR$.
    Let \(U\) be a general diffusion (for a given right-continuous filtration~\(\mathbf{F}\)) on natural scale
    starting at some $x_0\ge\l$
    with state space \([\l, \infty)\) and reflecting boundary \(\l\). Then, the process 
		\[
		U - \frac{L^{\l} (U)}{2}
		\]
		is a local martingale. 
	\end{lemma}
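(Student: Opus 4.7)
My plan is to construct $U$ as a time-change of a reflected Brownian motion at level $\alpha$, deduce the desired identity for the reflected Brownian motion from its Skorokhod decomposition, and then transport both the local martingale property and the semimartingale local time through the time change.

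Concretely, on an auxiliary stochastic basis I would first take a Brownian motion $W$ and form the reflected Brownian motion $R$ at level $\alpha$ starting at $x_0$ via the Skorokhod equation $R_t = x_0 + W_t + K_t$, where $K$ is the unique continuous non-decreasing process, null at $0$, that grows only on $\{R = \alpha\}$ and forces $R \geq \alpha$. A standard Tanaka-type computation identifies $K$ with $\tfrac12 L^\alpha(R)$ in the semimartingale local time convention used in this paper, so that
\[
R - \tfrac12 L^\alpha(R) \;=\; x_0 + W
\]
is a Brownian motion, and hence a local martingale. Next, I would set $A_t \triangleq \int_{[\alpha,\infty)} L^x_t(R)\,\m^U(\rd x)$, put $\tau_t \triangleq \inf\{s\ge 0 : A_s > t\}$, and invoke the classical time-change representation of general diffusions on natural scale (cf. \cite[Theorem~33.9]{Kal21}, which is already used in the proof of Theorem~\ref{theo: main_NUPBR}) to conclude that $\widetilde U \triangleq R\circ\tau$ is a general diffusion on natural scale with the same scale function and speed measure, hence with the same law as $U$. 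The time-change theorem for continuous local martingales (cf. \cite[Proposition~V.1.5]{RevYor}) then yields that the process $R_{\tau_t} - \tfrac12 L^\alpha_{\tau_t}(R)$ is a local martingale with respect to the time-changed filtration $(\mathcal F_{\tau_t})$.

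The remaining task is to identify $L^\alpha_{\tau_\cdot}(R)$ with the semimartingale local time $L^\alpha(\widetilde U)$ of the time-changed diffusion, after which the desired property for $\widetilde U$ (and then for $U$ by equality in law together with the strong Markov property of $U$ with respect to $\mathbf F$) follows at once. This identification is in my view the only real obstacle and the step that makes the change-of-time argument instructive. It can be handled by using that $\rd L^\alpha(R)$ is carried by $\{R=\alpha\}$, so that the time-changed process is carried by $\{\widetilde U=\alpha\}$, and then verifying via the occupation times formula (or, alternatively, by applying Tanaka's formula to $\widetilde U$) that the transferred process indeed satisfies the defining properties of $L^\alpha(\widetilde U)$. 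The delicacy is that one must handle both the instantaneous case $\m^U(\{\alpha\})=0$ and the sticky case $\m^U(\{\alpha\})>0$ uniformly, since in the latter the inverse time change has a non-trivial contribution coming from the time spent by $\widetilde U$ at the boundary.
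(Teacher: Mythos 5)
Your proposal is correct and follows essentially the same route as the paper's proof in Appendix~\ref{appendix_A: proof_lem: comp}: realize the reflecting diffusion as a time-changed reflected Brownian motion, obtain $R-\tfrac12 L^{\alpha}(R)=x_0+W$ via a Tanaka/Skorokhod computation, push the local martingale property and the local time through the continuous time change, and transfer back to $U$ by equality in law together with the Markov property of $U$ w.r.t.~$\mathbf F$. The one step you flag as the real obstacle --- identifying $L^{\alpha}_{\tau_\cdot}(R)$ with $L^{\alpha}$ of the time-changed process --- is exactly the step the paper dispatches by citing the commutation of semimartingale local times with continuous time changes (\cite[Exercise~VI.1.27]{RevYor}), so your direct verification via the occupation time formula is sound but replaceable by that reference.
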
 
	
	\begin{proof}
		Let \(W = (W_t)_{t \geq 0}\) be a standard Brownian motion starting at \(x_0\) and set \(Z \triangleq \l + |W - \l|\).
       By $\m^U(\rd x)$ we denote the speed measure of $U$ (on $\mathcal B([\l,\infty))$)
        and by $\bfFW$ (resp., $\bfFZ$) the right-continuous filtration generated by $W$ (resp., by~$Z$). Notice that $\bfFZ\subset\bfFW$.
		By \cite[Theorem~16.56]{breiman}, the process 
		\[
		Z_L = (Z_{L_t})_{t \geq 0}, \quad L_t \triangleq \inf \Big\{ s \geq 0 \colon \int L^x_s (Z) \, \m^{\U} (\rd x) > t \Big\}, 
		\]
		has the same law as \(U\), when considered on the infinite time horizon.
        We observe that \(t \mapsto L_t\) is a time change w.r.t. $\bfFZ$, and we emphasize that it is a.s. continuous,
        see the discussion preceding \cite[Theorem~16.56]{breiman}.
        Tanaka's formula (\cite[Theorem~IV.43.3]{RogWilV2}) yields that 
		\begin{align*}
			d Z_{t} = \on{sgn} (W_t - \l) \vd W_t + \rd L^{\l}_t (W).
		\end{align*}
By \cite[Exercise~VI.1.17]{RevYor} together with continuity of the Brownian local time in the space variable, we have
$L_t^\l(W)=L_t^\l(Z)/2$.
Therefore,
$$
M\triangleq Z-\frac{L^\l(Z)}2
$$
is a continuous local martingale w.r.t. $\bfFW$,
hence also w.r.t. $\bfFZ$.
By \cite[Exercise~VI.1.27]{RevYor}, we have
$L^\l_{L_t}(Z)=L^\l_t(Z_L)$.
Now, \cite[Proposition~V.1.5]{RevYor} yields that the process
$$
M_L\equiv Z_L-\frac{L^\l(Z_L)}2
$$
is a continuous local martingale w.r.t. the time-changed filtration $\bfFZ_L$, hence also w.r.t. the (smaller) right-continuous filtration generated by the time-changed process $Z_L$,
        as it is adapted to the latter filtration and has continuous paths. Now, by virtue of \cite[Theorem~10.37]{jacod79}, as \(Z_L\) has the same law as \(U\), we can conclude that \(U - L^{\l} (U) / 2\) is a local martingale for its natural filtration. 
		Finally, \cite[Lemma~5.13]{CU23} yields that this local martingale property also transfers to the
        larger
        filtration~\(\mathbf{F}\).
        For this step, it is important that \(U\) is a Markov process w.r.t.~$\mathbf F$.
        The proof is complete.
	\end{proof}

	\section{Proof of Lemma~\protect\ref{lem: sticky-skew BM}}\label{app: pf lemma}
    
	For reader's convenience, we recall the statement of Lemma~\ref{lem: sticky-skew BM}.
	
	\begin{lemmaOhne}
    Let $x_0$ and $\xi$ be real numbers.
		The system 
		\begin{align}
			\rd \widehat{\Y}_t &= \1_{\{\widehat{\Y}_t \not = \xi\}} \rd W_t + \frac{2 \kappa - 1}{2 \kappa} \, \rd L^\xi_t (\widehat{\Y}),
            \label{eq:120325a1}\\
			\1_{\{\widehat{\Y}_t = \xi \}} \vd t&= c\, (1 - \kappa) \, \rd L^\xi_t (\widehat{\Y})
            \label{eq:120325a2}
		\end{align}
        with initial condition $\widehat Y_0=x_0$
		satisfies weak existence and uniqueness in law. Furthermore, the unique in law solution $\widehat Y$ is a general diffusion with scale function \(\s\) and speed measure \(\m\) that are given by 
		\[
		\s (x) = (x-\xi)v_\kappa(x)
        \quad\text{and}\quad
        \m (\rd x) = \frac{\rd x}{v_\kappa (x)} + c\, \delta_{\xi} (\rd x),
		\]
		where
		\[
		v_\kappa (x) = \begin{cases} 1 - \kappa, & x > \xi, \\ \kappa, & x \leq \xi. \end{cases} 
		\] 
	\end{lemmaOhne}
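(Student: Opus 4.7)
The existence and uniqueness parts are quoted from the literature (this is the only reason the lemma is proved in an appendix), so the plan is focused on identifying the scale function $\s$ and speed measure $\m$ of the unique-in-law solution $\widehat Y$.

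The first step is to verify the scale function by showing that $\s(\widehat Y)$ is a local martingale. Since $\s$ is piecewise linear with a kink at $\xi$, it is the difference of two convex functions on $\bR$ with left-hand derivative $\s'_-=v_\kappa$ (with the value $\kappa$ at $\xi$ itself) and second derivative measure $\s''(\rd x)=(1-2\kappa)\delta_\xi(\rd x)$. Applying the It\^o--Tanaka formula in the convention consistent with the right-continuous local time used throughout the paper, and splitting $\rd\widehat Y_s$ via~\eqref{eq:120325a1}, yields
\[
\s(\widehat Y_t)=\s(\widehat Y_0)+\int_0^t v_\kappa(\widehat Y_s)\1_{\{\widehat Y_s\ne\xi\}}\vd W_s+\Big(\kappa\cdot\tfrac{2\kappa-1}{2\kappa}+\tfrac{1-2\kappa}{2}\Big)L^\xi_t(\widehat Y).
\]
The coefficient of $L^\xi_t(\widehat Y)$ vanishes identically, showing that $\s(\widehat Y)$ is a continuous local martingale, hence $\s$ is the scale function. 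An immediate byproduct is the dynamics of $U\triangleq\s(\widehat Y)$ with $\rd\langle U\rangle_s=a_\kappa(U_s)\1_{\{U_s\ne0\}}\vd s$, where $a_\kappa$ is the function from Example~\ref{ex: sticky skew}.

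For the speed measure, it suffices to verify that the diffusion $U$ on natural scale satisfies the characterizing identity
\[
\int_0^t f(U_s)\vd s=\int f(y)\,L^y_t(U)\,\m^U(\rd y),\qquad f\ge0\text{ Borel},
\]
with the candidate $\m^U(\rd y)=\rd y/a_\kappa(y)+c\,\delta_0(\rd y)$, so that $\m=\m^U\circ\s$ takes the asserted form. Split the time integral according to whether $U_s=0$ or not. On $\{U_s\ne0\}$, the semimartingale occupation time formula combined with the quadratic variation from Step~1 produces the Lebesgue part $\rd y/a_\kappa(y)$. On $\{U_s=0\}$, the sticky constraint~\eqref{eq:120325a2} gives
\[
\int_0^t f(U_s)\1_{\{U_s=0\}}\vd s=f(0)\,c(1-\kappa)\,L^\xi_t(\widehat Y),
\]
which matches $f(0)\,c\,L^0_t(U)$ provided that $L^0_t(U)=(1-\kappa)\,L^\xi_t(\widehat Y)$. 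This last relation is obtained by applying It\^o--Tanaka to $x\mapsto|x-\xi|v_\kappa(x)=|\s(x)|$, computing the finite variation part as $(1-\kappa)L^\xi_t(\widehat Y)$ by the same kind of calculation as in Step~1, and comparing with the Doob--Meyer decomposition of $|U|$ coming from Tanaka applied directly to $U$; uniqueness of the decomposition yields the stated relation.

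The main obstacle is the careful bookkeeping of local-time conventions in the It\^o--Tanaka formulas: the choice between left- and right-hand derivatives must be synchronized with the right-continuous semimartingale local time adopted in this paper, otherwise spurious local-time terms remain and the cancellations fail. It is also instructive to observe that the coefficients $\tfrac{2\kappa-1}{2\kappa}$ in~\eqref{eq:120325a1} and $c(1-\kappa)$ in~\eqref{eq:120325a2} are precisely calibrated so that (i) the finite variation part of $\s(\widehat Y)$ vanishes and (ii) the factor $(1-\kappa)$ coming from the local-time identification is exactly absorbed, leaving $c\,\delta_0$ in $\m^U$ and thus $c\,\delta_\xi$ in $\m$ after pushforward through $\s^{-1}$.
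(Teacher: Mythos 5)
Your proposal is correct, and while the scale-function half is essentially the paper's argument (It\^o--Tanaka applied to \(\s\), with \(\s'_-=v_\kappa\) paired with the right-continuous local time, and the exact cancellation \(\kappa\cdot\tfrac{2\kappa-1}{2\kappa}+\tfrac{1-2\kappa}{2}=0\)), your identification of the speed measure takes a genuinely different route. The paper runs a martingale problem: it takes test functions \(f\) with \(\vd f'_+=2g\vd\widehat\m\), shows via the generalized It\^o formula, the occupation time formula and \eqref{eq:120325a1}--\eqref{eq:120325a2} that \(f(\s(\widehat Y))-\int_0^\cdot g(\s(\widehat Y_u))\vd u\) is a local martingale, and then concludes by citing \cite[Theorem~B.3]{CU22} and \cite[Theorem~75, p.~131]{freedman}; along the way it converts local times using \cite[Exercise~VI.1.23]{RevYor}, i.e. \(L^{\s(x)}_t(\s(\widehat Y))=\s'_+(x)L^x_t(\widehat Y)\), which at \(x=\xi\) is exactly your relation \(L^0_t(U)=(1-\kappa)L^\xi_t(\widehat Y)\). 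You instead verify the occupation-time identity \(\int_0^t f(U_s)\vd s=\int f(y)L^y_t(U)\,\m^U(\rd y)\) directly, splitting time at the sticky point, and you derive the local-time conversion self-containedly by comparing the canonical decompositions of \(|U|\) obtained from Tanaka applied to \(U\) and from It\^o--Tanaka applied to \(h(x)=|x-\xi|v_\kappa(x)\) (your arithmetic \(-\kappa\cdot\tfrac{2\kappa-1}{2\kappa}+\tfrac12=1-\kappa\) checks out). Your approach is more elementary and avoids the generator-type machinery, at the price of one point you should make explicit: that the occupation-time identity \emph{characterizes} the speed measure of a regular diffusion on natural scale. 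This is true and standard --- it holds for the true speed measure by \cite[Theorem~136, p.~160]{freedman} (used in this normalization in Section~3), and uniqueness follows from strict positivity of \(L^y_t(U)\) for \(y\) in the interior of the range --- but as stated it is an unproved appeal; either cite it or add the two-line identification argument. Both proofs quote weak existence, uniqueness in law and the diffusion property from \cite{WT21}, so the appendix content genuinely reduces to the scale/speed computation in each case.
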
 
	
	\begin{proof}
		The existence and uniqueness parts are known and can for example be deduced from \cite[Theorem~2.1]{WT21}. From the construction (which is based on a homeomorphic change of space and an It\^o--McKean type change of time), it also follows that \(\widehat{Y}\) is a general diffusion. 
		In the following, we provide the computations for scale and speed.  
		Let us start with the scale function. 
		An application of the generalized It\^o formula (\cite[Theorem~IV.45.1]{RogWilV2}), and using the fact that \(\rd L^\xi_t (\widehat{Y})\) is supported on \(\{t \colon \widehat{Y}_t = \xi\}\) (\cite[Proposition~VI.1.3]{RevYor}), yields that 
		\begin{align*}
			\rd\s (\widehat{Y}_t) &= v_\kappa (\widehat{Y}_t) \, \rd \widehat{Y}_t + \frac{1 - 2 \kappa}{2}\, \rd L^\xi_t (\widehat{Y}) 
			\\&= v_\kappa (\widehat{Y}_t) \, \1_{\{\widehat{Y}_t \not = \xi \}} \, \rd W_t + v_\kappa (\xi) \, \frac{2 \kappa - 1}{2 \kappa} \vd L^\xi_t (\widehat{Y}) + \frac{1 - 2 \kappa}{2}\, \vd L^\xi_t (\widehat{Y}) 
			\\&= v_\kappa (\widehat{Y}_t) \, \1_{\{\widehat{Y}_t \not = \xi \}} \, \rd W_t. \phantom \int 
		\end{align*} 
		As a consequence, by \cite[Proposition~VII.3.5]{RevYor}, \(\s\) is a scale function for \(\widehat{Y}\). 
		
		Next, we prove that \(\m\) is the corresponding speed measure. By virtue of \cite[Exercise~VII.3.18]{RevYor}, it suffices to prove that \(\widehat\m \triangleq \m \circ \s^{-1}\) is the speed measure of \(\s (\widehat{Y})\), which is a diffusion on natural scale. 
		We use a martingale problem argument. Take a function  \(f \in C_b (\bR; \bR)\) such that \(f'_+\) exists, is finite, right-continuous, locally of bounded variation and that \(\vd f'_+ = 2g\vd \widehat\m\) for some \(g \in C_b (\bR; \bR)\).
        Using the generalized It\^o formula
        in the first equality below,
        \cite[Exercise~VI.1.23]{RevYor}
        in the third equality,
        the occupation time formula
        in the fourth equality
        and \eqref{eq:120325a1} and~\eqref{eq:120325a2} in the fifth equality,
        we compute that 
		\begin{align*}
			\rd f (\s (\widehat{Y}_t)) &= f'_- (\s (\widehat{Y}_t)) \vd \s (\widehat{Y}_t) + \frac{1}{2} \int_\bR \rd L^x_t (\s (\widehat{Y})) \vd f'_+ (x)
			\\&= f'_- (\s (\widehat{Y}_t)) \vd \s (\widehat{Y}_t) + \int_\bR \vd L^{\s (x)}_t (\s (\widehat{Y})) g (\s (x))\, \m (\rd x) 
			\\&= f'_- (\s (\widehat{Y}_t)) \vd \s (\widehat{Y}_t) + \int_\bR  \s_+' (x) \vd L^x_t (\widehat{Y}) g (\s (x))\, \m (\rd x) 	
			\\&= f'_- (\s (\widehat{Y}_t)) \vd \s (\widehat{Y}_t) + \frac{\s_+' (\widehat{Y}_t) g (\s (\widehat{Y}_t))}{v_\kappa (\widehat{Y}_t)} \vd \langle \widehat{Y}, \widehat{Y}\rangle_t + c\, \s_+' (\xi) g(\s (\xi)) \vd L^\xi_t (\widehat{Y}) 
			\\&= f'_- (\s (\widehat{Y}_t)) \vd \s (\widehat{Y}_t) + \frac{\s_+' (\widehat{Y}_t) g (\s (\widehat{Y}_t))}{v_\kappa (\widehat{Y}_t)} \1_{\{\widehat{Y}_t \not = \xi\}} \, \rd t+ \frac{c\, \s_+' (\xi) \,g(\s (\xi))}{c\, (1 - \kappa)} \1_{\{\widehat{Y}_t = \xi\}} \, \rd t
			\\&=  f'_- (\s (\widehat{Y}_t)) \vd \s (\widehat{Y}_t) + g (\s (\widehat{Y}_t)) \1_{\{\widehat{Y}_t \not = \xi\}} \, \vd t+ g(\s (\xi)) \1_{\{\widehat{Y}_t = \xi\}} \vd t\phantom \int
			\\&= f'_- (\s (\widehat{Y}_t)) \vd \s (\widehat{Y}_t) + g (\s (\widehat{Y}_t)) \, \rd t . \phantom \int
		\end{align*}
        Therefore, the process
        $$
        f(\s(\widehat Y))-f(\s(x_0))-\int_0^\cdot
        g(\s(\widehat Y_u))\vd u
        $$
        is a local martingale.
	Now it follows from \cite[Theorem~B.3]{CU22} and \cite[Theorem~75, p. 131]{freedman} (or see \cite[Lemma~B.4]{CU22} for a restatement) that \(\s (\widehat{Y})\) has the speed measure \(\widehat\m\). The proof is complete.
	\end{proof}

		\bibliography{scigenbibfile}

\begin{thebibliography}{10}

\bibitem{anagnostakis2024pricing}
A.~Anagnostakis.
\newblock Pricing and hedging for a sticky diffusion.
\newblock To appear in {\em J. Appl. Probab.}, {arXiv}:2311.17011 [q-fin.{MF}],
  2023.

\bibitem{bogachev}
V.~I. Bogachev.
\newblock {\em Measure theory. {Vol}. {I} and {II}}.
\newblock Berlin: Springer, 2007.

\bibitem{breiman}
L.~Breiman.
\newblock {\em Probability}, volume~7 of {\em Class. Appl. Math.}
\newblock Philadelphia, PA: SIAM, 1992.

\bibitem{BrugRuf16}
C.~Bruggeman and J.~Ruf.
\newblock A one-dimensional diffusion hits points fast.
\newblock {\em Electron. Commun. Probab.}, 21:7, 2016.
\newblock Id/No 22.

\bibitem{BDH23}
D.~Buckner, K.~Dowd, and H.~Hulley.
\newblock Arbitrage problems with reflected geometric {Brownian} motion.
\newblock {\em Finance Stoch.}, 28(1):1--26, 2024.

\bibitem{Choulli_Stricker_96}
T.~Choulli and C.~Stricker.
\newblock Two applications of the {Galtchouk}-{Kunita}-{Watanabe}
  decomposition.
\newblock In {\em S\'eminaire de probabilit\'es XXX}, pages 12--23. Berlin:
  Springer, 1996.

\bibitem{CinJacProSha80}
E.~Cinlar, J.~Jacod, P.~Protter, and M.~J. Sharpe.
\newblock Semimartingales and {Markov} processes.
\newblock {\em Z. Wahrscheinlichkeitstheor. Verw. Geb.}, 54:161--219, 1980.

\bibitem{Criens2018}
D.~Criens.
\newblock Deterministic criteria for the absence and existence of arbitrage in
  multi-dimensional diffusion markets.
\newblock {\em Int. J. Theor. Appl. Finance}, 21(1):41, 2018.
\newblock Id/No 1850002.

\bibitem{criens20}
D.~Criens.
\newblock No arbitrage in continuous financial markets.
\newblock {\em Math. Financ. Econ.}, 14(3):461--506, 2020.

\bibitem{CU22}
D.~Criens and M.~Urusov.
\newblock Separating times for one-dimensional general diffusions.
\newblock {arXiv}:2211.06042v3 [math.{PR}], 2022.

\bibitem{CU23}
D.~Criens and M.~Urusov.
\newblock Criteria for the absence of arbitrage in one-dimensional general
  diffusion markets.
\newblock To appear in {\em Finance Stoch.}, {arXiv}:2306.11470 [q-fin.{MF}],
  2023.

\bibitem{CU24}
D.~Criens and M.~Urusov.
\newblock On the representation property for 1d general diffusion
  semimartingales.
\newblock {\em Theory Probab. Appl.}, 69(4):579--591, 2025.

\bibitem{Delbaen1994}
F.~Delbaen and W.~Schachermayer.
\newblock A general version of the fundamental theorem of asset pricing.
\newblock {\em Math. Ann.}, 300(3):463--520, 1994.

\bibitem{Delbaen1995}
F.~Delbaen and W.~Schachermayer.
\newblock The existence of absolutely continuous local martingale measures.
\newblock {\em Ann. Appl. Probab.}, 5(4):926--945, 1995.

\bibitem{DelbaenShirakawa2002}
F.~Delbaen and H.~Shirakawa.
\newblock No arbitrage condition for positive diffusion price processes.
\newblock {\em Asia-Pac. Financ. Mark.}, 9(3-4):159--168, 2002.

\bibitem{Fontana15}
C.~Fontana.
\newblock Weak and strong no-arbitrage conditions for continuous financial
  markets.
\newblock {\em Int. J. Theor. Appl. Finance}, 18(1):34, 2015.
\newblock Id/No 1550005.

\bibitem{freedman}
D.~Freedman.
\newblock {\em Brownian Motion and Diffusion}.
\newblock Springer New York Heidelberg Berlin, 1983.

\bibitem{HS_2010}
H.~Hulley and M.~Schweizer.
\newblock {{\(M^6\)}} -- on minimal market models and minimal martingale
  measures.
\newblock In {\em Contemporary quantitative finance. Essays in honour of
  Eckhard Platen. Papers based on the presentations at the international
  conference ``Quantitative methods in finance'', Sydney, Australia, December
  2009.}, pages 35--51. Berlin: Springer, 2010.

\bibitem{ItoMcKean96}
K.~{It\^o} and H.~P. jun. {McKean}.
\newblock {\em {Diffusion processes and their sample paths.}}
\newblock Berlin: Springer-Verlag, 1996.

\bibitem{jacod79}
J.~Jacod.
\newblock {\em Calcul stochastique et probl{\`e}mes de martingales}, volume 714
  of {\em Lect. Notes Math.}
\newblock Springer, Cham, 1979.

\bibitem{Kabanov_97}
Y.~Kabanov.
\newblock On the {FTAP} of {Kreps}-{Delbaen}-{Schachermayer}.
\newblock In {\em Statistics and control of stochastic processes. The Liptser
  Festschrift. Papers from the Steklov seminar held in Moscow, Russia,
  1995-1996}, pages 191--203. Singapore: World Scientific, 1997.

\bibitem{KabanovKardarasSong2016}
Y.~Kabanov, C.~Kardaras, and S.~Song.
\newblock No arbitrage of the first kind and local martingale num\'eraires.
\newblock {\em Finance Stoch.}, 20(4):1097--1108, 2016.

\bibitem{Kal21}
O.~Kallenberg.
\newblock {\em Foundations of modern probability. {In} 2 volumes}, volume~99 of
  {\em Probab. Theory Stoch. Model.}
\newblock Cham: Springer, 3rd revised and expanded edition, 2021.

\bibitem{KaraKard07}
I.~Karatzas and C.~Kardaras.
\newblock The num{\'e}raire portfolio in semimartingale financial models.
\newblock {\em Finance Stoch.}, 11(4):447--493, 2007.

\bibitem{KarShr}
I.~Karatzas and S.~E. Shreve.
\newblock {\em Brownian motion and stochastic calculus}, volume 113 of {\em
  Graduate Texts in Mathematics}.
\newblock Springer-Verlag, New York, 2nd edition, 1991.

\bibitem{MelWan21}
A.~Melnikov and H.~Wan.
\newblock On modifications of the {Bachelier} model.
\newblock {\em Ann. Finance}, 17(2):187--214, 2021.

\bibitem{MU_12_ECP}
A.~Mijatovi{\'c} and M.~Urusov.
\newblock Convergence of integral functionals of one-dimensional diffusions.
\newblock {\em Electron. Commun. Probab.}, 17:13, 2012.
\newblock Id/No 61.

\bibitem{MU12b}
A.~Mijatovi{\'c} and M.~Urusov.
\newblock Deterministic criteria for the absence of arbitrage
  in~one-dimensional diffusion models.
\newblock {\em Finance Stoch.}, 16(2):225--247, 2012.

\bibitem{MU12_PTRF}
A.~Mijatovi{\'c} and M.~Urusov.
\newblock On the martingale property of certain local martingales.
\newblock {\em Probab. Theory Relat. Fields}, 152(1-2):1--30, 2012.

\bibitem{RevYor}
D.~Revuz and M.~Yor.
\newblock {\em Continuous martingales and {B}rownian motion}, volume 293 of
  {\em Grundlehren der Mathematischen Wissenschaften}.
\newblock Springer-Verlag, Berlin, 3rd edition, 1999.

\bibitem{RogWilV2}
L.~C.~G. Rogers and D.~Williams.
\newblock {\em Diffusions, {Markov} processes, and martingales. {Vol}. 2:
  {It{\^o}} calculus.}
\newblock Cambridge: Cambridge University Press, 2nd edition, 2000.

\bibitem{rossello12}
D.~Rossello.
\newblock Arbitrage in skew {Brownian} motion models.
\newblock {\em Insur. Math. Econ.}, 50(1):50--56, 2012.

\bibitem{shir}
A.~N. Shiryaev.
\newblock {\em Essentials of stochastic finance}, volume~3 of {\em Adv. Ser.
  Stat. Sci. Appl. Probab.}
\newblock Singapore: World Scientific, 1999.

\bibitem{S_2005}
E.~Strasser.
\newblock Characterization of arbitrage-free markets.
\newblock {\em Ann. Appl. Probab.}, 15(1A):116--124, 2005.

\bibitem{WT21}
W.~Touhami.
\newblock On skew sticky {Brownian} motion.
\newblock {\em Stat. Probab. Lett.}, 173:9, 2021.
\newblock Id/No 109086.

\bibitem{walter}
W.~Walter.
\newblock {\em Ordinary differential equations}, volume 182 of {\em Grad. Texts
  Math.}
\newblock New York NY: Springer-Verlag, 1998.

\end{thebibliography}
		\bibliographystyle{abbrv}

\end{document}